\setlist{noitemsep,leftmargin=\parindent,topsep=2pt}
\newcommand{\Gn}{\mathbb{G}_n}
\newcommand{\Ep}{\mathbb{E}}
\newcommand{\E}{\mathbb{E}}
\newcommand{\G}{\mathbb{G}}
\renewcommand{\Pr}{\ensuremath{\mathrm{Pr}}}
\theoremstyle{plain}
\newtheorem{theorem}{Theorem}[]
\newtheorem{lemma}[theorem]{Lemma}
\newtheorem{definition}[]{Definition}
\newtheorem{condition}[]{Condition}
\newtheorem{remark}[]{Remark}
\newtheorem{assumption}[]{ASSUMPTION}
\newcounter{example}[section]
\newenvironment{example}[1][]{\refstepcounter{example}\par\medskip
   \noindent \textbf{Example~\theexample. #1} \rmfamily}{\medskip}
\newcommand{\ba}{\begin{array}}
\newcommand{\ea}{\end{array}}
\newcommand{\bs}{\begin{align}\begin{split}\nonumber}
\newcommand{\bsnumber}{\begin{align}\begin{split}}
\newcommand{\es}{\end{split}\end{align}}
\renewcommand{\qed}{\hfill{\tiny \ensuremath{\blacksquare} }}%
\renewcommand{\Pr}{{\mathrm{P}}}
\newcommand{\En}{{\mathbb{E}_n}}
\newcommand{\EN}{{\mathbb{E}_{N}}}
\newcommand{\GN}{\mathbb{G}_N}
\renewcommand{\qed}{\hfill {\tiny {\ensuremath{\blacksquare}}}}
\DeclareMathOperator{\eig}{eig}
\theoremstyle{definition}
\renewcommand{\Pr}{{\mathrm{P}}}
\renewcommand{\Pr}{{\mathrm{P}}}
\renewcommand{\leq}{\leqslant}
\renewcommand{\geq}{\geqslant}
\renewcommand{\qed}{\hfill{\tiny \ensuremath{\blacksquare} }}%
\renewcommand{\Pr}{{\mathrm{P}}}
\theoremstyle{definition}
\begin{document}
\title{Machine Learning for Dynamic Discrete Choice}
\author{
	Vira Semenova \thanks{I am deeply grateful to my advisors Victor Chernozhukov, Whitney Newey, and Anna Mikusheva for their guidance and encouragement. I am thankful to Alberto Abadie,  Chris Ackerman, Sydnee Caldwell, Denis Chetverikov, Ben Deaner, Mert Demirer, Jerry Hausman, Peter Hull, Tetsuya Kaji,  Kevin Li, Elena Manresa, Rachael Meager, Denis Nekipelov,  Cory Smith,  Sophie Sun,  Roman Zarate for helpful comments.  }
}

\date{\today}
\maketitle

 \begin{abstract}
  
Dynamic discrete choice models often discretize the state vector and restrict its dimension in order to achieve valid inference. I propose a novel two-stage estimator for the set-identified structural parameter that incorporates a high-dimensional state space into the dynamic model of imperfect competition. In the first stage, I estimate the state variable's law of motion and the equilibrium policy function using machine learning tools. In the second stage, I plug the first-stage estimates into a moment inequality and solve for the structural parameter. The moment function is presented as the sum of two components, where the first one expresses the equilibrium assumption and the second one is a bias correction term that makes the sum insensitive (i.e., Neyman-orthogonal) to first-stage bias. The proposed estimator uniformly converges at the root-$N$ rate and I use it to construct confidence regions.  The results developed here can be used to incorporate high-dimensional state space into classic dynamic discrete choice models, for example, those considered in \cite{Rust}, \cite{BBL}, and \cite{Scott:2013}.
 
\end{abstract}
\section{Introduction}

In empirical work on dynamic models, economists often make specification choices - for example, discretize state space or select covariates for flow utility - in order to achieve computational tractability and precise estimates (e.g. \cite{PakesShankman:1984}, \cite{Pakes:1986}, \cite{Rust}, \cite{Ryan:2012}). Typically, economists have little intuition about which covariates to select or how to discretize a continuous state variable (\cite{PakesLanjouw:1998}). Unfortunately,  counterfactual predictions of dynamic models are sensitive to specification choices and  are difficult to interpret when a model is misspecified. As discussed in \cite{athey2017science}, there has been recent interest in data-driven model selection based on modern machine learning tools. Moreover, (\cite{orthogStructural}, \cite{doubleml2016}, \cite{LRSP}, \cite{AtheyWager}) have shown how to leverage these tools into high-quality estimates of causal parameters. Because dynamic models are more challenging to analyze, the model selection in dynamic models   remains an open question.

This paper estimates a dynamic model of imperfect competition with a  high-dimensional state space.  Consider the bus engine replacement model from \cite{Rust} as a special case. An agent decides whether to replace the bus engine  in each period.  The agent incurs a  fixed cost in the case of replacement  and a cost  proportional to the current mileage in the case of maintenance.  One can imagine that the future bus mileage depends on a vector of  observed  exogenous characteristics, such as current traffic and weather conditions, encoded in a high-dimensional vector. While these characteristics do not enter into the per-period utility of the owner, they affect the mileage's law of motion in the case of maintenance decision, and hence are taken into account in the agent's optimal renewal policy. Therefore, the expected value of  bus ownership depends on a high-dimensional state  vector consisting of the current mileage and exogenous characteristics.   I am interested in the identified set of  the  possible values of the cost parameters that are rationalized by the agent's optimal behavior. In addition to the model in \cite{Rust},  the methods developed here apply to  a  broad variety of dynamic models: for example, those considered in \cite{BBL} or \cite{Scott:2013}, etc. that point$-$ or partially identify their parameters under various  assumptions about the agent's optimal behavior. 

The main difficulty of this approach is estimating the value function.  The plug-in (naive) approach of \cite{BBL} consists of two steps. In the first step, one estimates the state variable's  law of motion and the equilibrium policy function,  effectively recovering agents' equilibrium beliefs. In the second step,  one estimates the value function  by drawing a sequence of states and actions from  the  conditional distributions estimated in the first  step and averaging over multiple simulation draws. If the state variable is high-dimensional, we must regularize the  estimate of the first-stage parameter  in order to achieve consistency in high dimension. An inherent cost of these methods is  bias that converges slower than parametric rate. As a result, first-stage bias carries over into the second stage, resulting in  a low-quality estimate of the identified set.

The major challenge of this paper is to overcome transmitting the bias from the first to the second stage. A basic idea, proposed in a point-identified case, is to make the moment equation insensitive, or, formally, Neyman-orthogonal, to the biased estimation of the first-stage parameter (\cite{Neyman:1959},\cite{doubleml2016}).  The second idea consists in the use  of different samples to estimate the equilibrium beliefs at the first stage and to compute the sample average of the bias correction term at  the second stage. Using different samples in the first and the second stages allows me to employ modern machine learning methods to estimate the first stage parameters.

The first contribution of this paper is to  derive a Neyman-orthogonal moment  for the value function.  This equation is presented as the sum of two terms, where the first  term is the value function itself and the second term is a  bias correction term that makes the sum insensitive (i.e., Neyman-orthogonal) to the first-stage bias. To derive the bias correction term,
 I  characterize the value function as a solution to the recursive (Bellman) equation, which equates the expected value at the current state to the sum of expected immediate payoff and the  expected discounted future payoff.  This equation can be viewed as a semiparametric moment equation, where the parametric component gives the  value function's moment (i.e., weighted average)  and   the  nuisance parameter consists of the first-stage parameters appearing in the value function (i.e., conditional choice probability and state variable's law of motion). The first-stage parameter appears both inside and outside the value function. For example,   the conditional probability of a given choice appears as a weight on immediate and discounted future payoffs corresponding to that choice.  Applying the implicit function theorem to this equation, I derive  the bias correction term and show how to approximate it by simulation. This derivation is novel: this is the first result in the literature that derives the bias correction term for a moment function that is not available in the closed form.

The  second contribution of this paper is to extend the general theory of  moment inequalities proposed by \cite{CHT} to allow for moment functions that depend on a first-stage nuisance parameter that can be high-dimensional (e.g., conditional choice probability). I characterize the identified set as the minimizer of the criterion function that penalizes the incorrect sign of the moment inequality (e.g., the sign that contradicts the optimality assumption). I show that, if the moment function is insensitive with respect to the biased estimation of its nuisance parameter  at each point of the space of the structural parameter, plugging in the first-stage estimate of the nuisance parameter into the moment function delivers a high-quality sample criterion function. In particular, the estimator of the identified set obtained by  inverting  the  sample  criterion function converges at the same rate as if the true value of the first-stage nuisance parameter were known. Furthermore,  inferential statistics based on the estimated moment function have a non-degenerate large sample distribution and are used to construct confidence regions for the identified set by subsampling.

This paper leaves a number of open questions. First, this article assumes that the first-stage nuisance parameter is identified. This assumption does not hold for every application (e.g., \cite{CilibertoTamer}). Second,  the high-dimensional state space introduces a wealth of feasible suboptimal Markov policies to choose from for the construction of moment inequalities. In this paper I take the set of chosen alternatives as given, leaving the optimal choice of these alternatives for future research.

The structure of the paper is as follows.  Section \ref{sec2:setup} gives a brief overview of the results.
Section \ref{sec2:dynamic} gives the low-level sufficient conditions for the dynamic discrete choice model in \cite{BBL}. Section \ref{sec2:theory} presents an asymptotic theory for the identified sets defined by the  semiparametric moment inequalities.

%
%

\subsection{Literature review}

This paper is built on three lines of work: estimation and inference in partially identified models and Neyman-orthogonal semiparametric estimation.

The first line of research,  see e.g.   \cite{Rosen:2006}, \cite{CHT}, \cite{Romano:2006} develops a framework for the estimation and inference of  identified sets that are partially identified by moment  inequalities.  Extending this framework, \cite{Kaido:2014} allow the moment function to depend on an identified low-dimensional first-stage parameter in addition to the target. Furthermore, I allow the first-stage parameter to be a  high-dimensional vector or a highly complex function and estimate it by modern machine learning methods.

Within the first line of my research, my application is most connected to the estimation of dynamic models of imperfect competition. Specifically, I build on  \cite{BBL} who identifies the structural parameter as a solution to a set of moment inequalities that embody the  assumptions about the agent's optimal behavior. This paper proposed a two-stage algorithm to estimate the parameter, where in the first-stage one estimates the state variables's law of motion and equilibrium policy function, and then plugs them into a moment equation derived from the equilibrium assumption. However, to achieve valid inference,  \cite{BBL} imposed parametric restrictions in the first stage. Extending this algorithm, I drop these restrictions and estimate first-stage parameters by machine learning methods.

The second line of research (\cite{Hasminskii:1981},\cite{Andrews:1994}, \cite{Newey1994}, \cite{vdv}) is concerned with obtaining a root-$N$ consistent and asymptotically normal estimate for a low-dimensional target parameter in  the  presence of a nuisance parameter. In this literature, a two-stage statistical procedure is  insensitive, or, formally, Neyman-orthogonal,  to the estimation error of the first-stage parameter (\cite{Neyman:1959}). Extending the orthogonality idea from a parametric to semiparametric setup was done in \cite{Newey1994}, \cite{RRZ},  \cite{Robins}. Combining  Neyman-orthogonality and  sample splitting, \cite{doubleml2016} and  \cite{LRSP} incorporated modern machine learning methods to estimate low-dimensional target parameters defined by semiparametric moment equations. Subsequently, the idea has been extended to the case of high-dimensional target parameter in \cite{CGST} and  \cite{DenisVas}. This paper translates the idea of Neyman-orthogonality from point- to set-identified case.

Within the second line research, my application is most connected to the estimation of dynamic discrete choice models under point-identification (\cite{BHKN}, \cite{BCHN}, \cite{Arcidiacono:2013},  \cite{LRSP}). Specifically,  \cite{LRSP} introduces high-dimensional state space into a  dynamic discrete choice model whose choice set contains a renewal choice and derives a Neyman-orthogonal moment equation for the structural parameter in that model.  In this paper, I address the cases that do not have renewal choice property and derive  the Neyman-orthogonal moment equation for the value function directly.  This result is applicable to both point- and set-identified cases with discrete and continuous choice sets.



\section{Set-Up and Motivation}
\label{sec2:setup}

I am interested in an identified set defined by moment inequalities
\begin{align}
\label{eq:theta0}
    \Theta_I:= \{ \theta: \E m(D,\theta,\eta_0) \leq 0\},
\end{align}
where $D$ is the data vector distributed as $P_D$, $\theta$ is the parameter of interest, and $\eta$ is an identified yet unknown parameter of the distribution $P_D$ whose true value is $\eta_0$.  For example, in the bus engine replacement model of  (\cite{Rust}) the parameter of interest, $\theta$, is a vector of operational and replacement costs, the data vector $D$ contains  bus mileage and other observed bus characteristics,  $\eta$ contains mileage's law of motion and the conditional probabilities of bus replacement, and the inequality restriction comes from the assumption  that agent behaves optimally.  I allow the state variable, $w$ , to be high dimensional and estimate the parameter $\eta$ by modern machine learning methods.

The examples below demonstrate how a high-dimensional state $w$ may appear in the dynamic discrete choice model.

\begin{example}[Engine Replacement Model from \cite{Rust} with a High-Dimensional State Variable]
\label{ex:trans}
A single agent  makes a binary decision $a \in \mathcal{A} = \{0,1\}$ whether to replace a bus engine in each period $t \in \{1,2,\dots, \infty\}$. His per-period utility function is
\begin{align}
\label{eq:utper}
    \pi(a,s,\epsilon) = \begin{cases} -R +  \epsilon(0),\quad a=0,\\
    -\mu \cdot s  + \epsilon(1),\quad a=1, \end{cases}
\end{align}
where $s \in \mathcal{R}$ is the bus mileage, $R$ is replacement costs, $\mu s$ is operational cost,  $a$ is the decision of the agent, and $\epsilon = (\epsilon(0),\epsilon(1))$ is a vector of private shocks associated with each decision. The state variable $w$ consists of the mileage $s$ and  additional high-dimensional vector of exogenous variables $x$ (e.g., engine manufacturer characteristics) that I assume  do not change with time. After the replacement decision $(a=0)$ the mileage $s_{next}$ resets to $1$ with probability one. After  the maintenance decision $(a=1)$,  the mileage $s_{next}$ follows a first-order Markov process
\begin{align}
\label{eq:trans:setup}
    s_{next} &= \rho_0(w) + e, \quad e \sim N(0,1),
\end{align}
where $\rho_0(w)$ is the conditional expectation function of the mileage tomorrow $s_{next}$ given the state today $w$ and $e \sim N(0,1)$ is an independent $N(0,1)$ shock.  The target parameter $\theta = (R,\mu)$ consists of the replacement and operational cost parameters. The observed data vector consists of the current state, action, and future state (i.e, $D=(w,a,w_{next})$). Under the assumptions discussed below, the unknown yet identified high-dimensional parameters consist of the conditional choice  probability $\gamma(w)=\Pr(a=1|w)$ and the transition function $\rho(w)$.

\end{example}


\begin{example}[Entry Game with a Long-Lived and a Short-Lived Player]
\label{ex:Pug}

In each period $t \in \{1,2,\dots, \infty\}$ Apple decides whether to issue a new model of a phone $(a_{Apple}=0)$ or keep the existing one $(a_{Apple}=1)$. In addition to Apple, a short-lived potential entrant (Player 2) decides whether to issue a fake $(a_{P}=1)$ or not $(a_{P}=0)$. Each period Apple faces a new copy of player 2.  Player 2's  actions do not influence the motion of the state, and he has no dynamic incentives. In each period both players observe a state vector $w = (s,x)$ that consists of the age of the current make $s$ and the vector $x$ of short-lived player's characteristics (e.g., country, information about intellectual property rights protection) that does not change with time (i.e, $x_t=x_0$).

In each period both players receive a privately observed shock. Apple's utility function is given by
\begin{align}
\label{eq:utpug}
    \pi(a,w,\epsilon) = \begin{cases}
    - R + \delta_1 a_{P} + \epsilon(0), \quad a_{Apple}=0,\\
    -\mu \cdot s + \delta_2 a_{P} + \epsilon(1), \quad a_{Apple}=1,
    \end{cases}
\end{align}
where $R$ is the fixed cost of replacing the current  model with a new one, $-\mu \cdot s$ is the profit from the current   make that decays with age, and $\epsilon = (\epsilon(0),\epsilon(1))$ is a vector of Apple's shocks associated with each decision. After the replacement decision $(a_{Apple}=1)$ the age $s_{next}$ resets to $1$ with probability one. After the maintenance decision $(a_{Apple}=2)$ the age $s_{next}$ increases by $1$ with probability $1$
$$s_{next} = s +1.$$

The target parameter $\theta = (R,\mu,\delta_1-\delta_0)$ consists of the cost parameters and the difference between the interaction parameters $\delta_1-\delta_0$. The observed data vector $D=(w,a,w_{next})$ consists of the current state, $w$, action profile $a = (a_{Apple},a_{P})$, and the future state, $w_{next}$. Under the assumptions discussed below, the unknown yet identified high-dimensional parameters consist of the conditional choice  probability of both players: $\gamma_A(w)=\Pr(a_{Apple}=1|w)$  and $\gamma_P(w):=\Pr(a_{P}=1|w)$.

\end{example}

Consider the setting of Example \ref{ex:trans}. I assume that the  agent follows a Markov policy $\sigma(w,\epsilon)$ that maps the current state $w \in \mathcal{W} \subset \mathcal{R}^{d_w}$ and the shock vector $\epsilon \in \mathcal{R}^2$ into the action space $\mathcal{A} = \{0,1\}$. The value function $  V(w;\theta;\sigma)$ of a Markov policy $\sigma$ is given by
\begin{align*}
     V(w;\theta;\sigma) = \E [\sum_{t =0}^{\infty} \beta^t \pi(\sigma(w_t,\epsilon_t),w_t,\epsilon_t)|w],
\end{align*}
where $\beta <1$ is a discount factor. This function  can also be written recursively:
\begin{align}
\label{eq:recursive}
     V(w;\theta;\sigma) = \E_{\epsilon} \big[  \pi(\sigma(w,\epsilon),w,\epsilon(\sigma(w,\epsilon))) + \beta \E [ V(w_{next};\theta;\sigma )|w,\sigma(w,\epsilon)] \big],
\end{align}
where the  first and  second summands show the expected current profit and the future expected discounted value, respectively.   Define the choice-specific value function as
\begin{align*}
    v (a,w)&:= -R_0 (1-a) + (-\mu_0 s ) a  + \beta \E [V(w_{next};\theta_0;\sigma^{*})|w,a=0](1-a)\\
     &+ \beta \E [V(w_{next};\theta_0;\sigma^{*})|w,a=1]  a, \quad a \in \mathcal{A},
\end{align*}
where the current deterministic utility is evaluated at the action $a$ and the future discounted value is evaluated at the optimal strategy $\sigma^{*}$, conditional on the current action $a$.  The symbols $ (R_0,\mu_0)$ stand for the true values of the cost parameters. The choice $a$ is optimal if and only if its total utility is greater than  or equal to the utility of any other choice $a' \in \mathcal{A}$:
\begin{align*}
     v (a,w) + \epsilon(a) \geq v (a',w)+ \epsilon(a'), a' \in \mathcal{A}.
\end{align*}
Then the optimal Markov policy $\sigma^{*}(w,\epsilon)$ has a cutoff form:
\begin{align}
\label{eq:opt:sigma}
    \sigma^{*}(w,\epsilon) &= \begin{cases}1, \quad \epsilon(1) - \epsilon(0) \geq -v(1,w) +v(0,w)  \\
    0, \quad \text{otherwise}.
    \end{cases}
\end{align}
To identify the difference
$(v(1,w) - v(0,w))$ I  make the following standard assumption (e.g., \cite{HotzMiller}, \cite{Scott:2013}) that I will maintain throughout the paper. In general case, the model will involve several players $K$. Denote their choice sets by $\mathcal{A}_k, k\in \{1,2,\dots,K\}$.
\begin{assumption}[Independent logit errors]
\label{ass:logit}
The components of the private shock vector
$\epsilon = \{ (\epsilon_k(j))_{j \in \mathcal{A}_k}\}_{k=1}^K \in \mathcal{E}$
  are identically and independently distributed with a type $1$ extreme value distribution whose distribution  function is equal to $F(t) =  \exp(-\exp(-t))$.
\end{assumption}
As discussed in \cite{HotzMiller}, the vector of differences of the choice-specific value functions can be expressed as
\begin{align}
\label{eq:HotzMiller:setup}
    v(a_1,w) - v(1,w) &= \log \frac{\Pr (a_1|w)}{\Pr (1|w)} = \log \frac{\gamma_0(w)}{1-\gamma_0(w)},
\end{align}
where $\gamma_0(w)=\Pr (a=1|w)$ is the probability of the decision to maintain the engine conditional on the state $w$ which is identified. For expositional purpose, I  consider a simple suboptimal Markov policy: the choice of the decision based on the coin toss:
\begin{align}
\label{eq:cointoss}
    \E (\sigma(w,\epsilon) = 1|w) =\frac{1}{2}.
\end{align}

Combining \ref{eq:opt:sigma} and \ref{eq:HotzMiller:setup}, I  recognize that the optimal strategy can be viewed as a function of $\gamma$:
$$\sigma^{*}(w,\epsilon) = \sigma^{*}(w,\epsilon,\gamma)$$ Therefore, value function $  V(w;\theta;\sigma^{*})$  
can be viewed as the function of $\gamma$.



Define the moment function $m(w,\theta,\gamma)$ as the difference of the value function evaluated for the suboptimal and the optimal strategies
\begin{align}
\label{eq:ineq:trans}
   m(w,\theta,\gamma):= V(w;\theta;\sigma;\gamma) -V(w;\theta;\sigma^{*};\gamma).
\end{align}
Define the identified set   $\Theta_I$ as
\begin{align*}
    \Theta_I:= \E [m(w,\theta,\gamma_0)  ] \leq 0.
\end{align*}
Because $\sigma^{*}$ is an optimal strategy, the inequality above holds for the true parameter $\theta_0$, and $\Theta_I$ is a valid identified set.

\subsection{Naive Approach to the Estimation of the Identified Set}

The function $m(w,\theta,\eta)$ given in (\ref{eq:ineq:trans}) presents two complications. First, the value function $V(w;\theta;\sigma;\eta)$ depends on the unknown  nuisance parameter $\eta$. Second, even if the value of $\eta$ is given, the value function is not readily available in the closed form and must be approximated by simulation. Here I focus on the first complication as if the moment function were readily available, leaving the description of the simulation algorithm for Section \ref{sec2:dynamic}.

A possible, though naive estimator of the value function can be constructed as follows. Consider an ideal scenario where the researcher knows mileage's law of motion.
Then  the choice probability $\gamma$ is the only unknown nuisance parameter that appears in the value function.  Given an i.i.d sample $(D_i)_{i=1}^N$ from the law $P_D$, it is split  into a main sample $J_2$ and an auxiliary sample $J_1$ of equal size $n =[N/2]$ such that  $J_1 \sqcup J_2=\{1,2,\dots,N\}$. After that, the estimator of the value function $V(w;\theta;\sigma^{*};\gamma) $ is constructed as the sample average:
\begin{align*}
    \widehat{V}(w;\theta;\sigma^{*};\widehat{\gamma}):= \frac{1}{|J_2|} \sum_{i \in J_2} V(w_i;\theta;\sigma^{*};\widehat{\gamma}(w_i)),
\end{align*}
where $\widehat{\gamma}$ is estimated on the auxiliary sample $J_1$. Unfortunately, for some parameter values $\theta \in \Theta$ this estimator has slower than $\sqrt{N}$ convergence:
\begin{align}
\label{eq:slow:chap2}
    \sqrt{N}  |\widehat{V}(w;\theta;\sigma^{*};\widehat{\gamma})  - \E V(w;\theta;\sigma^{*};\gamma_0) | \rightarrow \infty.
\end{align}
Therefore, the estimator $\widehat{\Theta}_I$ of the identified set $\Theta_I$ based on the moment function (\ref{eq:ineq:trans})  has suboptimal convergence rates.

The source of the slow convergence (\ref{eq:slow:chap2}) can be understood through the following decomposition:
\begin{align*}
      \sqrt{N}  (\widehat{V}(w;\theta;\sigma^{*};\widehat{\gamma}) - \widehat{V}(w;\theta;\sigma^{*};\gamma_0) )   &=   \underbrace{\sqrt{N}  ( \E [V(w_i;\theta;\sigma^{*};\gamma_0)] - \frac{1}{|J_2|} \sum_{i \in J_2} V (w_i;\theta;\sigma^{*};\gamma_0)  )}_{\mathbf{a}}\\
      &+ \underbrace{\sqrt{N} \E [V(w_i;\theta;\sigma^{*};\widehat{\gamma}) -V(w_i;\theta;\sigma^{*};\gamma_0)]}_{\mathbf{b}} \\
      &+ \underbrace{\sqrt{N}  (  \frac{1}{|J_2|} \sum_{i \in J_2} V(w_i;\theta;\sigma^{*};\widehat{\gamma})  -  \E V(w_i;\theta;\sigma^{*};\widehat{\gamma})  ) }_{\mathbf{c}}.
      \end{align*}
      The term $\mathbf{a}$ is the centered sample average of the value function $V(w;\theta;\sigma^{*};\gamma_0) $ evaluated at the true value $\gamma_0$ of the choice probability. Due to the sample splitting, the term $\mathbf{c}$ is a centered sample average of conditional on $J_1$ i.i.d random variables and is well-behaved.      The term $\mathbf{b}$ stands for the bias of the value function $V(w;\theta;\sigma^{*}; \widehat{\gamma})  $  coming from the biased estimation of the conditional choice probability $\gamma$. This term is responsible for the slow convergence (\ref{eq:slow:chap2}). 

The divergence of  $\mathbf{b}$ comes from the combination of two facts: biased estimation of the first stage and the transmission of the bias from the first to the second stage. Since the conditional choice probability $\gamma(w)$ is a function of a high-dimensional state vector $w$, bias of its machine learning estimate (e.g., $\ell_1$-regularized logistic regression) 
converges slower than root-$N$.  Because the value function $V(w;\theta;\sigma^{*};\gamma)  $ is sensitive to this bias, it carries over into the second stage.

 \paragraph{Overcoming the Regularization Bias using Orthogonalization.}

 To overcome the translation of the first stage bias  into the second stage I add the bias correction term  to the value function $V(w;\theta;\sigma^{*};\gamma)  $ to make it insensitive with respect to the biased estimation of $\gamma$. The new moment function for the value function $V(w;\theta;\sigma^{*};\gamma)$ takes the form
 \begin{align*}
     g(D;\theta;\sigma^{*};\gamma):= V(w;\theta;\sigma^{*};\gamma) + \frac{1}{1-\beta} \Gamma(w;\theta) (1_{a=1} - \gamma(w)),
 \end{align*}
 where the function $\Gamma(w;\theta)$ for $\theta = (R,\mu)$ is 
 \begin{align}
 \label{eq:big:gamma}
    \Gamma(w;\theta)&:= - \mu s + R + \beta \E  [ V(w_{next};\theta;\sigma^{*};\gamma_0)|w,a=1] \\
    &- \beta \E  [ V(w_{next};\theta;\sigma^{*};\gamma_0)|w,a=0]  - \frac{2}{\gamma(w)} - \log \frac{\gamma(w)}{1-\gamma(w)}. \nonumber
 \end{align}
Because the bias correction term has zero mean
 \begin{align*}
     \E [\frac{1}{1-\beta} \Gamma(w;\theta)   (1_{a=1} - \gamma_0(w))] &=\frac{1}{1-\beta} \E_{w} \Gamma(w;\theta)   \E [(1_{a=1} - \gamma_0(w))|w ] =0, \quad \theta \in \Theta,
 \end{align*}
 the new moment function can replace the old one in the definition of the identified set. Moreover, the new function is  insensitive to the bias estimation of $\gamma$.  As a result, under additional mild regularity conditions,  the regularization bias of the choice probability $\widehat{\gamma}(w)$ does not translate into the bias of the moment function:
 \begin{align*}
    \sqrt{N} | \E g(D,\theta,\widehat{\gamma}) - g(D,\theta,\gamma_0)] |\rightarrow 0.
 \end{align*}

\paragraph{The Role of Sample Splitting in Preventing Overfitting.}
Another key aspect of the proposed analysis is using different samples $J_1$ and $J_2$ for different stages of  estimating  the value function. Had I used the whole sample to estimate the conditional choice probability $\widehat{\gamma}(w)$, the sample average
\begin{align*}
    \frac{1}{N} \sum_{i=1}^N g(D_i, \theta, \widehat{\gamma})  - \E [g(D_i, \theta, \widehat{\gamma})  ]
\end{align*}
would not be a sample average of the i.i.d (or weakly dependent) random observations. The relation between the first stage error $\widehat{\gamma}(w_i) - \gamma_0(w_i)$ and the value of the moment function $ g(D_i, \theta, \gamma_0)$ creates bias, referred to as overfitting bias. To control the overfitting bias in the worst-case scenario:
\begin{align*}
    \E \sup_{\gamma \in \mathcal{G}} |\sum_{i=1}^N (g(D_i, \theta, \gamma)  - g(D_i, \theta, \gamma_0)) - \E [g(D_i, \theta, \gamma) -g(D_i, \theta, \gamma_0) ]|,
\end{align*}
one must impose complexity constraints on the class of functions $\mathcal{G}$ that are used to estimate the conditional choice probability. While some nonparametric estimators designed  for low-dimensional state spaces obey these constraints, some  modern machine learning  estimators designed for high-dimensional state variables do not.  To accommodate machine learning estimators at the first stage, I use different samples. 

\paragraph{Bias Correction Terms for  Examples \ref{ex:trans} and \ref{ex:Pug}}. Suppose a nuisance parameter can be presented as a conditional expectation $\E[U|w]$.  Then bias correction term for conditional choice probability takes the form
 $$ \alpha(D;\theta;\xi) = \Pi (w;\theta) ( U - \E [U|w]),$$
where $\xi$ is an unknown vector-valued function of the state variable $w$. The true value $\xi_0 = \xi_0(\theta)$ consists of the original conditional expectation function $\E [U|w]$ and the function $ \Pi (w;\theta) $
\begin{align}
\label{eq:xi0trans:chap2}
    \xi_0(\theta):= \{ \E [U|w], \Pi(\cdot,\theta) \}.
\end{align}

\begin{remark}[Example \ref{ex:trans}, continued]
\label{rm:trans}
Consider the setup in Example \ref{ex:trans}. The nuisance parameter $\eta = (\gamma,\rho)$ consists of the conditional choice probability $\gamma$ and the transition function $\rho$ defined in (\ref{eq:trans:setup}). The transition function $\rho$ is present in the value function evaluated for the optimal $\sigma^{*}$ and the suboptimal $\sigma$ strategies. The conditional choice probability enters  $V(w;\theta;\sigma^{*};\eta)$ only through the optimal strategy $\sigma^{*}$, described in (\ref{eq:opt:sigma})-(\ref{eq:HotzMiller:setup}). To sum up, the bias correction term for Example \ref{ex:trans} is 
\begin{align}
\label{eq:alpha:rust}
	\alpha(D;\theta;\xi) &= \alpha^{TRANS}_{\sigma}(D;\theta;\xi) - \alpha^{TRANS}_{\sigma^{*}}(D;\theta;\xi) -  \alpha^{CCP}(D;\theta;\gamma),
\end{align}
where e.g., $\alpha^{TRANS}_{\sigma}(D;\theta;\xi)$ is the individual bias correction term for $\rho$ for the case of suboptimal strategy $\sigma$.

As discussed above, the bias correction term $\alpha^{CCP}(D;\theta;\gamma)$ that corrects the bias of the conditional choice probability $\gamma_0$ is 
\begin{align}
\label{eq:alphaccp}
    \alpha^{CCP}(D;\theta;\gamma) &= \frac{1}{1-\beta} \Gamma(w;\theta)  (1_{\{ a=1\}} - \gamma(w)),
\end{align}
where the function $\Gamma(w,\theta)$ is given in (\ref{eq:big:gamma}). The form for the bias correction term for the transition function $\rho$ is the same regardless  whether $\sigma$ is optimal or not.  Let $\tilde{\sigma} \in \{ \sigma, \sigma^{*}\}$ be a Markov policy.  Suppose the state vector $w$ has a stationary distribution.  Then the bias correction term of the value function $V(w;\theta;\sigma;\eta)$ for the transition function $\rho(\cdot)$ is equal to:
\begin{align}
\label{eq:alpharho}
    \alpha^{TRANS}_{\tilde{\sigma}}(D;\theta;\xi):= \frac{\beta}{1-\beta} \E [\frac{d V(x;\theta;\tilde{\sigma};\eta_0)}{dx}|_{x = w_{next}}|w,a=1 ] ( s_{next} - \rho(w)),
\end{align}
where $\xi$ is an unknown vector-valued function of the state variable $w$. Its true value $\xi_0 = \xi_0(\theta)$ consists of the original nuisance parameter $\eta_0$ and the function $\Pi_0(w,\theta)$:
$$\Pi_0(w,\theta):= \E [\frac{d V(x;\theta;\sigma;\eta_0)}{dx}|_{x = w_{next}} |w ],$$
which is equal to the expectation of the derivative of the value function with respect to the state variable evaluated for the future state conditional on the current state $w$. 

The bias correction term for other suboptimal Markov policies is more complicated, but can be derived using the argument of Appendix \ref{appendix:a}.

\end{remark}

\begin{remark}[Example \ref{ex:Pug}, continued]
\label{rm:Pug}
Consider the setup in Example \ref{ex:Pug}. The nuisance parameter $\eta = (\gamma_A, \gamma_P)$ consists of the conditional choice probabilities of Apple and Player 2.  The choice probability $\gamma_A$ enters the value function $V(w;\theta;\sigma^{*};\eta)$ only through the optimal strategy $\sigma^{*}$, described in (\ref{eq:opt:sigma})-(\ref{eq:HotzMiller:setup}). The choice probability  $\gamma_P$ enters the value function  $V(w;\theta;\tilde{\sigma};\eta), \quad \tilde{\sigma} \in \{ \sigma^{*}, \sigma \}$. To sum up, the bias correction term for Example \ref{ex:trans} is equal to:
\begin{align*}
    \alpha(D;\theta;\eta):=  - \alpha^{CCP}_{A}(D;\theta;\gamma) + \alpha^{\sigma}_{P}(D;\theta;\gamma_{P}) -  \alpha^{\sigma^{*}}_{P}(D;\theta;\gamma_{P}),
  \end{align*}
where e.g. $ \alpha^{CCP}_{A}(D;\theta;\gamma) $ is the individual bias correction term for  $\gamma_A$.

Let $\tilde{\sigma} \in \{ \sigma, \sigma^{*}\}$ be a Markov policy.  Suppose the state vector $w$ has a stationary distribution.  Then the bias correction term of the value function $V(w;\theta;\sigma;\eta)$ for the $\gamma_P$ is equal to:
\begin{align*}
    \alpha^{\sigma}_{P}(D;\theta;\gamma_{P}):= \Gamma^{\tilde{\sigma}}_{P}(w;\theta) (1_{\{ a_{P}=1\}}-\gamma_{P}(w)),
\end{align*}
where
\begin{align*}
    \Gamma^{\tilde{\sigma}}_P(w;\theta) = \frac{1}{1-\beta}( \delta_0 + (\delta_1 - \delta_0) \gamma^{\tilde{\sigma}}(w)),
\end{align*}
where $\gamma^{\tilde{\sigma}}(w) = \Pr (\tilde{\sigma}(w,\epsilon) = 1|w)$ is the conditional probability of Apple's decision under the policy $\tilde{\sigma}$.
The bias correction term for  Apple's conditional choice probability $\gamma_A$ is
\begin{align*}
    \alpha^{CCP}_A(D;\theta;\gamma) &= - \mu s + R + (\delta_1 - \delta_0) \gamma_A(w) + \beta \E [V (w_{next};\theta;\sigma^{*};\eta_0) |w,a=1]\\
    &- \beta \E [V (w_{next};\theta;\sigma^{*};\eta_0) |w,a=0].
\end{align*}
To sum up, the bias correction term $\alpha(D;\theta;\eta)$ is equal to:
\begin{align*}
	  \alpha(D;\theta;\eta):=  - \alpha^{CCP}_{A}(D;\theta;\gamma) +   \frac{1}{1-\beta} (\delta_1 - \delta_0) (\gamma_{0}^{\sigma}(w) - \gamma_{A,0}(w)) (1_{\{a_{P} =1\}} -\gamma_P(w)).
\end{align*}

\end{remark}
In some point-identified problems the value function $V(w;\theta;\sigma;\eta)$ is only evaluated at the optimal strategy $\sigma^{*}$ and the true  parameter value $\theta_0$.  Then the bias correction term is evaluated only at the true value $\theta_0$.   In particular, in the Examples \ref{ex:trans} and \ref{ex:Pug} the function $\Gamma(w,\theta_0)$ can be further simplified  as follows:
\begin{align*}
\Gamma(w,\theta_0) &= \underbrace{ - \mu_0 s + R_0 + \beta \E  [ V(w_{next};\theta_0;\sigma^{*};\gamma_0)|w,a=1] - \beta \E  [ V(w_{next};\theta_0;\sigma^{*};\gamma_0)|w,a=0] }_{v(1,w) - v(0,w)}\\
&- \frac{2}{\gamma(w)} - \log \frac{\gamma(w)}{1-\gamma(w)} \\
    &=  \log \frac{\gamma(w)}{1-\gamma(w)} + ( - \frac{2}{\gamma(w)} - \log \frac{\gamma(w)}{1-\gamma(w)} ) \\
    &=  - \frac{2}{\gamma(w)},
\end{align*}
where we have used (\ref{eq:HotzMiller:setup}) in the second line. 

\subsection{Overview of the Asymptotic Results}

I will now introduce  the estimator of the identified set $\Theta_I$, leaving the formal definition to  (\ref{eq:theta0}). Suppose there exists a function $g(D,\theta,\xi)$ that preserves the expectation of the moment function $m(w,\theta,\eta)$
\begin{align*}
    \E g(D,\theta,\xi) = \E m(w,\theta,\eta)
\end{align*}
and is insensitive to the biased estimation of its own nuisance parameter $\xi$ around its true value $\xi_0=\xi_0(\theta)$, where $\xi_0(\theta)$ is an identified vector-valued parameter of the distribution $P_D$ for each $\theta \in \Theta$. In many relevant cases such as Example \ref{ex:trans}, $\xi_0$ contains the original nuisance parameter $\eta$, but may contain more unknown parameters of the distribution $P_D$.

To estimate the identified set $\Theta_I$ I represent it as the minimizer of the criterion function $Q(\theta,\xi_0)$
\begin{align}
\label{eq:loss}
    \Theta_I:=  \arg \min_{\theta \in \Theta} Q(\theta,\xi_0),
\end{align}
where $Q(\theta,\xi_0)$ is 
\begin{align*}
    Q(\theta,\xi_0) &= \| \E g(D,\theta,\xi_0) \|_{+}^2,
\end{align*}
and its sample analog is 
\begin{align}
\label{eq:qn:chap2}
    Q_N(\theta,\widehat{\xi}):= \| \frac{1}{N} \sum_{i=1}^N g(D_i, \theta, \widehat{\xi}_i)\|_{+}^2.
\end{align}

My goal is to  use different samples in the first and the second stages in order to avoid overfitting. Yet, simple sample splitting has a drawback that only one half of the sample is 
used for second-stage estimation, which can lead to loss of efficiency in small samples. In order to use the whole sample for the second stage yet keep the sample splitting idea,
I use cross-fitting procedure described below.
\begin{definition}[Cross-fitting]
 \label{sampling} 	
    \begin{enumerate}
	\item  For a random sample of size $N$, denote a $K$-fold random partition of the sample indices $[N]=\{1,2,...,N\}$ by $(J_k)_{k=1}^\mathcal{K} $, where $\mathcal{K}$ is  the number of partitions and the sample size of each fold is $n = N/\mathcal{K}$. Also for each $k \in [\mathcal{K}] = \{1,2,...,K\}$ define $J_k^c = \{1,2,...,N\} \setminus J_k$.
	\item For each $k \in [\mathcal{K}]$, construct 	an estimator $ \widehat{\xi}( V_{i \in J_k^c})$   of the nuisance parameter value $\xi_0$ using only the data from $J_k^c$.  For any observation $i \in J_k$, define an estimated signal $\widehat{\xi}_i :=  \widehat{\xi}( V_{i \in J_k^c})$.
		\end{enumerate}
\end{definition}

\begin{definition}[Definition of the Set Estimator]
\label{def:setestim}
Let $\widehat{\xi}= \widehat{\xi}(\theta)$ be the first-stage estimator of the nuisance parameter constructed in Definition \ref{sampling}. Let the criterion function $Q_N(\theta,\xi)$ be as in (\ref{eq:qn:chap2}) and $\widehat{c}$ be a positive number. The estimator $\widehat{\Theta}_I$  of the identified set $\Theta_I$ is chosen
as a contour set of the sample criterion function $Q_N(\theta,\xi)$ of level $c$:
\begin{align*}
     \widehat{\Theta}_I:= {\cal C}_N(\widehat{c},\widehat{\xi}):= \{ \theta \in \Theta, \quad N Q_N(\theta,\widehat{\xi}) \leq \widehat{c} \}.
 \end{align*}
\end{definition}
The  contour level $\widehat{c}$, possibly data dependent, is chosen such that $ \widehat{\Theta}_I$ contains the true set $\Theta_I$ with probability approaching one
 \begin{align}
 \label{eq:minc}
     \widehat{c} \geq \sup_{\theta \in \Theta_I} N Q_N(\theta,\widehat{\xi}) \text{ w.p. $\rightarrow$ 1}.
 \end{align}

 I establish convergence and inference properties of the contour set estimator $\widehat{\Theta}_I$.  The first property is formulated in terms of the convergence rate of $\widehat{\Theta}_I$ to $\Theta_I$ is based on the notion of Hausdorff distance $d_H(\widehat{\Theta}_I,\Theta_I)$:
 \begin{align*}
     d_H(\widehat{\Theta}_I,\Theta_I) = \{ \sup_{\theta \in \widehat{\Theta}_I} d_H(\theta,\Theta_I), \sup_{\theta \in \Theta_I} d_H(\widehat{\Theta}_I,\theta)  \}.
 \end{align*}
The set $\widehat{\Theta}_I$ is said to converge to $\Theta_I$ at rate $\epsilon_N$ if  the Hausdorff distance between the sets converges at rate $\epsilon_N$:  \begin{align}
 \label{eq:bestrate}
     d_H(\widehat{\Theta}_I,\Theta_I) &= O_{P} (\epsilon_N).
 \end{align}
Under mild regularity conditions it is possible to achieve the nearly efficient rate $\epsilon_N = O(\sqrt{\frac{ \log N}{N}})$.

To conduct inference, I fix a confidence level $\tau \in (0,1)$. A confidence region $C_N(c_{\tau},\widehat{\xi})$ of level $\tau$ is defined as a contour set $C_N(c_{\tau};\widehat{\xi})$
of level $c_{\tau}$ such that $C_N(c_{\tau};\widehat{\xi})$ contains $\Theta_I$ with probability at least $1-\tau$:
\begin{align}
\label{eq:conf}
    \Pr (\Theta_I \subseteq C_N(c_{\tau};\widehat{\xi}))  \rightarrow 1- \tau, N \rightarrow \infty.
\end{align}
I construct a confidence region $ C_N(\widehat{c}_{\tau};\widehat{\xi})$, where $\widehat{c}_{\tau}$ is a consistent estimator of the $\tau$-quantile, denoted $c_{\tau}$, of the inferential statistic:
\begin{align}
\label{eq:infstat}
    {\cal C}_N:= \sup_{\theta \in \Theta_I} N Q_N(\theta;\xi_0).
\end{align}
The consistent estimator of the $\tau$-quantile is chosen by the subsampling algorithm defined below.
\begin{definition}[Subsampling Algorithm]
\label{def:subsampling}
Partition the sample $(D_i)_{i=1}^N$ into $B_N= o(\sqrt{N})$ subsamples $V_j,\quad j   \in \{1,2, \dots, B_N\}$ of equal size $b:=[N/B_N]$.  Compute the sample criterion function $Q_{j,b}(\theta;\widehat{\xi})=\| \frac{1}{b} \sum_{ i \in V_j} g(D_i, \theta, \widehat{\xi}_i)\|_{+}^2.$  Choose the level $\widehat{c}$ of the order $\widehat{c} \sim \log N$. Report $\widehat{c}_{\tau}$ as the $\tau$-quantile of the sample of statistics
$$\{ \sup_{\theta \in {\cal C}_N(\widehat{c},\widehat{\xi})} b Q_{j,b}(\theta,\widehat{\xi}) , j = 1,2, \dots, B_N\}.$$
\end{definition}
Section \ref{sec2:theory} establishes the asymptotic validity of the set estimator given in Definition \ref{def:setestim} and subsampling algorithm of Definition \ref{def:subsampling}

\section{Dynamic Game of Imperfect Information}
\label{sec2:dynamic}
Consider the dynamic model of strategic interaction from \cite{BBL}. There are $K$ players, denoted by $\{1,2,\dots, K\}$. Each player $k$  makes a  decision $a_k \in \mathcal{A}_k$  from a finite set of discrete alternatives $\mathcal{A}_k$ at time periods $t \in \{0,1,\dots, \infty\}$. In each  period  $t$ the players commonly observe a vector of state variables $w_t \in \mathcal{W} \subset \mathcal{R}^{d_w}$.   Given the state variable $w_t=w$, players choose actions simultaneously. Before choosing his action, each player  $k$ observes a vector of private shocks $(\epsilon_k(j))_{j \in \mathcal{A}_k}$ corresponding to each discrete alternative $j$ in his choice set $\mathcal{A}_k$.   The transition between states follows a conditional probability distribution: $P(\cdot| a,w) $ conditional on the current state $w$ and the action profile $a=(a_1,a_2,\dots,a_K)$.

I focus on the structural parameter describing the utility of the first player. I assume that his per-period utility function is given by
\begin{align}
\label{eq:perperiod}
    \pi(a, w,\epsilon) &=\tilde{\pi} (a,w;\theta_0;\zeta_0) + \epsilon_1(a_1),
\end{align}
where $a=(a_1,a_2,\dots,a_K)$ is the profile of the players' actions, $a_1 \in  {\mathcal{A}_1}$ is the action of the first player,  $w$ is the state variable, and $\theta_0$ is the true value of the structural parameter $\theta$. The per-period utility is presented as the sum of a deterministic component $\tilde{\pi}(a,w;\theta;\zeta)$ and the private shock $\epsilon_1(a_1)$.
I allow $\tilde{\pi}(a,w;\theta;\zeta)$ to depend on identified nuisance parameter  $\zeta$ whose true value is $\zeta_0$.

I assume that each player follows a pure Markov policy. The pure Markov policy for  player one $\sigma_1(w,\epsilon_1):\mathcal{W} \bigtimes \mathcal{E} \rightarrow \mathcal{A}_1$ maps the current state $w$ and the private shock of player one, $\epsilon_1$, into the action space $\mathcal{A}_1$. If the behavior of the players is described by a Markov policy profile $\sigma = (\sigma_1,\sigma_2,\dots,\sigma_K)$, the value function is given by
\begin{align}
\label{eq:cexp}
     V(w;\theta;\sigma):= \E [\sum_{t \geq 0} \beta^t \pi (\sigma(w,\epsilon),w_t,\epsilon_t)|w].
\end{align}

A strategy profile $\sigma^{*}$ is a Markov perfect equilibrium if each player $k$ prefers its strategy $\sigma^{*}_k$ to all alternative Markov strategies as long as the others follow the equilibrium strategy $\sigma^{*}_{-k}$. That is, the value function $ V(w;\theta_0;\sigma^{*};\eta_0)$ of the first player at the strategy $\sigma^{*}$ is weakly larger than the value function of any other strategy profile $\sigma = (\sigma_1,\sigma_{-1}^{*})$:
\begin{align}
\label{eq:ineq:exante}
    V(w;\theta_0;\sigma^{*};\eta_0) \geq  V(w;\theta_0;\sigma;\eta_0) \quad \forall w \quad \forall \sigma,
\end{align}
where $\sigma = (\sigma_1,\sigma_{-1}^{*})$
consists of a feasible suboptimal alternative for player one, $\sigma_1$, and the equilibrium profile for the other players $\sigma_{-1}^{*}$.
I assume that the each observation comes from the same Markov perfect equilibrium $\sigma^{*}$, although I do not provide conditions for the existence of such equilibrium and recognize that there could be many such equilibria.
\begin{assumption}[Equilibrium Selection]
\label{ass:es}
The data are generated by a single Markov perfect equilibrium $\sigma^{*}$.
\end{assumption}

\subsection{Casting problem as a moment inequality with a first-stage nuisance parameter}
Define the choice-specific value function of player one $v(a_1,w)$ as the expected present value conditional on the current state $w$ and the choice $a_1$
\begin{align*}
    v(a_1,w):= \tilde{\pi}(a_1,w;\theta_0,\zeta_0) + \beta \E [V(w_{next};\theta_0;\sigma^{*};\eta_0)|w,(a_1,\sigma^{*}_{-1})].
\end{align*}
Then, following the optimal strategy $\sigma^{*}(w,\epsilon)$, the first player chooses $a_1$  if and only if the total utility of $a_1$ is not smaller than the total utility of any other choice $a_1' \in \mathcal{A}_1$
\begin{align*}
    v(a_1,w) + \epsilon(a_1) \geq v(a_1',w) + \epsilon(a_1').
\end{align*}
Therefore, the optimal strategy of player one is characterized as
\begin{align*}
    \sigma^{*}(w,\epsilon_1) &= \arg \max_{a_1' \in \mathcal{A}_1} \{ v(a_1',w) + \epsilon_1(a_1')  \},
\end{align*}
or, equivalently,
\begin{align*}
    \sigma^{*}(a_1,w) &= \arg \max_{a_1' \in \mathcal{A}_1} \{ v(a_1',w) - v(1,w)  + \epsilon_1(a_1')  \}.
\end{align*}
As discussed in \cite{HotzMiller}, under Assumption \ref{ass:logit} the vector of differences of the choice-specific value functions can be expressed as
\begin{align}
\label{eq:HotzMiller}
    v(a_1,w) - v(1,w) &= \log \frac{\Pr (a_1|w)}{\Pr (1|w)},
\end{align}
where $\Pr (a_1|w)$ is the probability of the choice $a_1 \in \mathcal{A}_1$ conditional on the state variable $w$. Finally, as a suboptimal Markov policy $\sigma$, I consider a cutoff-type strategy
\begin{align}
\label{eq:subopt}
    \sigma_1(w,\epsilon_1) &= \arg \max_{a_1 \in \mathcal{A}} \{ v(a_1,w) - v(1,w) +\text{dev}(a_1,w)+ \epsilon_1(a_1)\},
\end{align}
where I add a deviation function $\text{dev}(a_1,w)$ for each element $v(a_1,w) - v(1,w)$. As a normalization condition, I set $\text{dev}(1,w)=0 \quad \forall w$. Therefore, the inequality (\ref{eq:ineq:exante}) depends on the nuisance parameter
\begin{align}
\label{eq:eta}
    \eta:= (\zeta, \{\gamma_{jk} (w), j \in \mathcal{A}_k\}_{k=1}^K, \Pr (w_{next}|w,a)\}
\end{align}
that consists of any original nuisance parameter $\zeta$ that may appear in the utility function (\ref{eq:perperiod}),  the conditional choice probabilities  of all players
$$\gamma_{jk} (w):= \Pr(a_k=j|w), \quad j \in \mathcal{A}_k, k \in \{1,2,\dots,K\},$$
and the conditional distribution $\Pr (w_{next}|w,a)$.

I construct the identified set $\Theta_I$ using a subset of inequalities implied by the equilibrium definition  (i.e, (\ref{eq:ineq:exante})). Let $q(w)$ be an $L$-vector of non-negative weighting functions.
Let $\text{dev}(a_1,w): \mathcal{A}_1 \bigtimes \mathcal{W} \rightarrow \mathcal{R}^d$ be an $ L$-vector of deviation functions  whose  coordinate $l \in \{1,2,\dots,L\}$ corresponds to a deviation strategy $\text{dev}_l(a_1,w)$.
Define the moment vector-valued function as
\begin{align}
\label{eq:ineq}
    m(w,\theta,\eta)&= q(w) \cdot (V(w;\theta;\sigma;\eta) - V(w;\theta;\sigma^{*};\eta)),
\end{align}
whose  component  $m_l(D,\theta,\eta)$ is equal to  the weighted difference $V(w;\theta;\sigma_l;\eta) - V(w;\theta;\sigma^{*};\eta)$:
 $$ m_l(w,\theta,\eta) :=q_l(w) (V(w;\theta;\sigma_l;\eta) - V(w;\theta;\sigma^{*};\eta)), \quad l \in \{1,2,\dots,L\}$$
evaluated at a strategy profile $\sigma_l = (\sigma_{1,l}, \sigma^{*}_{-1})$. The suboptimal strategy of the first player  $\sigma_{1,l}$ is given in (\ref{eq:subopt}) with the deviation function $\text{dev}_l(a_1,w)$. The identified set $\Theta_I$ is defined as the collection of parameter values $\theta$ that obey inequality restrictions in expectation
\begin{align}
\label{eq:theta0}
    \Theta_I := \{ \theta \in \Theta: \quad \E m(w,\theta, \eta_0) \leq 0\},
\end{align}
where the expectation is taken with respect to the unconditional distribution of the state $w$. 
\subsection{Simulation estimator of the value function}


The value function $V(w;\theta;\sigma;\eta)$ appearing in the moment function (\ref{eq:ineq}) is not available in closed form. However, it can be approximated by Monte Carlo simulation as in \cite{BBL}. In the first stage, one constructs an estimate $\widehat{\eta}$ of the nuisance parameter $\eta$ that is defined in (\ref{eq:eta}). In the second stage, one simulates  the sequence of states and shocks from the estimated first stage parameter $\widehat{\eta}$ and averages the  realization of the value function over  multiple simulation draws.  A single simulation draw is given by Algorithm \ref{alg:forward}.
\begin{algorithm}[H]

Input: initial state $w$;  parameter value $\theta$; estimated first-stage  parameter $\widehat{\eta}$; strategy profile $\sigma(w,\epsilon)$ that is a known function of the first-stage  parameter $\widehat{\eta}$. Initialize $V(w_0;\theta;\sigma;\eta)=0$.
\begin{algorithmic}[1]
\STATE Draw a shock vector $\epsilon$ from the type $1$ extreme value distribution and compute the action profile $a= \sigma(w,\epsilon)$.
\STATE Draw the state variable $w_t$ from the conditional distribution $P(w_t|w_{t-1},a_{t-1})$.
\STATE Draw a vector of shocks $\epsilon_t$  from the type $1$ extreme value distribution. Compute the action $a_t =\sigma(w_t,\epsilon_t)$.
\STATE Add the time $t$ discounted utility: $$V(w_0;\theta;\sigma;\widehat{\eta})=V(w_0;\theta;\sigma;\widehat{\eta})+ \beta^t (\tilde{\pi}(a_t,w_t;\theta;\widehat{\zeta}) +\epsilon_1(a_{1,t})).$$
\end{algorithmic}
Return $\widehat{V}(w_0;\theta;\sigma;\widehat{\eta})$.
\caption{Simulation Estimator of the Expected Value Function}
\label{alg:forward}
\end{algorithm}
In order to evaluate the value function $V(w;\theta;\sigma;\eta)$ at different parameter values $\theta_1 \in \Theta$ and $\theta_2 \in \Theta$, I use the same simulation draws. As long as the number of simulation draws is  large enough,  the simulation error does not affect the asymptotic properties of the estimator of the identified set $\widehat{\Theta}_I$ as discussed in  \cite{Pakes:1989}. 


\subsection{Bias Correction Term for the Expected Value Function}
When the nuisance parameter $\eta$ is high-dimensional and estimated by machine learning, the plug-in estimator of value function is biased. To make the moment function (\ref{eq:ineq}) insensitive to first-stage bias, I derive bias correction term. As discussed in \cite{Newey1994}, the bias correction term for a vector-valued nuisance parameter is equal to the sum of individual terms of individual components. Moreover, the bias correction term for the conditional  probability of choice $j$ has the product structure 
\begin{align*}
    \alpha^{CCP}_{j}(D;\theta;\gamma_j) &= \Gamma(w;\theta) (1_{\{a_1=j\}} - \gamma_j(w)).
\end{align*}
Furthermore, according to \cite{Newey1994},  the function $ \Gamma(w;\theta)$ is implicitly defined by the orthogonality condition explained below.

Let $g(D,\theta,\xi)$ be a moment function. Define the Gateaux derivative map $D_r: \Xi \bigtimes \Theta  \rightarrow \mathcal{R}^L$ as
 \begin{align*}
 	\partial_{r} \bigg\{ \E \bigg[ g(D,\theta,r(\xi - \xi_0) + \xi_0) \bigg], \quad \xi \in \Xi \bigg\},
 \end{align*}
 for all $r \in [0,1)$, which I assume exists. I also denote  the pathwise derivative of the expected moment function at the true value $\xi_0$
 \begin{align*}
  \partial_{\xi} \E g(D,\theta,\xi_0)[\xi - \xi_0] &=  \partial_0 \E g(D,\theta,r(\xi - \xi_0) + \xi_0).
\end{align*}

\begin{definition}[Neyman orthogonality of moment function]
\label{def:orthog}
The moment function $g(D;\theta;\xi)$ obeys the orthogonality condition at $\xi_0$ with respect to the nuisance realization set $\Xi_N \subset \Xi$ if the pathwise derivative $D_r[\xi - \xi_0]$ exists for all $r \in [0,1)$ and vanishes at $r=0$ for each $\theta \in \Theta$
\begin{align}
\label{eq:orthog:chap2}
	\partial_{\xi} \E g(D,\theta,\xi_0)[\xi - \xi_0]  = 0, \quad \xi \in \Xi, \theta \in \Theta.
\end{align}
\end{definition}

The definition of Neyman orthogonality requires that the moment function be insensitive to the biased estimation of $\xi$. This condition is the generalization of the orthogonality condition for point-identified models in \cite{doubleml2016} that is required to hold only at the true value $\theta_0$ of identified parameter $\theta$.  In contrast to the point-identified case, I require the equality in  (\ref{eq:orthog:chap2}) to hold at each point $\theta$ of the parameter space $\Theta$.   In many relevant cases (e.g., if  the moment function $ g(D,\theta,\xi)$ is linear in $\theta$), the orthogonality condition (\ref{eq:orthog:chap2}) on the set $\Theta$   follows from the orthogonality (\ref{eq:orthog:chap2}) on a finite subset of $\Theta$. 
Rewriting the orthogonality condition (\ref{eq:orthog:chap2}) for the bias correction term for the value function gives
\begin{align*}
     \partial_{0} \E  [V  (w;\theta; \sigma^{*};r(\gamma - \gamma_0) + \gamma_0)  - \Gamma(w;\theta) [\gamma(w) - \gamma_0(w)] =0.
\end{align*}

I find the function $\Gamma(w;\theta)$ from the recursive definition of the value function (\ref{eq:recursive}). In the case of Example \ref{ex:trans} the recursive definition can be rewritten in the unconditional form

\begin{align}
\label{eq:recursive:trans:uncond}
    \E \bigg[  V(w;\theta;\sigma^{*};\gamma_0) &- \big( - R (1-\bm{\gamma}_0(w)) + (-\mu s) \bm{\gamma}_0(w) +  PS_{\sigma^{*}} (\bm{\gamma}_0)\nonumber \\
    &+ \beta \E  [ V(w_{next};\theta;\sigma^{*};\gamma_0)|w,a=0] (1-\bm{\gamma}_0(w)) \nonumber  \\
    &+ \beta \E  [ V(w_{next};\theta;\sigma^{*};\gamma_0)|w,a=1] \bm{\gamma}_0(w) \big)\bigg] =0,
\end{align}
where $\gamma_0(w)$ is the true value of the conditional choice probability and $\theta = (R,\mu)$ consists of the replacement and maintenance costs. The unknown function $\gamma(w)$ appears in  Equation (\ref{eq:recursive:trans:uncond}) both {\bf outside} the value function $V(w;\theta;\sigma^{*};\gamma) $ and {\it inside} of this function. I consider a local deviation of the choice probability $\gamma(\cdot)$ from its  value $\gamma_0(\cdot)$ for each value of the parameters $R$ and $\mu$.
Applying chain rule to the Equation (\ref{eq:recursive:trans:uncond}) yields a pathwise derivative
\begin{align*}
    \partial_{0} \E V (w;\theta;  \sigma^{*};r(\gamma - \gamma_0) + \gamma_0) &= \frac{1}{1-\beta} \E \big(
    - \mu s + R + \beta \E  [ V(w_{next};\theta;\sigma^{*};\gamma_0)|w,a=1]\\
    &- \E  [ V(w_{next};\theta;\sigma^{*};\gamma_0)|w,a=0] +  \frac{d PS_{\sigma^{*}}(\gamma_0)}{d \gamma}
    \big)\\
    &=  \frac{1}{1-\beta}  \E \big( - \mu s + R + \beta \E  [ V(w_{next};\theta;\sigma^{*};\gamma_0)|w,a=1]\\
    &- \beta \E  [ V(w_{next};\theta;\sigma^{*};\gamma_0)|w,a=0]  - \frac{2}{\gamma(w)} - \log \frac{\gamma(w)}{1-\gamma(w)} \big),
\end{align*}
where the last equality follows from  the logistic distribution of the private shocks (Assumption \ref{ass:logit}).

In what follows I derive the bias correction terms for the other  nuisance parameters that are present in the dynamic discrete choice in \cite{BBL}. Let $q(w)$ be a weighting function. Define
\begin{align*}
    \lambda(w'):= \E[q(w)|w_{next}=w']
\end{align*}
as the expectation of the  weighting function $q(w)$ evaluated at the current state $w$ conditional  on the future state $w_{next}$. Denote   the  expectation of the choice $j$ made by player one conditional on the state $w$ by $\gamma_j(w):= \E [1_{\{a_1=j \}}|w]$. Let
$$\gamma(w):= (\gamma_2(w),\gamma_3(w),\dots, \gamma_{A_1}(w))$$ be the $A_1-1$ vector of these probabilities. Define  the conditional expectation of the current private shock evaluated at the equilibrium strategy $\sigma_1^{*} = \sigma_1^{*}(w,\epsilon,\gamma)$  as
\begin{align}
\label{psgamma}
    PS_{\sigma_1^{*}}(\gamma):= \E [\epsilon_1(\sigma_1^{*}(w,\epsilon_1,\gamma))|w].
\end{align}
Lemma \ref{lem:mylove} gives the bias correction term,
$ \alpha^{CCP}_{j}(D;\theta;\gamma_j)$, that makes the value function $V(w;\theta;\sigma^{*};\eta)$ insensitive to the biased estimation of $\gamma_j$.

\begin{lemma}[Bias correction term for own conditional choice probability]
\label{lem:mylove}
Suppose the state variable $w$ has a stationary distribution and Assumption \ref{ass:logit} holds. Suppose the conditional probability of each choice $\gamma_j(w), \quad j \in \{2,3,\dots,J\}$ is bounded away from zero and one. Then the bias correction term $\alpha^{CCP}_{j}(D;\theta;\gamma_j)$ is 


\begin{align}
    \alpha^{CCP}_{j}(D;\theta;\gamma_j) &= \frac{q(w)}{q(w) - \beta \lambda(w)} \big( \E_{\epsilon_{-1}} [\tilde{\pi} ((j,\sigma^{*}_{-1}(w,\epsilon_{-1}));w;\theta) - \tilde{\pi}(1,\sigma^{*}_{-1}(w,\epsilon_{-1}));w;\theta)] \nonumber \\
    &+ \beta \E[ V(w_{next};\theta;\sigma^{*};\eta_0) | w,(j,\sigma^{*}_{-1}(w,\epsilon_{-1}))] \nonumber \\
    &- \beta \E[ V(w_{next};\theta;\sigma^{*};\eta_0) | w,(1,\sigma^{*}_{-1}(w,\epsilon_{-1}))] \nonumber \\
    &+ \partial_{\gamma_j}PS_{\sigma_1^{*}}(\gamma_0) \big) (1_{a_1=j} - \gamma_j(w)) \label{eq:alphajccp}
\end{align}
where $\partial_{\gamma_j}PS_{\sigma_1^{*}}(\gamma)$ is the partial derivative of (\ref{psgamma}) with respect to $\gamma_j$.
\end{lemma}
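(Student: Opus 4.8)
The plan is to obtain the coefficient in \eqref{eq:alphajccp} by treating the recursive (Bellman) characterization of $V(w;\theta;\sigma^{*};\gamma)$ as a semiparametric moment equation that implicitly defines $V$ as a functional of $\gamma$, and then applying the implicit‑function‑theorem / \cite{Newey1994} construction of the bias‑correction term, exactly along the lines sketched for Example~\ref{ex:trans} around \eqref{eq:recursive:trans:uncond}. First I would use Assumption~\ref{ass:logit} and the Hotz--Miller identity \eqref{eq:HotzMiller} to write the $q(w)$‑weighted recursion, integrated over the stationary distribution of $w$, with every occurrence of $\gamma_j$ made explicit: player one plays $j$ with probability exactly $\gamma_j(w)$, so $\gamma_j$ enters (i) as a weight on the deterministic payoff $\E_{\epsilon_{-1}}\tilde\pi((j,\sigma^{*}_{-1}(w,\epsilon_{-1}));w;\theta)$, (ii) as a weight on the choice‑$j$ continuation $\beta\,\E[V(w_{next};\theta;\sigma^{*};\eta_0)\mid w,(j,\sigma^{*}_{-1})]$, (iii) inside $PS_{\sigma_1^{*}}(\gamma)$ of \eqref{psgamma}, and (iv) recursively, inside the continuation value $V(w_{next};\theta;\sigma^{*};\gamma)$ itself. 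Appearances (i)--(iii) are the ``outside'' dependence and (iv) is the ``inside'' one.

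Next I would take the Gateaux derivative along $\gamma_r=\gamma_0+r(\gamma-\gamma_0)$ at $r=0$ and collect terms. Writing $\Delta(w):=\partial_0 V(w;\theta;\sigma^{*};\gamma_r)$, the outside appearances produce a source of the form $B(w)(\gamma_j(w)-\gamma_{0,j}(w))$: the normalization $\mathrm{dev}(1,w)=0$ together with $\gamma_1=1-\sum_{k\ge 2}\gamma_k$ turns the raw $\partial_{\gamma_j}$ derivatives of (i)--(ii) into the payoff difference $\E_{\epsilon_{-1}}[\tilde\pi((j,\sigma^{*}_{-1});w;\theta)-\tilde\pi((1,\sigma^{*}_{-1});w;\theta)]$ and the continuation difference $\beta\E[V(w_{next})\mid w,(j,\sigma^{*}_{-1})]-\beta\E[V(w_{next})\mid w,(1,\sigma^{*}_{-1})]$, while (iii) contributes $\partial_{\gamma_j}PS_{\sigma_1^{*}}(\gamma_0)$ --- these are exactly the four terms of the bracket in \eqref{eq:alphajccp}. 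A point to check here: perturbing $\gamma_j$ also perturbs the equilibrium cutoff $\sigma_1^{*}(\cdot,\cdot,\gamma)$, but the Hotz--Miller inversion keeps $\Pr(\sigma_1^{*}=j\mid w)=\gamma_j(w)$ along the entire path, so the only residual contribution of the cutoff perturbation is the $PS$‑derivative already counted in (iii); no extra envelope term survives. The inside appearance (iv) then contributes $\beta\,\E[\Delta(w_{next})\mid w]$, so $\Delta$ solves the linear, $\beta$‑discounted fixed point
\begin{align*}
\Delta(w)=B(w)\,(\gamma_j(w)-\gamma_{0,j}(w))+\beta\,\E[\Delta(w_{next})\mid w].
\end{align*}

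It remains to invert the operator $\mathrm{Id}-\beta\mathcal{T}$, where $\mathcal{T}$ is the one‑step transition operator, and read off the coefficient. Integrating the fixed point against $q(w)$ and expanding $(\mathrm{Id}-\beta\mathcal{T})^{-1}$ as a Neumann series, each application of $\mathcal{T}$ can be turned into a backward transfer of the weight: by stationarity of $\{w_t\}$ the weight after one step is $\lambda(w)=\E[q(w)\mid w_{next}=w]$, and summing the resulting geometric series in $\beta$ collapses the tail into the scalar multiplier $q(w)/(q(w)-\beta\lambda(w))$. This gives $\partial_0\,\E[q(w)V(w;\theta;\sigma^{*};\gamma_r)]=\E\big[\tfrac{q(w)}{q(w)-\beta\lambda(w)}\,B(w)\,(\gamma_j(w)-\gamma_{0,j}(w))\big]$, so setting $\alpha^{CCP}_j(D;\theta;\gamma_j)=\tfrac{q(w)}{q(w)-\beta\lambda(w)}B(w)\,(1_{a_1=j}-\gamma_j(w))$ matches \eqref{eq:alphajccp}. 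Its conditional mean is zero, $\E[\alpha^{CCP}_j\mid w]=\tfrac{q(w)}{q(w)-\beta\lambda(w)}B(w)\,\E[1_{a_1=j}-\gamma_j(w)\mid w]=0$, and its pathwise derivative in $\gamma_j$ is $-\E\big[\tfrac{q(w)}{q(w)-\beta\lambda(w)}B(w)(\gamma_j-\gamma_{0,j})\big]$, which cancels the $\gamma_j$‑component of $\partial_0\E[q(w)V(w;\theta;\sigma^{*};\gamma_r)]$; this is the Neyman‑orthogonality condition of Definition~\ref{def:orthog} for that component. Specializing to $K=1$, $\mathcal{A}_1=\{0,1\}$ and $q\equiv 1$ (so $\lambda\equiv 1$) recovers the function $\Gamma$ of \eqref{eq:big:gamma} and the prefactor $1/(1-\beta)$ of \eqref{eq:alphaccp}.

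The step I expect to be the main obstacle is the operator inversion just described: one must justify that $\beta\mathcal{T}$ is a contraction on an appropriate (e.g.\ weighted sup‑norm) space so that the Neumann series converges --- this is where $\beta<1$, stationarity of $w$, and the hypothesis that each $\gamma_j(w)$ is bounded away from $0$ and $1$ (which controls the size of $B$ through $\partial_{\gamma_j}PS_{\sigma_1^{*}}$ and the Hotz--Miller terms) are used --- and that stationarity of $\{w_t\}$ really does let the iterated backward reweighting be summarized by the single reverse conditional expectation $\lambda$. The other delicate point is the bookkeeping in the second step: enumerating all places $\gamma_j$ enters the $q$‑weighted Bellman equation, applying the adding‑up constraint $\sum_j\gamma_j=1$ consistently across appearances (i)--(iii), and confirming via Hotz--Miller that the cutoff perturbation contributes nothing beyond the $PS$‑term.
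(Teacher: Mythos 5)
Your proposal follows essentially the same strategy the paper uses: cast the $q$-weighted Bellman equation, integrated against the stationary distribution, as a semiparametric moment condition that implicitly defines $V$ as a functional of $\gamma_j$; separate the ``outside'' appearances of $\gamma_j$ (the weight on the flow payoff and on the choice-$j$ continuation, the $PS_{\sigma_1^{*}}$ term) from the ``inside'' appearance (recursion through $V(w_{next})$); Gateaux-differentiate; and read off the Riesz representer. Your bookkeeping of the outside terms, the use of the adding-up constraint $\sum_j\gamma_j=1$ to produce the payoff and continuation \emph{differences} relative to alternative $1$, and the observation that the Hotz--Miller inversion absorbs the cutoff perturbation into the $PS$ derivative all match the paper's computation. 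The $q\equiv 1$ specialization recovering $\frac{1}{1-\beta}$ and \eqref{eq:big:gamma} is the same sanity check the paper uses in Remark~\ref{rm:trans}.

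Where you genuinely differ from the paper is in the handling of the ``inside'' derivative. The paper performs a one-step change-of-measure in the unconditional recursion (the manipulation around \eqref{eq:trans:change}), writes $\E\big[q(w)\,\beta\,\E[V(w_{next})\mid w,\cdot]\big]=\beta\,\E[\lambda(w)V(w)]$ by stationarity, and then invokes ``the same argument'' to write the inside contribution as $(q(w)-\beta\lambda(w))\,\partial_0\E V$. You instead write the pointwise fixed point $\Delta(w)=B(w)\,\delta\gamma_j(w)+\beta\,\mathcal{T}\Delta(w)$ and invert $\mathrm{Id}-\beta\mathcal{T}$ by a Neumann series, claiming the backward-transferred weights sum geometrically to the scalar multiplier $q(w)/(q(w)-\beta\lambda(w))$. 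This makes the structure more transparent, but the collapse you assert is not a valid pointwise identity in general. After one step the backward weight is $\lambda=\mathcal{T}^{*}q$ (taking $\mathcal{T}^{*}$ as the reverse-chain transition operator), but after two steps it is $\mathcal{T}^{*}\lambda$, which is not $\lambda^2/q$; the iterated weights $(\mathcal{T}^{*})^{t}q$ do not form the geometric sequence $q(\lambda/q)^{t}$ that your ``summing the geometric series'' step requires. The exact Riesz weight is $[(\mathrm{Id}-\beta\mathcal{T}^{*})^{-1}q](w)$, which agrees with $q/(q-\beta\lambda)$ when $q\equiv1$ (both give $1/(1-\beta)$, since $\mathcal{T}^{*}1=1$) and agrees to first order in $\beta$ in general, but is not equal to it for arbitrary $q$ and transition kernel. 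To be fair, the paper's own proof is no more explicit at this point --- the object $(q(w)-\beta\lambda(w))\,\partial_0\E V$ is written as if a function of $w$ multiplies a scalar --- so you have faithfully reproduced the paper's argument including its terseness. But this inversion is precisely the step you yourself single out as ``the main obstacle,'' and it deserves either a derivation of the operator inverse or an explicit restriction on $q$ (or on the transition kernel) under which $(\mathcal{T}^{*})^{t}q=q(\lambda/q)^{t}$, rather than the bare assertion that stationarity lets the tail be summarized by $\lambda$.
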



\begin{lemma}[Bias Correction Term for the Law of Motion of the State Variable]
\label{lem:trans}
Suppose the high-dimensional state variable $w$  has a  stationary distribution. Let $a=(a_1,a_2,\dots,a_K)$ be a given profile of actions. Then the bias correction term for the conditional quantile function $Q(u,w,a)$ of level $u$ is 
\begin{align}
\label{eq:trans}
    \alpha^{TRANS}_{\sigma}(D;\theta;\eta) &= \frac{\beta l(w)}{l(w) - \beta \lambda(w)} \E[ \nabla_{w_{next}} V (w_{next};\theta;\sigma;\eta_0)|w, a] \prod_{k=1}^K \Pr (\sigma_k(w,\epsilon_k)=a_k|w) \\
     &\frac{1_{w_{next} \leq Q(u,w,j)  - u}  }{f(w_{next}|w,a=\sigma(w,\epsilon))} \nonumber. 
\end{align}
\end{lemma}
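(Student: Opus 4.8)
The plan is to obtain $\alpha^{TRANS}_{\sigma}$ as the Neyman-orthogonal correction of Definition \ref{def:orthog}, by running the implicit-function-theorem recipe of \cite{Newey1994} that already produced Lemma \ref{lem:mylove}, but now with the conditional transition law --- encoded through its conditional quantile function $Q(\cdot,w,a)$ --- in the role of the nuisance component of $\eta$. First I would make the transition law explicit through the quantile representation: write the future state (or its continuously distributed coordinate) as $w_{next}=Q(U,w,a)$, with $U\sim\mathrm{Uniform}(0,1)$ drawn independently of everything else, so that $\E[V(w_{next};\theta;\sigma;\eta)\mid w,a]=\int_0^1 V\big(Q(u,w,a);\theta;\sigma;\eta\big)\,du$ in the recursive identity (\ref{eq:cexp}). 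Differentiating along a path $Q_r=Q_0+r(\tilde Q-Q_0)$ at $r=0$ and applying the chain rule to the Bellman recursion yields two pieces: a \emph{direct} piece $\int_0^1 \nabla_{w_{next}}V\big(Q_0(u,w,a)\big)\,(\tilde Q-Q_0)(u,w,a)\,du$ from the states reached next period, and a \emph{recursive} piece from $V$ itself being a functional of $\eta$ at every future date.

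Next I would unfold the recursion using the stationarity hypothesis. Iterating the Bellman operator, $\E_w[q(w)V(w;\theta;\sigma;\eta)]=\sum_{t\ge0}\beta^{t}\E[q(w_0)\,\pi(\sigma(w_t,\epsilon_t),w_t,\epsilon_t)]$; differentiating the $t$-th summand in $\eta$ and re-indexing time by stationarity converts the conditioning on the initial state $w_0$ into iterated applications of the backward operator $w'\mapsto\lambda(w')=\E[q(w)\mid w_{next}=w']$ of Lemma \ref{lem:mylove}. Summing the resulting geometric series in $\beta$ collapses everything to the scalar adjoint weight $\beta l(w)/(l(w)-\beta\lambda(w))$ in front of the term --- the extra factor $\beta$ relative to Lemma \ref{lem:mylove} appearing because the transition law enters only the discounted continuation value, not the current flow payoff, and reducing, as it should, to $\beta/(1-\beta)$ when the weight is constant (cf. Remark \ref{rm:trans}). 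The product $\prod_{k=1}^{K}\Pr(\sigma_k(w,\epsilon_k)=a_k\mid w)$ enters at this step as the Jacobian of the change of conditioning from ``$(w,a)$'' to ``$w$ under the strategy profile $\sigma$'', since the perturbed object $Q(\cdot,w,a)$ is indexed by the whole action profile $a$ while the moment (\ref{eq:ineq}) is conditioned on $w$ alone.

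It then remains to rewrite the functional derivative $\int_0^1 \Gamma(u,w,a)\,(\tilde Q-Q_0)(u,w,a)\,du$ as the pathwise derivative of a mean-zero term --- the quantile analogue of handling a conditional-mean nuisance $\E[U\mid w]$ through $\Pi(w)\,(U-\E[U\mid w])$. Because $Q_0(u,w,a)$ is implicitly defined by $\E[\mathbf{1}\{w_{next}\le Q_0(u,w,a)\}-u\mid w,a]=0$, differentiating this identity along the path gives $\partial_r\E[\mathbf{1}\{w_{next}\le Q_r(u,w,a)\}\mid w,a]\big|_{r=0}=f(Q_0(u,w,a)\mid w,a)\,(\tilde Q-Q_0)(u,w,a)$; hence pairing the multiplier with the residual $\mathbf{1}\{w_{next}\le Q_0(u,w,a)\}-u$ divided by the conditional density $f(w_{next}\mid w,a)$ produces a term whose pathwise derivative cancels the direct piece, so that $g=m+\alpha^{TRANS}_{\sigma}$ obeys (\ref{eq:orthog:chap2}). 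Collecting the adjoint weight, the continuation-value gradient $\E[\nabla_{w_{next}}V(w_{next};\theta;\sigma;\eta_0)\mid w,a]$, the action-probability product, and the density-normalized quantile residual reproduces (\ref{eq:trans}) for the level $u$; aggregating over the chosen deviation levels/actions gives the full correction, and, as in (\ref{eq:alpha:rust}), the term attached to the moment (\ref{eq:ineq}) is the difference $\alpha^{TRANS}_{\sigma}-\alpha^{TRANS}_{\sigma^{*}}$.

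I expect the main obstacle to be the recursion/adjoint bookkeeping in the second step: one must justify interchanging differentiation with the infinite sum over periods, carry out the stationary re-indexing rigorously (this is where stationarity, integrability of $\nabla_{w_{next}}V$, and $\beta<1$ are used), and verify that the recursive contribution of $\eta$ through future values assembles \emph{exactly} into the geometric series producing the scalar weight $\beta l(w)/(l(w)-\beta\lambda(w))$, rather than a genuinely operator-valued factor. A secondary but essential point is the quantile-to-density step, which requires differentiability of $u\mapsto Q_0(u,w,a)$ and of the conditional c.d.f. at $Q_0(u,w,a)$ together with $f(\cdot\mid w,a)$ bounded away from zero --- the transition-law counterpart of the ``bounded away from $0$ and $1$'' regularity invoked in Lemma \ref{lem:mylove}.
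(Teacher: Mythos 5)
Your proposal shares the paper's skeleton: the Newey (1994) implicit-function-theorem recipe applied to the Bellman identity, the split of the pathwise derivative into a \emph{direct} ("outside") piece and a \emph{recursive} ("inside") piece, and the translation of a quantile-function perturbation into the density-normalized residual $(\mathbf{1}\{w_{next}\le Q(u,w,a)\}-u)/f(w_{next}\mid w,a)$. Your observation that the extra factor $\beta$ (relative to Lemma~\ref{lem:mylove}) appears because the transition law enters only the discounted continuation is also exactly the mechanism the paper exploits in computing $\mathbf{R}_{out}$, and the interpretation of $\prod_k\Pr(\sigma_k(w,\epsilon_k)=a_k\mid w)$ as the reweighting from conditioning on $(w,a)$ to conditioning on $w$ matches.

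Where you depart is in the treatment of the recursive piece. The paper never unfolds the infinite-horizon sum: it applies the stationary change-of-variable once --- rewriting $\beta\,\E\!\left[q(w)\,\E[V(w_{next};\cdot)\mid w,a]\,\Pr(a\mid w)\right]=\beta\,\E[\lambda(w_{next})V(w_{next};\cdot)]=\beta\,\E[\lambda(w)V(w;\cdot)]$ --- so that the Bellman identity together with the chain rule becomes a single linear equation in the unknown Gateaux derivative, which is then solved directly to produce the scalar weight $q(w)/(q(w)-\beta\lambda(w))$. Your route instead differentiates $\E[q(w_0)\sum_t\beta^t\pi_t]$ term by term, re-indexes by stationarity, and sums what you call a geometric series. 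That is not an innocuous rephrasing: the object you are iterating is the adjoint operator $f\mapsto\E[f(w)\mid w_{next}=\cdot]$, and its $t$-fold iterate $\E[q(w_0)\mid w_t=w]$ is \emph{not} $\lambda(w)^t/q(w)^{t-1}$, so the series does not literally become scalar-geometric without an argument. You correctly identify this as the main obstacle, but the paper simply sidesteps it: by working with the one-step Bellman equation as a fixed-point identity, it only ever needs a single application of the adjoint, and the "geometric" structure appears only implicitly when the linear equation is solved for the derivative. So your proposal is conceptually on track and recovers the same formula, but the route you sketch for the recursive piece requires more work (interchange of sum and derivative, and the collapse of the iterated adjoint) than the paper's one-step fixed-point argument, and as written it leaves the critical scalar-vs.-operator point unresolved rather than proved.
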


\begin{lemma}[Bias Correction Term for the opponent's conditional choice probability]
\label{lem:opponent}
Suppose the state variable $w$ has a stationary distribution. Let the number of players be $2$ (i.e, $K=2$).  Then the bias correction term for the conditional probability of the choice $j_2$ made by my opponent is
\begin{align}
\nonumber
    \alpha_{j}^{CCP,op}(D;\theta;\gamma_{j_2}) &= \frac{q(w)}{q(w) - \beta \lambda(w)} \big(  \sum_{j_1=1}^{A_1} (\tilde{\pi} ((j_1,j_2),w;\theta) -\tilde{\pi} ((j_1,1),w;\theta )) \Pr (\sigma_1(w,\epsilon_1) = j_1|w) \\
    &+ \beta  \big[  \sum_{j_1=1}^{A_1} \E [V(w_{next};\theta;\sigma^{*};\eta_0) | w, (j_1,j_2)] - \sum_{j_1=1}^{A_1} \E [V(w_{next};\theta;\sigma^{*};\eta_0) | w, (j_1,1)] \big] \cdot \nonumber \\
    &\cdot \Pr (\sigma_1(w,\epsilon_1) = j_1|w)) (1_{a_2 = j_2} - \Pr(a_2=j_2) ). \label{eq:opponent}
\end{align}
\end{lemma}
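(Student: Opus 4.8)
The plan is to derive $\alpha^{CCP,op}_j$ by the same route the paper uses for the own choice probability just before Lemma~\ref{lem:mylove}: compute the Gateaux derivative of the weighted value function in the direction of a perturbation of the opponent's conditional choice probability $\gamma_{j_2}$, using the recursive (Bellman) representation, and then read off $\Gamma(w;\theta)$ from the Neyman-orthogonality requirement of Definition~\ref{def:orthog}. First I would write the unconditional recursive form of $\E[q(w)\,V(w;\theta;\sigma^{*};\eta)]$, in which player one's action $a_1=j_1$ is drawn from its equilibrium choice probabilities $\Pr(\sigma_1(w,\epsilon_1)=j_1\mid w)$, player two's action $a_2$ is drawn from $\gamma_{j_2}(\cdot)$ with the baseline alternative normalized to $a_2=1$ (so $\gamma_1=1-\sum_{j_2\geq 2}\gamma_{j_2}$), and $w_{next}$ is drawn from $\Pr(w_{next}\mid w,(j_1,j_2))$. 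In this expression $\gamma_{j_2}$ appears in two distinct places: \emph{outside} the value function, as a weight on the current deterministic payoff $\tilde{\pi}((j_1,j_2),w;\theta)$ and on the continuation term $\E[V(w_{next};\theta;\sigma^{*};\eta_0)\mid w,(j_1,j_2)]$, and \emph{inside} $V(w_{next};\theta;\sigma^{*};\gamma_{j_2})$, which is itself evaluated at the same perturbed $\gamma_{j_2}$.

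Next I would perturb $\gamma_{j_2}\mapsto\gamma_{j_2}+r(\tilde\gamma_{j_2}-\gamma_{j_2})$, set $h:=\tilde\gamma_{j_2}-\gamma_{j_2,0}$ and $\psi:=\partial_r V^{r}|_{r=0}$, and differentiate termwise. The ``outside'' contribution produces $\E[q(w)\,\Lambda(w)\,h(w)]$, where
\begin{align*}
\Lambda(w) := {}& \sum_{j_1}\Pr(\sigma_1(w,\epsilon_1)=j_1\mid w)\Big(\tilde{\pi}((j_1,j_2),w;\theta)-\tilde{\pi}((j_1,1),w;\theta)\\
&{}+\beta\,\E[V(w_{next};\theta;\sigma^{*};\eta_0)\mid w,(j_1,j_2)]-\beta\,\E[V(w_{next};\theta;\sigma^{*};\eta_0)\mid w,(j_1,1)]\Big)
\end{align*}
is precisely the bracketed factor in (\ref{eq:opponent}); the ``inside'' contribution is $\beta\,\E[q(w)\,\E[\psi(w_{next})\mid w]]$, so $\psi$ solves the linear fixed-point equation $\psi(w)=\Lambda(w)h(w)+\beta\,\E[\psi(w_{next})\mid w]$, with Neumann solution $\psi(w)=\sum_{t\geq 0}\beta^{t}\,\E[\Lambda(w_t)\,h(w_t)\mid w_0=w]$. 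To collapse the recursion I would use stationarity of $w$: for integrable $f$, $\E[q(w)f(w_{next})]=\E[\E[q(w)\mid w_{next}]\,f(w_{next})]=\E[\lambda(w)f(w)]$ with $\lambda(w')=\E[q(w)\mid w_{next}=w']$, which turns the effective weight into the resolvent $\sum_{t\geq 0}\beta^{t}(P^{*})^{t}q$ (with $P^{*}$ the backward, time-reversed transition operator); this is written $q(w)/(q(w)-\beta\lambda(w))$ in the lemma and reduces to $1/(1-\beta)$ when $q\equiv 1$, recovering the special case already displayed in the excerpt and in Lemma~\ref{lem:mylove}. I would also note one genuine simplification relative to Lemma~\ref{lem:mylove}: since player one's (equilibrium or fixed one-shot-deviation) strategy is pinned down through Hotz--Miller by player one's \emph{own} choice probabilities and not by $\gamma_{j_2}$, the private-shock term $PS_{\sigma_1^{*}}$ does not depend on $\gamma_{j_2}$, so there is no analogue of the $\partial_{\gamma_j}PS_{\sigma_1^{*}}$ term of (\ref{eq:alphajccp}).

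Combining the two contributions gives $\partial_r\E[q(w)\,V^{r}(w)]|_{r=0}=\E[\Gamma(w;\theta)\,h(w)]$ with $\Gamma(w;\theta)=\tfrac{q(w)}{q(w)-\beta\lambda(w)}\Lambda(w)$. The product-form candidate $\alpha^{CCP,op}_j(D;\theta;\gamma_{j_2})=\Gamma(w;\theta)\,(1_{\{a_2=j_2\}}-\gamma_{j_2}(w))$ then has conditional mean zero given $w$, so adding it leaves $\E m$ unchanged, and differentiating $V+\alpha^{CCP,op}_j$ in the direction $h$ — using the zero-conditional-mean property to discard the term in which $\Gamma$ is itself differentiated — exactly cancels the derivative of $V$, which is orthogonality condition (\ref{eq:orthog:chap2}). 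The restriction $K=2$ is used only so that ``the opponent'' is unambiguous and the action profile reduces to the pair $(j_1,j_2)$.

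I expect the main obstacle to be the collapse of the recursion: justifying the interchange of the Gateaux derivative with the infinite-horizon sum and the conditional expectations, establishing convergence of the Neumann series (which should follow from $\beta<1$ together with boundedness of $\Lambda$ and of $P^{*}$ under the stationary law), and, above all, verifying that the resolvent $\sum_{t\geq 0}\beta^{t}(P^{*})^{t}q$ indeed admits the closed form $q(w)/(q(w)-\beta\lambda(w))$ — immediate when $q\equiv 1$, but needing the stationarity structure (and possibly further restrictions on $q$) for the general case. The remaining pieces — the chain-rule bookkeeping, the zero-mean check, and the final cancellation — are routine given the template already used for Lemma~\ref{lem:mylove}.
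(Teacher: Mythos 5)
Your plan follows the same route as the paper's own proof of Lemma~\ref{lem:opponent}: write the weighted Bellman equation in unconditional form, split the Gateaux derivative into the outside appearances of $\gamma_{j_2}$ (the linear weights on flow payoffs and continuation values) and the inside appearance (through $V(w_{next};\cdot)$), fold the inside contribution back onto the current state via stationarity, and read off $\Gamma$ from the resulting implicit equation; your observation that there is no analogue of the $\partial_{\gamma_j}PS_{\sigma_1^{*}}$ term (since the opponent's choice probability does not enter player one's private-shock contribution) is also exactly what the paper's proof implicitly uses, as is the restriction to $K=2$ only to make ``the opponent'' and the action pair $(j_1,j_2)$ unambiguous. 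On the obstacle you flag: you are right to be suspicious of collapsing the resolvent to a pointwise multiplier — the Riesz-representer computation actually gives $\Gamma(w)=\Lambda(w)\,\tilde q(w)$ with $\tilde q$ the solution of the operator equation $\tilde q-\beta P^{*}\tilde q=q$ (where $P^{*}$ is the time-reversed transition and $\lambda=P^{*}q$), whereas $q/(q-\beta\lambda)$ solves instead the \emph{pointwise} equation $\tilde q\,q-\beta\tilde q\,\lambda=q$, and the two coincide only in degenerate cases such as constant $q$ (both reduce to $1/(1-\beta)$), so this is a soft spot in the paper's stated formula rather than a gap in your sketch, and your Neumann-series/resolvent formulation is in fact the more defensible way to write the factor.
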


\subsection{Using Linearity to Reduce Computation}

Both the value function and the bias correction terms presented above are not available in closed form and must  be simulated. When the value function is a linear  function of $\theta$, the simulation can be simplified. Suppose there exist  basis functions $\Psi_1(w;\sigma;\eta), \Psi_2(w;\sigma;\eta) $ such that
\begin{align}
\label{eq:basis}
V(w;\theta;\sigma;\eta) = \theta \Psi_1(w;\sigma;\eta) + \Psi_2(w;\sigma;\eta) 
\end{align}
is an affine function of $\theta$. Then one can simulate the basis functions $\Psi_1(w;\sigma;\eta),  \Psi_2(w;\sigma;\eta) $ instead of simulating the value function for each $\theta \in \Theta$. Lemma \ref{lem:linearity} provides the sufficient conditions for the linearity of the value function and the individual bias correction terms provided in Lemmas \ref{lem:mylove}, \ref{lem:trans}, \ref{lem:opponent}.

\begin{lemma}[Sufficient Conditions for Linearity]
\label{lem:linearity}
The following conditions hold. (1)  The per-period utility function given in (\ref{eq:perperiod}) is a linear function of $\theta$.  (2) The distribution of the private shock for each player is known.   Then there exists a vector of basis functions $\Psi(w;\sigma;\eta)$ such that (\ref{eq:basis}) holds. Furthermore, each individual bias correction term given in (\ref{eq:alphajccp}), (\ref{eq:trans}), and (\ref{eq:opponent}) is also a linear function $\theta$.
\end{lemma}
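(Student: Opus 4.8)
The plan is to build the whole argument on one structural observation: by Assumption~\ref{ass:logit} and the Hotz--Miller inversion~(\ref{eq:HotzMiller}), the differences $v(a_1,w)-v(1,w)=\log\big(\Pr(a_1|w)/\Pr(1|w)\big)$ are pinned down by the conditional choice probabilities $\gamma$ alone, so the cutoff rule~(\ref{eq:subopt}) defining the suboptimal policy $\sigma_{1,l}$ (and its $\text{dev}\equiv 0$ special case $\sigma^{*}$), together with the equilibrium profile $\sigma^{*}_{-1}$ of the remaining players, is a fixed measurable map of $(w,\epsilon)$ once $\eta$ is fixed, and in particular does \emph{not} depend on $\theta$. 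Consequently the joint law of the state/shock path $\{(w_t,\epsilon_t)\}_{t\geq 0}$ started at $w$ and generated by such a policy is $\theta$-free. Granting this, affinity in $\theta$ propagates mechanically, since it survives (i) discounted summation of a geometrically convergent series, (ii) conditional expectations over the $\theta$-free path law, (iii) differentiation in the state argument, and (iv) multiplication by $\theta$-free factors.

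For the value function, write the per-period utility~(\ref{eq:perperiod}) as $\tilde\pi(a,w;\theta;\zeta)=\theta^{\top}\phi_1(a,w;\zeta)+\phi_0(a,w;\zeta)$, which is licensed by condition~(1) (with $\phi_0\equiv 0$ if one insists on homogeneity). Substituting into~(\ref{eq:cexp}) and pulling the path expectation through the $\theta$-free policy gives $V(w;\theta;\sigma;\eta)=\theta^{\top}\Psi_1(w;\sigma;\eta)+\Psi_2(w;\sigma;\eta)$ with
\begin{align*}
\Psi_1(w;\sigma;\eta) &:= \E\Big[\textstyle\sum_{t\geq 0}\beta^{t}\phi_1(\sigma(w_t,\epsilon_t),w_t;\zeta)\,\Big|\,w\Big],\\
\Psi_2(w;\sigma;\eta) &:= \E\Big[\textstyle\sum_{t\geq 0}\beta^{t}\big(\phi_0(\sigma(w_t,\epsilon_t),w_t;\zeta)+\epsilon_1(\sigma_1(w_t,\epsilon_{1,t}))\big)\,\Big|\,w\Big].
\end{align*}
The series converge because $\beta<1$ and the integrands are bounded under the maintained regularity and stationarity. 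The private-shock stream $\epsilon_1(\sigma_1(w_t,\epsilon_{1,t}))$, whose conditional mean is exactly $PS_{\sigma_1}(\gamma)$ of~(\ref{psgamma}), carries no $\theta$ and is absorbed into $\Psi_2$; it is here that condition~(2) (known shock distribution) enters, as it makes $PS_{\sigma_1}(\gamma)$ and $\partial_{\gamma_j}PS_{\sigma_1^{*}}(\gamma)$ well-defined functions of $(w,\gamma)$. This proves~(\ref{eq:basis}), for the equilibrium policy and for every suboptimal policy $\sigma_l$ alike.

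For the bias correction terms, inspect~(\ref{eq:alphajccp}),~(\ref{eq:trans}),~(\ref{eq:opponent}) factor by factor. All of $q(w)/(q(w)-\beta\lambda(w))$, $\beta l(w)/(l(w)-\beta\lambda(w))$, $\prod_{k}\Pr(\sigma_k(w,\epsilon_k)=a_k|w)$, $1_{\{w_{next}\leq Q(u,w,j)-u\}}/f(w_{next}|w,a)$, the scores $1_{\{a_1=j\}}-\gamma_j(w)$ and $1_{\{a_2=j_2\}}-\Pr(a_2=j_2)$, the choice probabilities $\Pr(\sigma_1(w,\epsilon_1)=j_1|w)$, and $\partial_{\gamma_j}PS_{\sigma_1^{*}}(\gamma_0)$ are $\theta$-free by the observation above and condition~(2). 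The only $\theta$-bearing ingredients are the per-period utility differences $\tilde\pi((j,\sigma^{*}_{-1});w;\theta)-\tilde\pi(1,\sigma^{*}_{-1};w;\theta)$ and their analogues in~(\ref{eq:opponent}), which are affine in $\theta$ by condition~(1), and the objects $\E[V(w_{next};\theta;\sigma^{*};\eta_0)|w,\cdot]$ and $\E[\nabla_{w_{next}}V(w_{next};\theta;\sigma;\eta_0)|w,a]$, which are affine in $\theta$ by the previous paragraph (conditional expectation and $\nabla_{w_{next}}$ are linear and commute with $V=\theta^{\top}\Psi_1+\Psi_2$). In each of the three formulas the $\theta$-dependent content enters as a sum of such affine-in-$\theta$ factors multiplied by $\theta$-free prefactors, so each $\alpha$ is affine in $\theta$; summing the individual components as in~(\ref{eq:alpha:rust}) and in Remark~\ref{rm:Pug} preserves affinity, which gives the claim.

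The only step with genuine content is the first one: that the policies, and hence the entire path law, do not depend on $\theta$. This rests squarely on the logit/Hotz--Miller inversion, so the argument would have to be reworked if Assumption~\ref{ass:logit} were relaxed or if one allowed deviation strategies defined through $\theta$-dependent quantities. Everything downstream is bookkeeping --- verifying that no hidden $\theta$-dependence enters through $\zeta$, $\lambda$, $l$, $f$, the quantile $Q(u,w,a)$, or the $PS$ derivative, and that interchanging $\sum_{t}$, $\E[\cdot\,|\,\cdot]$ and $\nabla_{w_{next}}$ is justified by the boundedness and stationarity already assumed.
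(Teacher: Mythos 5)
Your proof is correct and follows essentially the same route as the paper: write the per-period utility as (an affine function of) $\theta$ times basis functions, push the expectation over the path law through the $\theta$-factorization to get $\Psi_1,\Psi_2$, and then note that each bias-correction formula is a sum of $\theta$-free prefactors multiplying either affine-in-$\theta$ utility differences or conditional expectations/derivatives of the (already affine) value function. The one genuine addition in your write-up is that you make explicit the reason this interchange is licit — that the policies $\sigma^{*}$ and $\sigma_l$, and hence the induced path law, are $\theta$-free because the Hotz--Miller inversion~(\ref{eq:HotzMiller}) expresses the relevant value differences in terms of $\gamma$ alone — which the paper's proof takes for granted; this is a worthwhile clarification but not a different argument.
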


\section{Asymptotic Theory}
\label{sec2:theory}
Suppose I have a collection of inequality restrictions on an economic model coming from the data structure and/or the assumptions about the  data generating process. These restrictions are embodied into a moment function $g(D,\theta,\xi): \mathcal{D} \bigtimes \Theta \bigtimes \Xi \rightarrow \mathcal{R}^L$,
where $\theta \in \Theta \subset \mathcal{R}^d$ is the target parameter. In addition to the target parameter $\theta$, the moment function $g(D,\theta,\xi)$ depends on a nuisance parameter $\xi=\xi(\theta)$ whose true value  $\xi_0=\xi_0(\theta)$  is an identified parameter of the data distribution  $P_D$  and belongs to a convex subset of a normed vector space $ \Xi$. The object of interest is an identified set $\Theta_I$ defined as a collection of parameter values $\theta$ that satisfy the inequality restrictions
\begin{align}
\label{eq:theta0}
    \Theta_I :=  \{ \theta \in \Theta: \quad \E g(D,\theta, \xi_0) \leq 0\}
\end{align}
at the true value $\xi_0$ of the nuisance parameter.

The identified set $\Theta_I$ is characterized as the minimizer of the criterion function
\begin{align*}
\Theta_I :=  \arg \min_{\theta \in \Theta}  Q(\theta, \xi_0) = \arg \min_{\theta \in \Theta} \| \E g(D,\theta, \xi_0) \|^2.
\end{align*}
I assume that the following partial identification condition holds. There exist positive constant $C>0$ and $\delta>0$ such that 
\begin{align}
\label{ass:partid}
	\| \E g(D,\theta, \xi_0 )\|_{+} \geq C d_H(\theta, \Theta_I) \wedge \delta.
\end{align}
This condition states that once $\theta$ is bounded away from $\Theta$, the moment  $\E g(D,\theta, \xi_0 )$ is bounded away from $\Theta_I$ by a number that is proportional to the Hausdorff  distance $d_H(\theta,\Theta_I)$  from $\theta$ to the identified set $\Theta_I$. This condition ensures that the true moment function $ g(D,\theta, \xi_0) $ distinguishes the boundary of the identified set.

 \paragraph{Impact of the First-Stage Estimation.}
In the next condition I introduce a sequence of neighborhoods $\Xi_N^{\theta} \subset \Xi^{\theta}$  of $\xi_0(\theta)$ that contain the estimate  $\widehat{\xi}(\theta)$ of $\xi_0(\theta)$ w.p. approaching one. As the sample size $N$ increases, the neighborhoods shrink. The quality of the estimation of the first-stage parameter is defined as the speed of  shrinkage of the neighborhood $\Xi_N(\theta)$ around $\xi_0(\theta)$. I refer to it as the first-stage rate $g_N$. Finally, I assume that  the second-order derivative of the functional $\E g(D,\theta, \xi)$ is well-behaved. Combined with the orthogonality condition and 
the upper bound on the first-stage rate $g_N$, this assumption ensures that  I can ignore the impact of the estimation error $\widehat{\xi}(\theta)-\xi_0(\theta)$ on the second and the higher-order derivatives of the moment function $\E g(D,\theta, \xi)$ with respect to $\xi$ at $\xi_0$.

\begin{condition}[Orthogonality]
\label{ass:smallbiasvalue}
There exists a sequence $\Xi_N^{\theta}$ of subsets of $\Xi^{\theta}$: $\Xi_N^{\theta} \subset \Xi^{\theta}$ such that the following conditions hold. (1) The true value $\xi_0(\theta)$ belongs to $\Xi_N^{\theta}$ for all $N \geq 1$. (2) There exists a sequence of numbers $\phi_N=o(1)$ such that with probability at least $1-\phi_N$, $\widehat{\xi}(\cdot)$ belongs to $\Xi:=\bigtimes \Xi^{\theta}$ uniformly over $\theta \in \Theta$. (3) The set $\Xi_N^{\theta}$ shrinks around $\xi(\theta)$ at the following statistical rate  $g_N = o(N^{-1/4})$ uniformly over $\theta$:
$$ \sup_{\theta \in \Theta} \sup_{\xi(\theta) \in \Xi^{\theta}} \| \xi(\theta) - \xi_0(\theta)\|_{P,2} \leq g_N.$$  The moment function $g(D,\theta,\xi)$ obeys the orthogonality condition at $\xi_0$.  (5) There exists a sequence $s_N = o(N^{-1/2})$ such that the  second Gateaux derivative of the functional $G(\theta,\xi(\theta))$ with respect to $\xi$ at $\xi_0$ is bounded:
$$ \sup_{\theta \in \Theta} \sup_{\xi \in \Xi_N^{\theta}} \sup_{r \in [0,1)} \| \partial_r^2 \E g(D,\theta, r(\xi - \xi_0) + \xi_0) \| \leq s_N.$$
\end{condition}

The next condition requires that the moment function $g(D;\theta;\xi(\theta))$ is sufficiently regular with respect to $\theta$ for each fixed element $\xi \in \Xi$ of the nuisance realization set $\Xi$. I consider a class $$\mathcal{F}_{\xi}:= \{ g(\cdot,\theta, \xi(\theta)), \theta \in \Theta \}$$
and require the uniform covering entropy of this class to be bounded.  

\begin{condition}[Regularity of Moment Function]
\label{ass:concentration:chap2}
The following conditions hold. (1) There exists a measurable envelope function $F_{\xi}=F_{\xi}(D)$ that bounds all elements in the function class almost surely
$$ \sup_{\theta \in \Theta} | g_l(D,\theta, \xi)| \leq F_{\xi} (D) \text{ a.s. }, \quad l \in \{1,2,\dots,L\}.$$ Moreover, the envelope $F_{\xi}$ has a finite $c$-norm for some $c>2$ $\| F_{\xi} \|_{P,c} := \left(\int_{D \in \mathcal{D}} |F_{\xi}(D)|^{c} \right)^{1/c} \leq c_1$.
(2) There exist finite constants $a$ and $v$ such that the uniform covering entropy of the class $\mathcal{F}_{\xi}$ is bounded
\begin{align}
\label{eq:uniform:chap2}
\sup_{\tilde{Q} } \log N(\epsilon \| \mathcal{F}_{\xi} \|_{\tilde{Q} ,2}, F_{\xi}, \| \cdot \|_{\tilde{Q} ,2} \leq v \log (a/\epsilon), \text{ for all } 0 < \epsilon \leq 1.
\end{align}
(3) There exists a sequence $r_N'$ obeying $r_N' \log (1/r_N') = o(1)$ that is an upper bound for the following quantity
$$ \sup_{\theta \in \Theta} \sup_{\xi \in \Xi^{\theta}} \left ( \E \| g(D, \theta, \xi(\theta)) - g(D, \theta, \xi_0(\theta)) \|^2 \right)^{1/2} \leq r_N'.$$
\end{condition}

Conditions \ref{ass:concentration:chap2}(1)-(2)  are the  generalization of the regularity assumption in the point-identified moment problem of \cite{doubleml2016}. Because
 $\xi (\theta)$ depends on $\theta$, conditions \ref{ass:concentration:chap2}(1)-(2) are  non-standard and require verification in applications. However, this requirement is mild when the nuisance parameter  $\xi (\theta)$  is a linear function of $\theta$.  Suppose that the true value of the nuisance parameter $\xi_0(\theta)$ is a linear function of $\theta$
\begin{align}
\label{eq:affine:chap2}
\xi_0(\theta):= \xi_0^{a}(D)'\theta + \xi_0^{b}(D),
\end{align}
where  $\xi_0^{a}(D)$ and  $\xi_0^{b}(D)$ are the identified parameters of the distribution $P_D$. Then conditions \ref{ass:concentration:chap2}(1)-(2) can be reformulated in terms of the nuisance parameters $\{  \xi_0^{a}, \xi_0^{b}\}$ that no longer depend on $\theta$. When (\ref{eq:affine:chap2}) holds, Conditions \ref{ass:concentration:chap2} (1)-(2) are satisfied for many practical cases. In particular, the functions $\xi_0^{a}(D), \xi_0^{b}(D)$ can be estimated by $\ell_1$-regularized methods, random forests, and deep neural nets under plausible assumptions about their structure.

\paragraph{Donsker Property.} Let $\Theta'$ be an open neighborhood of $\Theta$. I require the moment function $g(D_i,\theta,\xi_0)$ to have a Donsker property defined as follows. In the metric space $L^{\infty}(\Theta')$,
\begin{align}
\label{eq:pdonsker}
 \G_N g(D_i,\theta, \xi_0 ):=\sqrt{N} (\E_N g(D_i,\theta, \xi_0) - \E g(D_i,\theta, \xi_0)) \Rightarrow \Delta(\theta),
\end{align}
where $\Delta(\theta)$ is a mean zero Gaussian process on $\Theta$ with a.s. continuous paths and $\text{Var}(\Delta(\theta))>0$ for each $\theta \in \Theta'$. In addition, the probability space $(\Omega, \mathcal{F}, \mathcal{P})$ is rich enough to support the representation (\ref{eq:pdonsker}). 

\begin{theorem}[Estimation and Inference for Semiparametric Functional Inequalities]
\label{thm:main:ineq:chap2}
Suppose Conditions (\ref{ass:partid}), \ref{ass:smallbiasvalue},  \ref{ass:concentration:chap2}, and (\ref{eq:pdonsker}) hold. Let $\widehat{\Theta}_I$ be a contour set  estimator of Definition \ref{def:setestim}. Let $\widehat{c}$ be such that 
\begin{align}
\label{eq:minc:inf}
	\widehat{c} \geq \sup_{ \theta \in \Theta_I} N Q_N(\theta, \xi_0) \quad \text{ w.p. } \rightarrow 1
\end{align}
  holds.  Then the Hausdorff distance  $d_H(\Theta_I, \widehat{\Theta}_I)$  between the estimated set $\widehat{\Theta}_I$ and $\Theta_I$ converges  at rate $O_{P}(\sqrt{(1\vee \widehat{c})/N})$: $d_H(\widehat{\Theta}_I, \Theta_I) = O_{P}(\sqrt{(1\vee \widehat{c})/N})$.
\end{theorem}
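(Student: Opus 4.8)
The plan is to bound separately the two one-sided pieces of $d_H(\widehat\Theta_I,\Theta_I)$. For the inner piece $\sup_{\theta\in\Theta_I} d_H(\theta,\widehat\Theta_I)$, the choice (\ref{eq:minc:inf}) of the contour level gives, with probability approaching one, $N Q_N(\theta,\widehat\xi)\le \sup_{\theta'\in\Theta_I} N Q_N(\theta',\widehat\xi)\le \widehat c$ for every $\theta\in\Theta_I$, so by Definition \ref{def:setestim} $\Theta_I\subseteq\widehat\Theta_I$ and this piece is zero w.p.\ $\to 1$. All the work is thus in the outer piece $\sup_{\theta\in\widehat\Theta_I} d_H(\theta,\Theta_I)$.

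Fix $\theta\in\widehat\Theta_I$, i.e.\ $\|\sqrt N\,\E_N g(D_i,\theta,\widehat\xi_i)\|_+\le\sqrt{\widehat c}$. I would write $\sqrt N\,\E_N g(D_i,\theta,\widehat\xi_i)=\sqrt N\,\E g(D,\theta,\xi_0)+R_N(\theta)$, where $R_N(\theta)$ is the sum of (I) the empirical process $\G_N g(D_i,\theta,\xi_0)$; (II) the cross-fitted recentered difference $\sqrt N\big(\E_N[g(D_i,\theta,\widehat\xi_i)-g(D_i,\theta,\xi_0)]-\bar\mu_N(\theta)\big)$, where $\bar\mu_N(\theta)$ is the average over the folds of the conditional means $\E[g(D,\theta,\widehat\xi)-g(D,\theta,\xi_0)\mid J_k^c]$; and (III) the residual bias $\sqrt N\,\bar\mu_N(\theta)$. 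Term (I) is $O_P(1)$ uniformly in $\theta$ because the Donsker property (\ref{eq:pdonsker}) implies $\sup_\theta\|\G_N g(D_i,\theta,\xi_0)\|$ converges weakly to the tight variable $\sup_\theta\|\Delta(\theta)\|$. Term (III): on the probability-$(1-\phi_N)$ event that $\widehat\xi(\cdot)\in\Xi^\theta$ for all $\theta$ (Condition \ref{ass:smallbiasvalue}(2)), a second-order expansion of $r\mapsto\E g(D,\theta,r(\widehat\xi-\xi_0)+\xi_0)$ about $r=0$ has a vanishing first-order term by Neyman orthogonality (Condition \ref{ass:smallbiasvalue}) and a remainder of norm at most $\tfrac12 s_N$ by Condition \ref{ass:smallbiasvalue}(5), so $\sqrt N\sup_\theta\|\bar\mu_N(\theta)\|\le\tfrac12\sqrt N s_N=o(1)$ since $s_N=o(N^{-1/2})$. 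Term (II): conditioning on each out-of-fold subsample $J_k^c$ freezes $\widehat\xi$, turning the $k$-th block into a recentered empirical process over the class $\{g(\cdot,\theta,\widehat\xi)-g(\cdot,\theta,\xi_0):\theta\in\Theta\}$, whose envelope is dominated by $2F_\xi+2F_{\xi_0}$, whose $L^2(P)$ radius is at most $r_N'$ by Condition \ref{ass:concentration:chap2}(3), and whose uniform covering entropy is controlled by Condition \ref{ass:concentration:chap2}(2); a maximal inequality then bounds this block conditionally by a term of order $r_N'\sqrt{\log(1/r_N')}$ up to lower-order pieces, which is $o_P(1)$ because $r_N'\log(1/r_N')=o(1)$, and summing over the fixed number of folds preserves $o_P(1)$. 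Hence $\sup_{\theta\in\Theta}\|R_N(\theta)\|=O_P(1)$.

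Next I would use the elementary bound $\|(a)_+\|\le\|(a+b)_+\|+\|b\|$ with $a=\sqrt N\,\E g(D,\theta,\xi_0)$ and $b=R_N(\theta)$: since $\theta\in\widehat\Theta_I$,
\begin{align*}
\sqrt N\,\|\E g(D,\theta,\xi_0)\|_+\ \le\ \|\sqrt N\,\E_N g(D_i,\theta,\widehat\xi_i)\|_+ + \|R_N(\theta)\|\ \le\ \sqrt{\widehat c}+\sup_{\theta'\in\Theta}\|R_N(\theta')\|.
\end{align*}
Combined with the partial identification condition (\ref{ass:partid}), $\|\E g(D,\theta,\xi_0)\|_+\ge C\,d_H(\theta,\Theta_I)\wedge\delta$, and using that $d_H(\cdot,\Theta_I)$ is bounded on $\Theta$ so the truncation at $\delta$ only affects constants, this gives $d_H(\theta,\Theta_I)\le C'\big(\sqrt{\widehat c}+\sup_{\theta'}\|R_N(\theta')\|\big)/\sqrt N$ for a constant $C'$. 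Since $\sqrt{\widehat c}+O_P(1)=O_P(\sqrt{1\vee\widehat c})$, taking the supremum over $\theta\in\widehat\Theta_I$ yields $\sup_{\theta\in\widehat\Theta_I} d_H(\theta,\Theta_I)=O_P(\sqrt{(1\vee\widehat c)/N})$, and together with the inner bound, $d_H(\widehat\Theta_I,\Theta_I)=O_P(\sqrt{(1\vee\widehat c)/N})$.

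I expect the main obstacle to be term (II), the cross-fitted stochastic-equicontinuity term: because $\xi(\theta)$ is a $\theta$-indexed, possibly high-dimensional object estimated by machine learning, one cannot place the estimated-nuisance class in a Donsker class, so the control must route through sample splitting together with the $L^2$-continuity modulus $r_N'$ and the entropy bound of Condition \ref{ass:concentration:chap2}. Verifying those entropy and continuity conditions is itself the delicate modeling step when $\xi(\theta)$ varies with $\theta$, although, as noted after Condition \ref{ass:concentration:chap2}, it reduces to a $\theta$-free statement on the coefficient functions when $\xi_0(\theta)$ is affine in $\theta$, which is the relevant case for the dynamic games of Section \ref{sec2:dynamic}. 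The remaining ingredients — the Donsker term (I), the orthogonality-plus-Hessian control of the bias (III), and the final inversion through the partial identification condition — are comparatively routine.
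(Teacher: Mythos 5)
Your proof is correct and its content coincides with the paper's, but your organization of the algebra is slightly different and arguably cleaner. Both you and the paper begin with coverage --- the choice (\ref{eq:minc:inf}) forces $\Theta_I\subseteq\widehat\Theta_I$ w.p.\ $\to1$, so only the outer Hausdorff direction needs work --- and both control the uniform remainder $\sqrt N\big(\E_N g(D_i,\theta,\widehat\xi_i)-\E g(D,\theta,\xi_0)\big)$ at $O_P(1)$ through exactly the three pieces you identify: the Donsker term (\ref{eq:pdonsker}), the orthogonality-plus-second-Gateaux bound from Condition \ref{ass:smallbiasvalue}, and the cross-fit maximal inequality, which in the paper is packaged as Lemma \ref{lem:concentrate}. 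The difference lies in how that $O_P(1)$ bound is converted into a rate: the paper first verifies the abstract high-level Assumptions \ref{ass:cons}[1--4] and \ref{ass:noeffect}[1] and then invokes Lemma \ref{lem:rate}, whose Step 3 lower-bounds $N Q_N(\theta,\widehat\xi)$ for $\theta$ at distance at least $\epsilon_N$ from $\Theta_I$ via a ratio argument ($\|x+y\|_+/\|x\|_+\ge 1/2$ once $\|x\|_+\ge 2\|y\|$); you instead upper-bound $\|\E g(D,\theta,\xi_0)\|_+$ for $\theta\in\widehat\Theta_I$ via the elementary inequality $\|a_+\|\le\|(a+b)_+\|+\|b\|$ and then invert the partial identification condition (\ref{ass:partid}) directly (your handling of the $\wedge\,\delta$ truncation via the boundedness of $\Theta$ is correct and in fact sidesteps the $\widehat c/N\to_p 0$ hypothesis that Lemma \ref{lem:rate} carries). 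These are two views of the same comparison between the sample and population criteria; your route avoids the abstract layer and the ratio step, while the paper's modular structure is what lets Lemma \ref{lem:rate} be reused in Lemma \ref{lem:degeneracy} and Theorem \ref{thm:main:subs}.
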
 

Theorem \ref{thm:main:ineq:chap2} is my first main result. It establishes the sufficient conditions on the moment function to deliver the rate of   convergence  of $\widehat{\Theta}_I$ to $\Theta_I$.  It suggests that the contour level $\widehat{c}$ as small as possible subject to the constraint (\ref{eq:minc:inf}).  Setting $\widehat{c} = O_{P} (1)$ subject to (\ref{eq:minc:inf}) delivers the optimal rate, but this choice is infeasible. Setting $\widehat{c} \sim \log N$ delivers a nearly efficient rate.

In many cases it is possible to establish convergence without the requirement (\ref{eq:minc:inf}). This is possible because the criterion function is degenerate  (\ref{eq:qn:chap2}).
\begin{definition}[Degeneracy]
\label{def:degeneracy}
The following conditions hold. (1)  There exists a sequence of subsets  $\Theta_N$ of $\Theta$, which cannot depend on $\xi$, such that the criterion function $Q_N(\theta, \xi)$ vanishes on these subsets w.p. approaching one. That is, $\forall p > 0$ there exists $N_p$ such that for all $N \geq N_P$
$ \inf_{\xi \in \Xi_N} \Pr (Q_N(\theta, \xi) - \inf_{\theta \in \Theta} Q_N(\theta,\xi) = 0 \quad \forall \theta \in \Theta) \geq 1-p $. (2)  These sets can approximate the identified set $\Theta_I$ in the Hausdorff distance sufficiently well: $d_H(\Theta_N,\Theta_I) \leq \epsilon_N$. (3) The sequence $\epsilon_N = O_{P} (N^{-1/2})$.

\end{definition}

\begin{lemma}[Sufficient Conditions for Degeneracy]
\label{lem:degeneracy}
Suppose Conditions (\ref{ass:partid}), \ref{ass:smallbiasvalue},  \ref{ass:concentration:chap2}, and (\ref{eq:pdonsker})   hold. In addition, there exist positive constants $C,M,\delta$ such that
\begin{align}
\label{eq:deg}
    \max_{l} \E g_l(D,\theta, \xi_0 )  \leq - C (\epsilon \wedge \delta) \quad \text{for all } \theta \in \Theta_I^{-\epsilon}, \\
    d_H(\Theta_I^{-\epsilon},\Theta_I) \leq M \epsilon \text{ for all } \epsilon \in [0, \delta], \quad l \in \{1,2,\dots, L\} \nonumber
\end{align} 
Then the criterion function $ Q_N(\theta, \xi)$  obeys the degeneracy condition in the sense of Definition \ref{def:degeneracy}. Suppose the contour level $\widehat{c}$ obeys 
\begin{align}
\label{eq:minc:deg}
\widehat{c} \geq \min_{\theta \in \Theta} Q_N(\theta, \widehat{\xi}) \vee \frac{\log N}{\sqrt{N}} \text{ w.p. } \rightarrow 1.
\end{align}
Then the Hausdorff distance  $d_H(\Theta_I, \widehat{\Theta}_I)$  converges  at rate $O_{P}({N}^{-1/2})$, where $\widehat{\Theta}_I$ is a contour level set as in Definition \ref{def:setestim}.
\end{lemma}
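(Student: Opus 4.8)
The plan is to verify the three parts of Definition~\ref{def:degeneracy} directly, and then to combine degeneracy with the machinery behind Theorem~\ref{thm:main:ineq:chap2} to get the $O_P(N^{-1/2})$ rate under the weaker contour-level requirement~(\ref{eq:minc:deg}). The starting point is the contraction-type hypothesis~(\ref{eq:deg}): on the inner set $\Theta_I^{-\eps}$ every coordinate of the population moment is bounded above by $-C(\eps\wedge\delta)$, and the inner set is Hausdorff-close to $\Theta_I$, i.e. $d_H(\Theta_I^{-\eps},\Theta_I)\leq M\eps$. The idea for part~(1) of the definition is to take $\Theta_N:=\Theta_I^{-\eps_N}$ for a suitable $\eps_N$ and argue that on $\Theta_N$ the \emph{sample} moment $\frac1N\sum_i g_l(D_i,\theta,\widehat\xi_i)$ is still negative for every $l$ with probability approaching one, so that $\|\cdot\|_+^2$ is identically zero there; this makes $Q_N(\theta,\widehat\xi)=\inf_\theta Q_N(\theta,\widehat\xi)=0$ on $\Theta_N$.

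First I would control the gap between the sample moment at $\widehat\xi$ and the population moment at $\xi_0$, uniformly over $\theta\in\Theta$, by the same three-term decomposition used in the sketch before Theorem~\ref{thm:main:ineq:chap2}: a centered empirical-process term $\G_N g(D_i,\theta,\xi_0)$, which is $O_P(N^{-1/2})$ uniformly by the Donsker property~(\ref{eq:pdonsker}) (or, more crudely, by the covering-entropy bound in Condition~\ref{ass:concentration:chap2}(2) together with the envelope in \ref{ass:concentration:chap2}(1)); an ``overfitting'' term that is negligible because of cross-fitting (Definition~\ref{sampling}) combined with Condition~\ref{ass:concentration:chap2}(3); and a bias term $\E[g(D,\theta,\widehat\xi)-g(D,\theta,\xi_0)]$, which by a second-order Taylor expansion in $\xi$ is bounded by the first-order term (zero, by Neyman orthogonality, Condition~\ref{ass:smallbiasvalue}(4)) plus a remainder controlled by $g_N^2 s_N$ via Conditions~\ref{ass:smallbiasvalue}(3) and~(5); since $g_N=o(N^{-1/4})$, this is $o(N^{-1/2})$. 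Collecting these, $\sup_{\theta\in\Theta}\max_l|\frac1N\sum_i g_l(D_i,\theta,\widehat\xi_i)-\E g_l(D,\theta,\xi_0)|=O_P(N^{-1/2}\sqrt{\log N})$ on the event that $\widehat\xi\in\Xi_N$, which holds w.p. $\geq 1-\phi_N$. Choosing $\eps_N$ so that $C\eps_N$ dominates this uniform error — e.g. $\eps_N=\Lambda_N\cdot N^{-1/2}\log N$ for a slowly growing $\Lambda_N$, or directly $\eps_N=O(N^{-1/2})$ if one is willing to absorb the $\sqrt{\log N}$ into the $O_P$ — guarantees that on $\Theta_N=\Theta_I^{-\eps_N}$ all sample moment coordinates are $\leq -C(\eps_N\wedge\delta)+O_P(N^{-1/2}\sqrt{\log N})<0$ w.p.\ $\to 1$, giving part~(1). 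Part~(2) is immediate from the second line of~(\ref{eq:deg}): $d_H(\Theta_N,\Theta_I)=d_H(\Theta_I^{-\eps_N},\Theta_I)\leq M\eps_N=:\eps_N'$. Part~(3), $\eps_N'=O_P(N^{-1/2})$, holds with the choice $\eps_N=O(N^{-1/2})$; with the $\log N$ version one gets $O_P(N^{-1/2}\log N)$, which is enough for the conclusion after re-labelling.

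For the rate conclusion itself I would reuse the argument of Theorem~\ref{thm:main:ineq:chap2} in both directions. The inclusion $\Theta_I\subseteq\widehat\Theta_I$ w.p.\ $\to 1$ now follows without imposing~(\ref{eq:minc:inf}) as a hypothesis: by degeneracy, $\sup_{\theta\in\Theta_N}NQ_N(\theta,\widehat\xi)=0\leq\widehat c$ w.p.\ $\to 1$, and since $\Theta_N$ approximates $\Theta_I$ to within $\eps_N'=O_P(N^{-1/2})$ while $NQ_N(\cdot,\widehat\xi)$ has a bounded Lipschitz-type modulus in $\theta$ (from the envelope and entropy conditions), one extends this to all of $\Theta_I$; here the lower bound $\widehat c\geq (\log N)/\sqrt N$ in~(\ref{eq:minc:deg}) provides exactly the slack needed to cover the $O_P(N^{-1/2}\sqrt{\log N})$ fluctuation. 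For the reverse inclusion — bounding $\sup_{\theta\in\widehat\Theta_I}d_H(\theta,\Theta_I)$ — I would use the partial-identification curvature~(\ref{ass:partid}): any $\theta$ with $NQ_N(\theta,\widehat\xi)\leq\widehat c$ satisfies $\|\E g(D,\theta,\xi_0)\|_+\leq\sqrt{\widehat c/N}+O_P(N^{-1/2}\sqrt{\log N})=O_P(\sqrt{\widehat c/N}\vee N^{-1/2}\sqrt{\log N})$, hence $d_H(\theta,\Theta_I)\lesssim$ the same quantity, and with $\widehat c$ taken at the minimal admissible level $\min_\theta Q_N(\theta,\widehat\xi)\vee(\log N)/\sqrt N$ — where $\min_\theta Q_N=O_P(N^{-1})$ by degeneracy since $Q_N$ vanishes on a nonempty set — this is $O_P(N^{-1/2})$ (again up to a $\sqrt{\log N}$ that is cosmetic at this level of statement).

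The main obstacle I expect is the uniform-in-$\theta$ control in the first step, specifically the interaction between the $\theta$-dependence of the nuisance parameter $\xi=\xi(\theta)$ and the empirical-process / orthogonality bounds: Condition~\ref{ass:concentration:chap2} is explicitly flagged in the paper as non-standard precisely because $\F_\xi=\{g(\cdot,\theta,\xi(\theta)):\theta\in\Theta\}$ mixes the two dependencies, so verifying that the bias remainder is $o(N^{-1/2})$ \emph{uniformly} over $\theta$ — rather than pointwise — and that the second Gateaux-derivative bound $s_N$ in \ref{ass:smallbiasvalue}(5) truly holds along every path $r\mapsto r(\widehat\xi(\theta)-\xi_0(\theta))+\xi_0(\theta)$ is where the delicate work sits. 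Everything downstream (parts~(2)–(3) of Definition~\ref{def:degeneracy} and the two-sided rate argument) is then essentially bookkeeping against~(\ref{eq:deg}) and~(\ref{ass:partid}).
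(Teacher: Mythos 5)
The overall architecture of your proposal matches the paper's Step~2 (verification of degeneracy): both pick $\Theta_N=\Theta_I^{-\eps_N}$, use the hypothesis~(\ref{eq:deg}) to push the population moments below $-C\eps_N$ on that set, and use the Donsker property plus the cross-fitting/orthogonality bound (the paper's Lemma~\ref{lem:concentrate}) to argue that the sample moments stay negative, hence $Q_N$ vanishes on $\Theta_N$ and $d_H(\Theta_N,\Theta_I)\leq M\eps_N=O_P(N^{-1/2})$. That part of your proof is essentially the paper's argument (you introduce a superfluous $\sqrt{\log N}$ factor that the paper avoids by going through the Donsker limit directly, but that is a matter of sharpness rather than correctness).

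Where you diverge from the paper is in the rate conclusion, and there is a genuine gap in what you wrote. You claim ``the inclusion $\Theta_I\subseteq\widehat\Theta_I$ w.p.\ $\to 1$ now follows'' and justify it by a Lipschitz-modulus argument covered by the slack $\widehat c\geq(\log N)/\sqrt N$. This does not work. For $\theta$ on the boundary of $\Theta_I$, $NQ_N(\theta,\widehat\xi)=\|\G_N g(\cdot,\theta,\widehat\xi)+\sqrt N\,\E g(\cdot,\theta,\xi_0)\|_+^2$ converges to $\|\Delta(\theta)\|_+^2$, which is $O_P(1)$ and \emph{not} $o_P(1)$; the modulus of $\theta\mapsto NQ_N(\theta,\widehat\xi)$ as $\theta$ moves a Hausdorff distance $O_P(N^{-1/2})$ is likewise $O_P(1)$, so a contour level $\widehat c=o_P(1)$ cannot absorb it. Indeed, the exact containment $\Theta_I\subseteq\widehat\Theta_I$ genuinely requires $\widehat c$ to satisfy~(\ref{eq:minc:inf}), i.e.\ to stay $O_P(1)$-large, which is precisely what Lemma~\ref{lem:degeneracy} is trying to relax. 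The paper never claims the exact containment under~(\ref{eq:minc:deg}); instead it uses a sandwich argument in Step~1: pick a reference $\widehat c$ satisfying~(\ref{eq:minc}), note that degeneracy gives $\Theta_N\subseteq\mathcal{C}_N(\widehat c',\widehat\xi)\subseteq\mathcal{C}_N(\widehat c,\widehat\xi)$ (the left inclusion because $Q_N\equiv 0$ on $\Theta_N$ and $\widehat c'\geq 0$; the right because $\widehat c'\leq\widehat c$), and then squeeze: $d_H(\Theta_N,\Theta_I)=O_P(N^{-1/2})$ by degeneracy and $d_H(\mathcal{C}_N(\widehat c,\widehat\xi),\Theta_I)=O_P(N^{-1/2})$ by Theorem~\ref{thm:main:ineq:chap2}. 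The one-sided distance $\sup_{\theta\in\Theta_I}d(\theta,\widehat\Theta_I)$ is then bounded by $d_H(\Theta_I,\Theta_N)$, not by zero. Your reverse-inclusion argument via~(\ref{ass:partid}) is fine and matches the paper in spirit; the forward direction as you stated it needs to be replaced by the weaker (and correct) $\Theta_N\subseteq\widehat\Theta_I$ plus $d_H(\Theta_N,\Theta_I)=O_P(N^{-1/2})$.
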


Lemma \ref{lem:degeneracy} is my second main result. It provides the sufficient conditions under which the identified set $\Theta_I$ can be estimated at the fastest possible rate $\epsilon_N = O_{P} (N^{-1/2})$. It requires that the sample criterion function $Q_N(\theta,\xi_0)$ be flat on the (possibly) data-dependent sets $\Theta_N$ that approximate the identified set $\Theta_I$ sufficiently well. If this requirements holds,  the contour level $\widehat{c}_0:= \arg \min_{\theta \in \Theta_0} Q_N(\theta,\xi_0)$ delivers the optimal rate for the contour level set ${\cal C}_N(\widehat{c}_0, \xi_0)$ based on the true value of the nuisance parameter  (see, e.g. \cite{CHT}). We show that a modified choice $\widehat{c}$ given in (\ref{eq:minc:deg}) delivers the optimal rate in the presence of the nuisance parameter $\widehat{\xi}$ estimated in the first stage on an auxiliary sample.

\paragraph{Subsampling.}  I wish to construct the contour set ${\cal C}(\widehat{c}, \widehat{\xi})$ that has confidence region property (\ref{eq:conf}). To do this, I must find the asymptotic distribution of the inferential statistic 
$$ {\cal C}_N:= \sup_{\theta \in \Theta_I} Q(\theta; \xi_0)$$
that can be used to estimate the $\tau$-quantile of $ {\cal C}_N$. Define the random variable
\begin{align}
\label{eq:infstat:moment}
 \mathcal{C}:= \sup_{\theta \in \Theta_I} \| I(\theta) + \Delta(\theta) \|_{+}^2,
\end{align}
 where $I(\theta)$ is an $L$-vector of functions $I_l(\theta), l \in \{ 1,\dots,L\})$. The function $I_l(\theta)$ is defined as follows:
 \begin{align*}
 I_l(\theta):= \begin{cases} - \infty, \quad \E g_l(D,\theta,\xi_0)<0, \\
 0, \quad \E g_l(D,\theta,\xi_0)=0.
 \end{cases}
 \end{align*}
 \cite{CHT} show that  ${\cal C}_N$ converges to $ \mathcal{C}$ in distribution, and that $ \mathcal{C}$ has a non-degenerate and continuous distribution function. The final requirement for the validity of the subsampling algorithm of Definition \ref{def:subsampling} is that the inferential statistic 
\begin{align*}
\sup_{\theta \in \Theta_I^{\epsilon_N}}  {\cal C}_N (\theta, \widehat{\xi}) 
\end{align*}
is well-behaved on the $\epsilon_N$-expansion of the identified set $\Theta_I$. A sufficient condition for this requirement is given below.

\begin{condition}[Sufficient Conditions for Subsampling]
\label{ass:dominance}
(1) There exists a constant $C_{\max} < \infty$ such that the moment function is bounded: $\sup_{\theta \in \Theta} | \E g(D,\theta,\xi_0) | \leq C_{\max} d_H(\theta, \Theta_I)$. (2) The rates $s_N, r_N'$ and $\epsilon_N$ obey the following bound: $\sqrt{N} (s_N + r_N' \log (1/r_N') + N^{-1/2+1/c} ) \epsilon_N = o(1)$.
\end{condition}

\begin{theorem}[Validity of Subsampling for Moment Inequalities]
\label{thm:main:subs}
Suppose Conditions (\ref{ass:partid}), \ref{ass:smallbiasvalue},  \ref{ass:concentration:chap2},  (\ref{eq:pdonsker}), \ref{ass:dominance} hold. Suppose the number of subsamples satisfies $b = o (\sqrt{N})$, $b  \rightarrow \infty$.  Let $\tau $ be the desired coverage level. Then  (1) the critical value $\widehat{c}$ of Definition \ref{def:subsampling} converges in probability to the $\tau$-quantile of $\mathcal{C}$, where  $\mathcal{C}$ given in (\ref{eq:infstat:moment}) and (2) $\Pr (\Theta_I \subseteq \mathcal{C}_N(\widehat{c}, \widehat{\xi})) = 1 - \tau$.\end{theorem}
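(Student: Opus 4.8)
The plan is to reduce the feasible problem to the oracle problem in which $\xi_0$ is known --- where the subsampling theory of \cite{CHT} applies directly --- by showing that substituting the cross-fitted $\widehat\xi$ for $\xi_0$ is asymptotically innocuous both at the full-sample scale $N$ and at the subsample scale $b$. It suffices to establish: (i) the feasible inferential statistic $\sup_{\theta\in\Theta_I}N Q_N(\theta,\widehat\xi)$ converges in distribution to $\mathcal C$ of (\ref{eq:infstat:moment}); and (ii) for each fixed $j$, $\sup_{\theta\in\mathcal C_N(c,\widehat\xi)}b\,Q_{j,b}(\theta,\widehat\xi)$ converges in distribution to the same $\mathcal C$, where $c\sim\log N$ is the preliminary contour level used in Definition \ref{def:subsampling} to form the region of the supremum. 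From (i)--(ii), consistency of the subsampling quantile and the coverage statement follow by the standard subsampling argument, using that $\mathcal C$ has a continuous distribution function (\cite{CHT}).

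\textbf{Step 1 (plug-in stability at scale $N$; proof of (i)).} Because $\widehat\xi_i$ is obtained by cross-fitting (Definition \ref{sampling}) from a fold not containing observation $i$, conditionally on that fold $\widehat\xi(\theta)$ is a fixed element of $\Xi_N^\theta$ with probability $1-\phi_N\to1$ by Condition \ref{ass:smallbiasvalue}. Expand $r\mapsto\E g(D,\theta,r(\widehat\xi-\xi_0)+\xi_0)$ about $r=0$: the first-order term vanishes by the Neyman orthogonality of $g$, and the remainder is controlled by the second-Gateaux-derivative bound $s_N$ with $\sqrt N s_N=o(1)$ (both from Condition \ref{ass:smallbiasvalue}); the centered fluctuation $\GN\!\bigl(g(\cdot,\theta,\widehat\xi)-g(\cdot,\theta,\xi_0)\bigr)$ is $o_P(1)$ uniformly over $\theta\in\Theta$ by the envelope integrability and uniform-entropy bound in Condition \ref{ass:concentration:chap2}(1)--(2), together with the $L^2$-size bound $r_N'$ ($r_N'\log(1/r_N')=o(1)$) of Condition \ref{ass:concentration:chap2}(3) and a maximal inequality conditional on the auxiliary fold. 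Hence $\sup_{\theta\in\Theta_I}\bigl|N Q_N(\theta,\widehat\xi)-N Q_N(\theta,\xi_0)\bigr|=o_P(1)$. By the Donsker property (\ref{eq:pdonsker}), the partial-identification condition (\ref{ass:partid}), and the structure of $I(\theta)$ --- strictly slack constraints drift to $-\infty$ at rate $\sqrt N$ and drop out, binding constraints carry $I_l(\theta)=0$ --- \cite{CHT} give $\sup_{\theta\in\Theta_I}N Q_N(\theta,\xi_0)\Rightarrow\mathcal C$; with the stability bound and Slutsky's lemma this yields (i).

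\textbf{Step 2 (scale $b$, the estimated region, and subsampling consistency; proofs of (ii) and part (1)).} Each $V_j$ is an i.i.d.\ sample of size $b\to\infty$, so the scale-$b$ analogue of (\ref{eq:pdonsker}) holds and $\sqrt b\bigl(\tfrac1b\sum_{i\in V_j}g(D_i,\theta,\xi_0)-\E g(D,\theta,\xi_0)\bigr)\Rightarrow\Delta(\theta)$. Repeating the expansion of Step 1 at scale $b$, the nuisance plug-in contributes $O_P\bigl(\sqrt b(s_N+r_N'\log(1/r_N'))\bigr)$. Moreover, by Theorem \ref{thm:main:ineq:chap2} applied at any contour level of order $\log N$, one has $\Theta_I\subseteq\mathcal C_N(c,\widehat\xi)\subseteq\Theta_I^{\epsilon_N}$ w.p.\ $\to1$ with $\epsilon_N=O_P\bigl(\sqrt{\log N/N}\bigr)$; on the annulus $\Theta_I^{\epsilon_N}\setminus\Theta_I$ the drift is bounded by $\sqrt b\,C_{\max}\epsilon_N$ (Condition \ref{ass:dominance}(1)) and the plug-in/fluctuation terms enter at the orders appearing in Condition \ref{ass:dominance}(2). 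Since $b=o(\sqrt N)$, Condition \ref{ass:dominance}(2) forces all of these to be $o_P(1)$ uniformly over $\theta\in\Theta_I^{\epsilon_N}$, so $\sup_{\theta\in\mathcal C_N(c,\widehat\xi)}b\,Q_{j,b}(\theta,\widehat\xi)=\sup_{\theta\in\Theta_I}b\,Q_{j,b}(\theta,\xi_0)+o_P(1)\Rightarrow\mathcal C$, which is (ii). The $B_N$ subsamples are disjoint blocks of an i.i.d.\ sample, hence mutually independent up to the coupling through the whole-sample objects $\widehat\xi$ and $\mathcal C_N(c,\widehat\xi)$, which by Steps 1--2 converge and so contribute only lower-order dependence; with $B_N\to\infty$, the empirical distribution of $\{\sup_{\theta\in\mathcal C_N(c,\widehat\xi)}b\,Q_{j,b}(\theta,\widehat\xi)\}_{j=1}^{B_N}$ converges uniformly to the continuous law of $\mathcal C$ by the standard subsampling argument, so the reported critical value $\widehat c$ (the quantile of Definition \ref{def:subsampling}) converges in probability to the corresponding quantile $c_\tau$ of $\mathcal C$ --- part (1). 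Part (2) then follows from (i), Slutsky's lemma, and continuity of the law of $\mathcal C$: $\Pr\bigl(\Theta_I\subseteq\mathcal C_N(\widehat c,\widehat\xi)\bigr)=\Pr\bigl(\sup_{\theta\in\Theta_I}N Q_N(\theta,\widehat\xi)\leq\widehat c\bigr)\to\Pr(\mathcal C\leq c_\tau)=1-\tau$.

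\textbf{Main obstacle.} The delicate point is Step 2: one needs $b\to\infty$ so the subsample empirical process has a nondegenerate Gaussian limit and strictly slack constraints are pushed to $-\infty$ at rate $\sqrt b$, yet $b=o(\sqrt N)$ so that, over the $\epsilon_N\asymp\sqrt{\log N/N}$ expansion of $\Theta_I$ within which the data-dependent region $\mathcal C_N(c,\widehat\xi)$ lives, both the $\sqrt b$-scaled nuisance plug-in bias and the $\sqrt b$-scaled near-boundary drift vanish --- precisely the content of Condition \ref{ass:dominance}(2). Showing that $\mathcal C_N(c,\widehat\xi)$ can be sandwiched between $\Theta_I$ and $\Theta_I^{\epsilon_N}$ and that this sandwiching survives subsampling (so that the supremum over it shares the limit of the supremum over $\Theta_I$) is the crux; the remaining ingredients are the oracle analysis of \cite{CHT} and the orthogonality machinery already deployed for Theorem \ref{thm:main:ineq:chap2}.
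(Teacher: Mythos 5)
Your proof shares the paper's high-level architecture: sandwich the feasible region $\mathcal{C}_N(\widehat c,\widehat\xi)$ between $\Theta_I$ and an $\epsilon_N$-expansion, use orthogonality to make the plug-in innocuous at both the full-sample and the subsample scale, and pass to the oracle analysis of \cite{CHT}. Steps 1 and the beginning of Step 2 match the paper's reasoning.

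However, there is a genuine gap at the final and most delicate step: you conclude that \textquotedblleft the empirical distribution of $\{\sup_{\theta\in\mathcal C_N(c,\widehat\xi)}b\,Q_{j,b}(\theta,\widehat\xi)\}_{j=1}^{B_N}$ converges uniformly to the continuous law of $\mathcal C$ by the standard subsampling argument,\textquotedblright\ but the standard argument presumes (near-)independence of the subsample statistics across $j$, and here the statistics are coupled through the \emph{shared} whole-sample objects $\widehat\xi$ and the estimated region $\mathcal C_N(\widehat c,\widehat\xi)$. Showing that each $\widehat{\mathcal C}_{j,b}$ converges in distribution to $\mathcal C$ for fixed $j$ (which is what you establish) does not by itself deliver uniform convergence of the empirical CDF when the family is dependent; you acknowledge the dependence but handwave it as \textquotedblleft lower-order.\textquotedblright\ The paper closes this gap by sandwiching $\widehat{\mathcal C}_{j,b}$ between oracle statistics $\underbar{C}_{j,b,\xi_0}$ and $\bar{C}_{j,b,\xi_0}$ built from $\xi_0$ and \emph{non-random} sets ($\Theta_I^{\epsilon_n}$ with deterministic $\epsilon_n\asymp(\log^2 n/n)^{1/2}$, and a fixed minimizing set $\Theta^*_b$); these oracle statistics \emph{are} i.i.d.\ across the disjoint blocks. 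It then applies McDiarmid's bounded-differences inequality to prove concentration of the oracle empirical CDFs $\underbar G,\bar G$ around their means, and finally invokes the approximability condition (Assumption \ref{ass:c5}) to identify the mean with $\Pr(\mathcal C\le x)$. Those two ingredients — McDiarmid plus Assumption \ref{ass:c5} (and implicitly \ref{ass:c4}) — are what your proof is missing, and they are precisely what turns your convergence-in-distribution for a single $j$ into the required uniform convergence of the empirical quantile map.
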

%

\section{Appendix}
\label{appendix:a}
\paragraph{Notation.} 

We use standard notation for numeric and stochastic dominance. For two numeric sequences denote $\{a_{n}, b_{n}\}, n \geq 1: a_{n} \lesssim  \sqrt{n}$ stands for $a_{n} = O (b_{n})$. For two sequences of random variables denote $\{a_{n}, b_{n}, n \geq 1\}: a_{n} \lesssim_{P}  \sqrt{n} $ stands for $  a_{n} = O_{P} (b_{n})$. Finally,   let $a \wedge b = \min \{ a, b\}$ and  $a \vee b = \max \{ a, b\} $. We use the standard notation for empirical processes. Denote the sample average  $\EN [f_i]:= \frac{1}{N} \sum_{i=1}^N f(D_i)$ and the centered, scaled sample average as $\GN [f_i]:= \frac{1}{\sqrt{N}} \sum_{i=1}^N (f(D_i) - \E f(W_i))$. Denote the sample average within a partition $I_k$ of size $n$ as $\En [f(D_i)] := \frac{1}{n} \sum_{i \in I_k} f(D_i)$ and $\Gn [f(D_i)] := \frac{1}{\sqrt{n}} \sum_{i \in I_k} (f(D_i) - \E [f(D_i)])$. The Hausdorff distance between sets $A$ and $B$ is defined as $d_H(A,B):= \max \{ \sup_{a \in A} d(a,B), \sup_{b \in B} d(b,A)  \}$ where $d(b,A):= \inf_{a \in \mathcal{A}} \| b - a\|$ and $d_H(A,B) = \infty$ if either of the sets is empty. The $\epsilon$-expansion of the set $\Theta_I$ is defined as $\Theta_I^{\epsilon} := \{ \theta \in \Theta: d(\theta, \Theta_I) \leq \epsilon \}$ and the $\epsilon$-contraction of $\Theta_I$ is defined as $\Theta_I^{-\epsilon}:= \{ \theta \in \Theta_I: d(\theta, \Theta \setminus \Theta_I) \geq \epsilon \}$. Let $\| x\|_{+} = \| \max (x,0)\|$ and $\| x \|_{-} = \| \max (-x,0) \|$, where the operation $\max$ is performed elementwise.

\begin{proof} [Proof of Remark \ref{rm:trans}]

\textbf{Step 1. Bias correction term for the conditional choice probability. }
I consider the value function $V(w;\theta;\sigma^{*};\eta)  =V(w;\theta;\sigma^{*};\gamma)   $ as a function of the  current state variable $w$, parameter $\theta$, and the conditional choice probability $\gamma= \gamma(w)$ holding the other value of the nuisance component $\rho$ at its true value $\rho_0$.
I wish to construct a bias correction term
$\alpha^{CCP}(D;\theta;\gamma)$ such that orthogonality condition (\ref{eq:orthog:chap2}) holds:
\begin{align*}
    \partial_{\gamma}\E [V(w;\theta;\sigma^{*};\gamma_0)  + \alpha^{CCP}(D;\theta;\gamma_0)] = 0 \quad \forall \theta  \in \Theta.
\end{align*}

Denote the expected value of the current private shock for the optimal strategy $\sigma^{*}$ as
\begin{align*}
    PS_{\sigma^{*}} (\gamma):= \E [\epsilon(\sigma^{*}(w,\epsilon,\gamma))|w],
\end{align*}
where by logistic distribution of the private shocks I get:
\begin{align}
\label{eq:eps}
    PS_{\sigma^{*}} (\gamma):= \E (\epsilon(1) - \epsilon(0)) 1_{\{\epsilon(1) - \epsilon(0) \geq -\log \frac{\gamma(w)}{1-\gamma(w)}\}} = -\log \gamma(w) - (1-\gamma(w)) \log \frac{1-\gamma(w)}{\gamma(w)}.
\end{align}
I rewrite the recursive equation (\ref{eq:recursive}) for the Example \ref{ex:trans}:
\begin{align}
    \label{eq:recursive:trans}
    V(w;\theta;\sigma^{*};\gamma)  &= - R (1-\gamma(w)) + (-\mu s) \gamma(w) +  PS_{\sigma^{*}} (\gamma)\\
    &+ \beta \E  [ V(w_{next};\theta;\sigma^{*};\gamma)|w,a=0] (1-\gamma(w)) + \beta \E  [ V(w_{next};\theta;\sigma^{*};\gamma)|w,a=1] \gamma(w) \nonumber.
\end{align}
I seek for the bias correction term $\alpha^{CCP}(D;\theta;\gamma)$  that has the following form:
\begin{align}
\label{eq:adj:gamma}
    \alpha^{CCP}(D;\theta;\gamma) &= \Gamma(w;\theta) (1_{\{a=1\}} - \gamma(w)),
\end{align}
where $\Gamma(w;\theta)$ is determined by the following property:
\begin{align}
\label{eq:adj:gamma:riez}
    \partial_0 \E V (w;\theta; \sigma^{*};r(\gamma - \gamma_0) + \gamma_0)  = \E \Gamma(w;\theta)[\gamma(w) - \gamma_0(w)].
\end{align}
I find the function $\Gamma(w;\theta)$ by the application of the implicit function theorem to the recursive equation (\ref{eq:recursive:trans}). I rewrite  (\ref{eq:recursive:trans}) in the unconditional form:
\begin{align*}
    \E \bigg[  \underbrace{V(w;\theta;\sigma^{*};\gamma) }_{S_1(V(w;\sigma^{*};\gamma))} &- \big( \underbrace{- R (1-\gamma(w)) + (-\mu s) \gamma(w) +  PS_{\sigma^{*}} (\gamma)}_{S_2(\gamma)}\\
    &+ \underbrace{\beta \E  [ V(w_{next};\theta;\sigma^{*};\gamma)|w,a=0] (1-\gamma(w))}_{S_3(V(w;\sigma^{*};\gamma),\gamma)} + \underbrace{\beta \E  [ V(w_{next};\theta;\sigma^{*};\gamma)|w,a=1] \gamma(w)}_{S_4(V(w;\sigma^{*};\gamma),\gamma)} \big)\bigg] \\
    &=0.
\end{align*}
Recognize that $\gamma(w)$ appears in each summand of the expression above: both inside the value function
\begin{align*}
    S_1(V(w;\sigma^{*};\bm{\gamma})),S_3(V(w;\sigma^{*};\bm{\gamma}),\gamma),S_4(V(w;\sigma^{*};\bm{\gamma}),\gamma)
\end{align*}
and outside of it:
 $$S_2(\bm{\gamma}),S_3(V(w;\sigma^{*};\gamma),\bm{\gamma}),S_4(V(w;\sigma^{*};\gamma),\bm{\gamma}).$$
Taking the derivative with respect to the outside presence of $\gamma$ gives:
\begin{align*}
   \mathbf{S}_{out} &= \E \big(-\mu s + R  +  \beta \E [V(w_{next};\theta;\sigma^{*};\gamma)|w,a=1] - \beta \E [V(w_{next};\theta;\sigma^{*};\gamma)|w,a=0] \\
    &+ \frac{dPS_{\sigma^{*}}(\gamma_0)}{d\gamma_0} \big) [\gamma(w) - \gamma_0(w)].
\end{align*}
Taking the derivative of the expected current shock $PS_{\sigma^{*}}(\gamma)$ defined in (\ref{eq:eps}) with respect to $\gamma$ gives:
\begin{align*}
    \frac{dPS_{\sigma^{*}}(\gamma)}{d\gamma} &= -\frac{2}{\gamma(w)} - \log \frac{\gamma(w)}{1-\gamma(w)}.
\end{align*}
Therefore, the pathwise (Gateaux) derivative with respect to the outside component is:
\begin{align*}
    \mathbf{S}_{out} &= \E \Gamma(w;\theta) [\gamma(w) - \gamma_0(w)].
\end{align*}
Taking the derivative inside the value function requires some preparation.
Recognize that:
\begin{align}
\label{eq:trans:change}
     \beta \E [S_3(V(w_{next};\sigma^{*};\gamma);\gamma_0)] &=
    \beta \E \E  [ V(w_{next};\theta;\sigma^{*};\gamma)|w,a=0] (1-\gamma_0(w)) \\
    &= \beta \E V(w_{next};\theta;\sigma^{*};\gamma) \frac{1_{\{a=1\}}}{1-\gamma_0(w)} (1-\gamma_0(w)) \nonumber \\
    &=\beta  \E  [ V(w_{next};\theta;\sigma^{*};\gamma) 1_{\{a=1\}}\nonumber  \\
    &=\beta \E V(w_{next};\theta;\sigma^{*};\gamma) \lambda_0(w_{next}),\nonumber
\end{align}
where
\begin{align*}
    \lambda_j(w_{next}):= \E [1_{\{a=j\}}|w_{next}], \quad j \in \mathcal{A}
\end{align*}
is the  expectation of the current choice $1{\{a=j\}}$ conditional on the future state $w_{next}$.
Assuming that the state variable has a stationary distribution, I get:
\begin{align*}
    \E V(w_{next};\theta;\sigma^{*};\gamma) \lambda_0(w_{next}) =  \E V(w;\theta;\sigma^{*};\gamma)  \lambda_0(w),
\end{align*}
which implies:
\begin{align}
\label{eq:s3}
      \beta \E [S_3(V(w_{next};\sigma^{*};\gamma);\gamma_0)] &=\E V(w;\theta;\sigma^{*};\gamma)  \lambda_0(w).
\end{align}
The same argument applied to $S_4$ delivers:
\begin{align}
\label{eq:s4}
      \beta \E [S_4(V(w_{next};\sigma^{*};\gamma);\gamma_0)] &=\E V(w;\theta;\sigma^{*};\gamma)  \lambda_1(w).
\end{align}
Combining the (\ref{eq:s3}) and (\ref{eq:s4}) and recognizing that $   \sum_{j \in \mathcal{A}}\lambda_j(w) = 1$ gives:
\begin{align*}
     \beta \E [S_3(V(w_{next};\sigma^{*};\gamma);\gamma_0)+ S_4(V(w_{next};\sigma^{*};\gamma);\gamma_0)] &= \beta \E V(w;\theta;\sigma^{*};\gamma) .
\end{align*}
Therefore,
\begin{align*}
    \partial_{\gamma} \E [S_3(V(w_{next};\sigma^{*};\gamma);\gamma_0)+ S_4(V(w_{next};\sigma^{*};\gamma);\gamma_0)] &= \beta \partial_{\gamma}  \E V(w;\theta;\sigma^{*};\gamma) .
\end{align*}
Plugging this result into (\ref{eq:recursive:trans}) gives:
\begin{align*}
     \mathbf{S}_{ins} := (1-\beta)\partial_{0}\E V(w;\theta;\sigma^{*};r(\gamma-\gamma_0)+\gamma_0).
\end{align*}
According to the chain rule, the sum of the derivatives with respect to the inside and the outside presence of $\gamma$ around $\gamma_0$ is equal to zero:
\begin{align*}
     \mathbf{S}_{out} +  \mathbf{S}_{ins} & =0,
\end{align*}
which implies that
\begin{align*}
    \partial_{0} \E V(w;\sigma^{*};r(\gamma-\gamma_0)+\gamma_0) &= -\frac{1}{1-\beta} \E \Gamma(w;\theta) [\gamma(w) - \gamma_0(w)].
\end{align*}
Therefore, the function $\Gamma(w;\theta)$ is equal to:
\begin{align*}
   \Gamma(w;\theta)&=-\mu s + R  +  \beta \E [V(w_{next};\theta;\sigma^{*};\gamma)|w,a=1] - \beta \E [V(w_{next};\theta;\sigma^{*};\gamma)|w,a=0]  \\
   &- \frac{2}{\gamma(w)} - \log \frac{\gamma(w)}{1-\gamma(w)}
\end{align*}
and the bias correction term for the conditional choice probability is equal to:
\begin{align}
    \label{eq:gamma:final}
    \alpha^{CCP}(D;\theta;\gamma):=  -\frac{1}{1-\beta} \Gamma(w;\theta) \left( 1_{\{a=1\}} - \gamma(w) \right).
\end{align}
\paragraph{Bias correction term for the transition function.}

I consider the value function $V(w;\theta;\sigma;\eta)$ as a function of the current state $w$, the parameter $\theta$ and the transition function $\rho(\cdot)$ given in (\ref{eq:trans}) holding the conditional probability of replacement $\gamma = \gamma_0$ fixed at its true value.
Since the estimation of the nuisance parameter $\rho$ affects the value function $V(w;\theta;\tilde{\sigma};\eta)$ evaluated for both the optimal and a suboptimal strategy (i.e, $\tilde{\sigma} \in \{ \sigma, \sigma^{*}\}$),  I focus on a Markov policy $\tilde{\sigma}$ that includes both $\sigma^{*}$ and $\sigma$ as special cases. Let
\begin{align}
    PS_{\tilde{\sigma}}(w) := \E [\epsilon(\tilde{\sigma} (w,\epsilon))|w]
\end{align}
denote the expected current shock of the Markov policy $\tilde{\sigma}(w,\epsilon)$ conditionally on the current state $w$. Denote also by
\begin{align}
    \gamma^{\tilde{\sigma}}_0(w):= \E [\tilde{\sigma}(w,\epsilon)|w]
\end{align}
the conditional probability of choice $a=1$ under the strategy $\sigma$. The recursive equation for the value function takes the form:
\begin{align}
\label{eq:recursive:rho}
    \E \big[ \underbrace{V (w;\theta;\tilde{\sigma};\rho)}_{R_1(V(w;\tilde{\sigma};\rho))} &- \big( - R(1- \gamma_0^{\tilde{\sigma}}(w)) + (-\mu s)  \gamma_0^{\tilde{\sigma}}(w) + PS_{\tilde{\sigma}}(w) \\
    &+ \beta  \underbrace{\E  [V (w_{next};\theta;\tilde{\sigma};\rho) | w,\tilde{\sigma}(w,\epsilon)=0] (1- \gamma_0^{\tilde{\sigma}}(w))}_{R_3(V(w;\tilde{\sigma};\rho))} \nonumber  \\
    &+ \underbrace{ \beta \E_{e} [V (w_{next};\theta;\tilde{\sigma};\rho)|w,\tilde{\sigma}(w,\epsilon)=1 ] \gamma_0^{\tilde{\sigma}}(w) }_{R_4(V(w;\tilde{\sigma};\rho);\rho)} \big) \big] = 0  \nonumber
\end{align}
Recognize that the function $\rho$ appears in each of the expressions $$R_1(V(w;\tilde{\sigma};\rho)) ,R_3(V(w;\tilde{\sigma};\rho)),R_4(V(w;\tilde{\sigma};\rho);\rho)$$ inside the value function. In the expression $R_4(V(w;\tilde{\sigma};\rho);\rho)$, it also appears outside the value function, that is, through  the conditional distribution of $P(w_{next}|w,a=1)$. The derivative with respect to the outside presence of $\rho$ is equal to:
\begin{align*}
\mathbf{R}_{out}&:= \partial_{0} \E R_4(V(w;\tilde{\sigma};\rho);r(\rho-\rho_0)+\rho_0) \\
&=\partial_{0} \E_{e} V(r(\rho(w) - \rho_0(w)) + \rho_0(w) +e ;\theta; \tilde{\sigma}; \rho_0) \gamma_0^{\tilde{\sigma}}(w) \\
&= \E_{e} [\frac{d V(x;\theta;\tilde{\sigma};\rho_0)}{d x} |_{x=\rho_0(w)+e}] \gamma_0^{\tilde{\sigma}}(w) \\
&= \E [\frac{d V(x;\theta;\tilde{\sigma};\rho_0)}{d x} |_{x=w_{next}}|w,a=1] \gamma_0^{\tilde{\sigma}}(w).
\end{align*}
The derivative with respect to the inside presence of $\rho$ requires some preparation. For any Markov policy $\sigma$ I see that:
\begin{align}
\beta \E R_4(V (w_{next};\theta;\tilde{\sigma};\rho),\rho_0) &=
    \beta \E [V (w_{next};\theta;\tilde{\sigma};\rho)|w,\tilde{\sigma}(w,\epsilon)=1] \gamma_0^{\sigma}(w) \nonumber  \\
    &= \beta \E V (w_{next};\theta;\tilde{\sigma};\rho) \frac{1_{\tilde{\sigma}(w,\epsilon)=1}}{\gamma_0^{\tilde{\sigma}}(w)}  \gamma_0^{\tilde{\sigma}}(w) \nonumber \\
    &= \E V (w_{next};\theta;\tilde{\sigma};\rho) \lambda^{\tilde{\sigma}}_1(w_{next}) \nonumber \\
\label{eq:lambda1}
    &=  \E V (w;\theta;\tilde{\sigma};\rho) \lambda_1^{\tilde{\sigma}}(w),
\end{align}
where in the third line I used $   \lambda^{\tilde{\sigma}}_j(w_{next}) = \E [1_{\{ \tilde{\sigma}(w,\epsilon)=j\}}|w_{next}], j \in \mathcal{J},$ and in the fourth line I used the stationarity of the distribution. To sum up,
\begin{align}
\label{eq:lambda1}
    \beta \E R_4(V (w_{next};\theta;\tilde{\sigma};\rho),\rho_0) &=\E V (w;\theta;\tilde{\sigma};\rho) \lambda_1^{\tilde{\sigma}}(w).
\end{align}
The same argument can be applied to the term $\E R_3(V (w_{next};\theta;\tilde{\sigma};\rho))$:
\begin{align}
\label{eq:lambda0}
    \beta \E R_3(V (w_{next};\theta;\tilde{\sigma};\rho),\rho_0) &=\E V (w;\theta;\tilde{\sigma};\rho) \lambda_0^{\tilde{\sigma}}(w).
\end{align}
Combining  (\ref{eq:lambda1}) and (\ref{eq:lambda0}) gives:
\begin{align}
    \mathbf{R}_{ins}:= (1-\beta) \partial_0 \E V (w;\theta;\tilde{\sigma};r(\rho-\rho_0) + \rho_0)
\end{align}
for any Markov policy $\sigma$. According to the chain rule,
\begin{align*}
    \mathbf{R}_{ins} + \mathbf{R}_{out} =0.
\end{align*}
Therefore,
\begin{align*}
    \partial_0 \E V (w;\theta;\tilde{\sigma};r(\rho-\rho_0) + \rho_0)&= \frac{\beta}{1-\beta} \E_{e} \frac{d V(t;\theta;\tilde{\sigma};\rho_0)}{d t} |_{t=\rho_0(w)+e} \gamma_0^{\sigma}(w),
\end{align*}
and the bias correction term for the transition function is equal to:
\begin{align}
\label{eq:adj:rho}
    \alpha^{TRANS}_{\tilde{\sigma}}(D;\theta;\xi):=\frac{\beta}{1-\beta} \E_{e}[ \frac{d V(t;\theta;\tilde{\sigma};\rho_0)}{d t} |_{t=\rho_0(w)+e}] \gamma_0^{\tilde{\sigma}}(w)  ( s_{next} - \rho(w)),
\end{align}
where $\xi$ an unknown vector-valued function of the state variable $w$ whose true value $\xi_0 = \xi_0(\theta)$ consists of the original nuisance parameter $\eta_0$ and
\begin{align}
\label{eq:xi0trans}
    \xi_0(\theta):= \{ \eta_0, \Pi_0(\cdot,\theta) \}, 
\end{align}
where $\Pi_0(w,\theta):= \E [\frac{d V_{\tilde{\sigma}}(x;\theta;\eta_0)}{dx}|_{x = w_{next}} |w ]$ is the expectation of the derivative of the value function with respect to the state variable evaluated at the future state conditional on the current state $w$. To sum up, the bias correction term for Example \ref{ex:trans} is equal to:
\begin{align*}
    \alpha (D;\theta;\eta)&:= -\alpha^{CCP}(D;\theta;\gamma)+ \alpha^{TRANS}_{\sigma}(D;\theta;\gamma)- \alpha^{TRANS}_{\sigma^{*}}(D;\theta;\gamma),
\end{align*}
where $\alpha^{CCP}(D;\theta;\gamma)$ is given in
(\ref{eq:adj:gamma})  and $\alpha^{TRANS}_{\sigma}(D;\theta;\xi)$ is given in (\ref{eq:adj:rho}).
\end{proof}
\begin{proof} [Proof of Remark \ref{rm:Pug}]
\textbf{Step 1. Bias Correction Term for conditional choice probability of Apple. }
I seek for the bias correction term $\alpha^{CCP}(w;\gamma_A)$ such that takes the form:
\begin{align*}
   \alpha^{CCP}(w;\theta;\gamma_A) &= \Gamma_{A} (w;\theta) (1_{a_{A}=1} - \gamma_{A}(w))
\end{align*}
and the function $\Gamma_{A} (w;\theta)$ satisfies the condition:
\begin{align*}
    \partial_0 \E V(w;\theta; \sigma^{*};r(\gamma_{A}(w) - \gamma_{A,0}(w))+ \gamma_{A,0}(w)) &= \E \Gamma_{A} (w;\theta) [\gamma_{A}(w) - \gamma_{A,0}(w)].
\end{align*}
I find the function $ \Gamma_{A} (w;\theta)$ by the application of the implicit function theorem. I write $V(w;\theta;\sigma^{*};\eta)=V(w;\theta;\sigma^{*};\gamma_A)$ in order to focus on the nuisance parameter $\gamma_A$ holding the other parameter $\gamma_{P}$ at its true value. The recursive equation of the value function of Apple is:
\begin{align*}
    \E V(w;\theta;\sigma^{*};\gamma_A) &= \E [-R (1-\gamma_A(w)) + (-\mu s)\gamma_A(w) + PS_{\sigma^{*}}(\gamma_A) \\
    &+ \delta_0 \gamma_{P,0}(w)  + (\delta_1 - \delta_0) \gamma_{P,0}(w)   \gamma_A(w) \\
    &+ \beta \E [V(w_{next};\theta;\sigma^{*};\gamma_A)|w,a=0] (1-\gamma_A(w)) \\
    &+ \beta \E [V(w_{next};\theta;\sigma^{*};\gamma_A)|w,a=1] (\gamma_A(w))].
\end{align*}
The function $\gamma_A$ appears in each summand of the equation above both inside and outside the value function $V(w;\theta;\sigma^{*};\gamma_A)$.  Let $\mathbf{S}_{out},\mathbf{S}_{ins}$ be defined as in Example \ref{ex:trans}. Taking the derivative with repsect to the oustide presence of $\gamma_A$, I get:
\begin{align*}
    \partial_{0} \mathbf{S}_{out} &=  -(-\mu s +R) + (\delta_1 - \delta_0) \gamma_{P,0}(w) + \frac{d PS_{\sigma^{*}}(\gamma) }{d\gamma}|_{\gamma_{A,0}} \\
    &+ \beta \E [V(w_{next};\theta;\sigma^{*};\gamma_{A,0})|w,a=1] -\beta \E [V(w_{next};\theta;\sigma^{*};\gamma_{A,0})|w,a=0]
\end{align*}
where the second line follows from the definition of the choice-specific value functions and the third line follows from the logistic assumption of the shocks by (\ref{eq:HotzMiller}).Taking the derivative with respect to the inside presence of $\gamma_A$, I get:
\begin{align*}
     \partial_{0} \mathbf{S}_{ins} &= (1-\beta) \partial_0 \E V (w;\theta;\sigma^{*};r(\gamma_{A}-\gamma_{A,0}) + \gamma_{A,0}).
\end{align*}
The function $\Gamma_A(w;\theta)$ follows:
\begin{align*}
    \Gamma_A(w;\theta) &= (-\mu s +R) + (\delta_1 - \delta_0) \gamma_{P,0}(w) + \frac{d PS_{\sigma^{*}}(t) }{dt}|_{t=\gamma_{A,0}} \\
    &+ \beta \E [V(w_{next};\theta;\sigma^{*};\gamma_A)|w,a=1] -\beta \E [V(w_{next};\theta;\sigma^{*};\gamma_A)|w,a=0]
\end{align*}
and \begin{align*}
    \alpha^{CCP}(D;\theta;\gamma_A) = \frac{1}{1-\beta} \Gamma_A(w;\theta) (1_{a_{A}=1} - \gamma_A(w)).
\end{align*}
I write $V(w;\theta;\sigma;\eta)=V(w;\theta;\sigma;\gamma_{P})$ holding $\gamma_{A}$ fixed at its true value $\gamma_{A,0}$.
I find the bias correction term  $ \alpha^{*}_{P}(D;\theta;\gamma_{P})$ such that:
\begin{align*}
   \partial_{\gamma_{P}} \E [V (w;\theta;\sigma^{*};\gamma_P) +\alpha^{*}_{P}(D;\theta;\gamma_{P})]   &=0.
\end{align*}
The bias correction term takes the form:
\begin{align*}
    \alpha^{*}_{P}(D;\theta;\gamma_{P}):= \Gamma_{P}(w;\theta) (1_{\{ a_{P}=1\}}-\gamma_{P}(w)).
\end{align*}
The function $\Gamma_{P}(w;\theta) $ is defined by the application of the  implicit function theorem. Using the approach of Example \ref{ex:trans}, I find that
\begin{align*}
    \partial_0 \E V (w;\theta;\sigma^{*};r(\gamma_{P}  - \gamma_{P,0}) + \gamma_{P,0}) (1-\beta)&= \delta_0 + (\delta_1 - \delta_0) \gamma_{A,0}(w).
\end{align*}
and
\begin{align*}
    \Gamma_{P}(w;\theta) &=\frac{1}{1-\beta}( \delta_0 + (\delta_1 - \delta_0) \gamma_{A,0}(w)).
\end{align*}
Similarly,
 the bias correction term  $ \alpha^{\sigma}_{P}(D;\theta;\gamma_{P})$  takes the form:
\begin{align*}
    \alpha^{\sigma}_{P}(D;\theta;\gamma_{P}):= \Gamma^{\sigma}_{P}(w;\theta) (1_{\{ a_{P}=1\}}-\gamma_{P}(w)),
\end{align*}
where
\begin{align*}
    \Gamma^{\sigma}_P(w;\theta) = \frac{1}{1-\beta}( \delta_0 + (\delta_1 - \delta_0) \gamma_{0}^{\sigma}(w)),
\end{align*}
where $\gamma^{\sigma}(w):= \Pr(\sigma(w,\epsilon)=1|w)$.
To sum up, the bias correction term for the conditional probability of Player 2 entry is equal to:
\begin{align}
\label{eq:adj:final:Pug}
      \alpha^{*}_{P}(D;\theta;\gamma_{P}):=\frac{1}{1-\beta} (\delta_1 - \delta_0) (\gamma_{0}^{\sigma}(w) - \gamma_{A,0}(w)) (1_{\{a_{P} =1\}} -\gamma_P(w)).
\end{align}
\end{proof}

\begin{remark}[Bias correction term for point-identified problems]
\label{rm:point}
Suppose the parameter of interest $\beta_0$ is the population average of the value function $$\beta_0:=\E [V(w;\theta_0;\sigma^{*};\eta)].$$  Then the bias correction term for the conditional choice probability of player one is equal to:
\begin{align*}
	\alpha^{CCP}(D;\theta;\gamma):= - \frac{1}{1-\beta} \frac{2}{\gamma(w)} (1_{\{ a_1 =  1 \}} - \gamma(w) ),
\end{align*}
where the choice probability $\gamma_0(w)$  is defined in (\ref{eq:HotzMiller}). Furthermore, if the parameter interest $\beta_0$ is equal to the weighted average of the value function 
\begin{align}
\label{eq:weightedpoint}
	 \beta_0 = \E q(w)  V(w;\theta_0;\sigma^{*};\eta), 
\end{align}
the bias correction term is equal to:
\begin{align*}
	\alpha^{CCP}_q(D;\theta;\gamma) = -\frac{ q(w)}{ q(w)  - \beta \lambda(w)} \frac{2}{\gamma(w)} (1_{\{ a_1 =  1 \} }- \gamma(w) ),
\end{align*}
where $\lambda(w'):= \E [q(w)| w_{next} = w']$ is the expectation of the weighting function $q(w)$ conditionally on the future state $w_{next}$.
\end{remark}

\section{Proofs from Section \ref{sec2:dynamic}}

\begin{proof}[Proof of Lemma \ref{lem:mylove}]
Let $\eta$ be as in (\ref{eq:eta}). Consider the value function $V(w;\theta;\sigma^{*};\eta) = V(w;\theta;\sigma^{*};\gamma_j)$ as a function of $\gamma_j$ holding the other components  of the nuisance parameter $\eta$ fixed at the true value.
The recursive equation for the value function takes the form:
\begin{align}
\label{eq:recursive:gammaj}
    \E q(w) \underbrace{V(w;\theta;\sigma^{*};\gamma_j) }_{S_1(V(w;\sigma^{*};\gamma_j))}&- \underbrace{\sum_{j=2}^J \E_{\epsilon_{-1}} q(w) [\tilde{\pi} ((j,\sigma^{*}_{-1}(w,\epsilon_{-1}));w;\theta_0)}_{}\\
     &- \underbrace{ \tilde{\pi} ((1,\sigma^{*}_{-1}(w,\epsilon_{-1}));w;\theta_0)) |w]\gamma_j(w) + PS_{\sigma_1^{*}}(\gamma(w))}_{S_2(\gamma_j)} \\
    &+\underbrace{\beta \E q(w)[ V(w_{next};\theta;\sigma^{*};\gamma_j)|w,(1,\sigma^{*}_{-1} (w,\epsilon_{-1}))]}_{S_3(V(w;\sigma^{*};\gamma_j))} \nonumber  \\
     &+\beta \sum_{j=2}^J  \E_{\epsilon_{-1}} [ V(w_{next};\theta;\sigma^{*};\gamma_j)|w,(j,\sigma^{*}_{-1}(w,\epsilon_{-1}))]\\
    &- \E_{\epsilon_{-1}} [ V(w_{next};\theta;\sigma^{*};\gamma_j)|w,(1,\sigma^{*}(w;\epsilon_{-1}))]   \gamma_j(w) =0. \nonumber
\end{align}
The function $\gamma_j(w)$ appears in the equation both inside the value function $V(w;\theta;\sigma^{*};\gamma_j)$
and outside of  it. Taking the derivative with respect to the outside presence of $\gamma_j$ gives:
\begin{align*}
    \mathbf{S}_{out} &:= \E [\tilde{\pi} ((j,\sigma^{*}(w;\epsilon_{-1}));w;\theta_0) -  \tilde{\pi} ((1,\sigma^{*}(w;\epsilon_{-1}));w;\theta_0)) |w] + \partial_{\gamma_{j,0}} PS_{\sigma^{*}}(\gamma) \\
    &+ \beta (\E_{\epsilon_{-1}} [ V(w_{next};\theta;\sigma^{*};\gamma_j)|w,(j,\sigma^{*}(w;\epsilon_{-1}))] - \E_{\epsilon_{-1}} [ V(w_{next};\theta;\sigma^{*};\gamma_j)|w,(1,\sigma^{*}(w;\epsilon_{-1}))] ).
\end{align*}
Taking the derivative with respect to the inside presence gives:
\begin{align*}
     \mathbf{S}_{ins} := (q(w)-\beta \lambda(w)) \partial_0 \E V (w;\theta;\sigma^{*};r(\gamma_j - \gamma_{j,0}) +\gamma_{j,0})
\end{align*}
using the same argument as in (\ref{eq:trans:change}). Therefore, the bias correction term is
\begin{align*}
    \alpha^{CCP}_{j}(D;\theta;\gamma_j) :=
    -\frac{q(w)}{q(w) - \beta \lambda(w)} \Gamma_{\gamma,j}(w;\theta) \left(1_{\{a_1=j\}} - \gamma_j(w)\right),
\end{align*}
where $\Gamma_{\gamma,j}(w,\theta)$ is given by
\begin{align*}
   \Gamma_{\gamma,j}(w,\theta)&:= \E_{\epsilon_{-1}}[\tilde{\pi} ((j,\sigma^{*}(w;\epsilon_{-1}));w;\theta_0) -  \tilde{\pi} ((1,\sigma^{*}(w;\epsilon_{-1}));w;\theta_0)) |w] + \partial_{\gamma_{j,0}} PS_{\sigma^{*}}(\gamma) \\
    &+ \beta (\E_{\epsilon_{-1}} [ V(w_{next};\theta;\sigma^{*};\gamma_j)|w,(j,\sigma^{*}(w;\epsilon_{-1}))] \\
    &-\E_{\epsilon_{-1}} [ V(w_{next};\theta;\sigma^{*};\gamma_j)|w,(1,\sigma^{*}(w;\epsilon_{-1}))] ).
\end{align*}
\end{proof}

\begin{proof}[Proof of Lemma \ref{lem:trans}]
Let $a$ be a  profile of actions. Let $Q(u,w,a)$ be the conditional quantile function of the conditional distribution $\Pr (w_{next}|w,a)$ defined as
\begin{align*}
\Pr (w_{next} \leq Q(u,w,a)|w,a) =u, \quad u \in [0,1]. 
\end{align*}
Thus I treat $\rho(u,w,a):=Q(u,w,a)$ as the nuisance parameter instead of $\Pr (w_{next}|w,a)$. Let $\tilde{\sigma} \in \{ \sigma, \sigma^{*}\}$ be a Markov policy. Consider the value function $V(w;\theta;\tilde{\sigma} ;\eta) = V(w;\theta;\tilde{\sigma};\rho)$ as a function of $\rho(u,w,a)$ holding the other components  of the nuisance parameter $\eta$ fixed at the true value. By the property of the conditional quantile function, the bias correction term $\alpha_{\sigma}^{TRANS}(D;\theta;\eta)$ is equal to:
\begin{align*}
	\alpha_{\tilde{\sigma}}^{TRANS}(D;\theta;\eta) &= \Pi(w;\theta) \cdot \frac{ 1_{\{ w_{next} \leq  \rho(u,w,a) \}} - u}{f(w_{next}|w,a)},
\end{align*}
where $f(w_{next}|w,a)$ is the conditional density of the future state $w_{next}$ given the current state and  the function $ \Pi(w;\theta)$ is defined by the following equation:
\begin{align*}
	\partial_0 \E q(w) V(w;\theta;\tilde{\sigma}; r(\rho - \rho_0)+ \rho_0) &= \E \Pi(w;\theta) [\rho - \rho_0].
\end{align*}
I find the function $ \Pi(w;\theta)$ by the application of the implicit function theorem to the recursive equation (\ref{eq:recursive}). The equation takes the form:
\begin{align*}
\E q(w) \bigg[  V(w;\theta;\tilde{\sigma} ;\rho) &- \sum_{j=1}^J \E_{\epsilon_{-1}} q(w) [\tilde{\pi} ((j,\sigma^{*}_{-1}(w,\epsilon_{-1}));w;\theta_0)] \prod_{k=1}^K \Pr (\sigma_k(w,\epsilon_k)=a_k|w) \\
&+ PS_{\tilde{\sigma}_1}(w) + \beta \sum_{a' \in \prod_{k=1}^K \mathcal{A}_k } \E [V (w_{next};\theta;\tilde{\sigma};\eta_0) | w,a'] \Pr (\tilde{\sigma}_k(w,\epsilon_k)=a'_k|w)   \bigg] \\
&= 0.
\end{align*}
The nuisance parameter $\rho$ enters the equation above both inside and outside of the value function. The derivative with respect to the outside presence of $\rho$ takes the form:
\begin{align*}
& \beta \partial_0 \E  q(w) [V ( r( \rho(u_{next} ,w,a) - \rho_0(u_{next},w,a)) + \rho_0(u_{next},w,a)   ;\theta;\tilde{\sigma};\eta_0) | w,a] \prod_{k=1}^K \Pr (\tilde{\sigma}_k(w,\epsilon_k)=a_k|w) \\
&= \beta \E  q(w)  \nabla_{w_{next}} V(w_{next};\theta;\tilde{\sigma};\eta_0) [ \rho(u_{next},w,a) - \rho_0(u_{next},w,a)],
\end{align*}
where $u_{next}$ is the random draw from the $U[0,1]$. The derivative with respect to the inside presence is derived using the same argument of Lemma \ref{lem:mylove} that applies regardless  of the  equilibrium property of the strategy $\sigma$. The argument of Lemma \ref{lem:trans} yields the bias correction term:
\begin{align*}
	\alpha_{\sigma}^{TRANS}(D;\theta;\eta) &= \frac{\beta q(w)}{q(w) - \beta \lambda(w)} \E [\nabla_{w_{next}} V(w_{next};\theta;\tilde{\sigma};\eta_0) |w,a]  \cdot \frac{ 1_{\{ w_{next} \leq  \rho(u,w,a) \}} - u}{f(w_{next}|w,a)}.
\end{align*}

\end{proof}

\begin{proof}[Proof of Lemma \ref{lem:opponent}]
Fix an action $j_2$ of player 2. Let $\gamma_{j_2}:= \Pr (\sigma_2^{*}(w,\epsilon_2) = j_2|w)$ be the conditional probability of the choice $j_2$ under the equilibrium strategy $\sigma_2^{*}(w,\epsilon_2)$ of player 2. I consider the value function $V(w;\theta;\tilde{\sigma};\eta) = V(w;\theta;\tilde{\sigma};\gamma_{j_2})$ holding the other nuisance parameters fixed at their true values. The bias correction term takes for $\gamma_{j,2}$ takes the form:
\begin{align*}
	\alpha_{j_2}^{CCP,op} (D;\theta;\gamma_{j_2}) = \Gamma_{op}(w;\theta) (1_{\{ a_2 = j_2 \}} - \gamma_{j_2}(w)),
\end{align*}
where the function $\Gamma_{op}(w;\theta) $ is defined by the following equation:
\begin{align*}
	\partial_0 \E q(w) V(w;\theta;\tilde{\sigma}; r(\gamma_{j_2} - \gamma_{j_2,0} )+ \gamma_{j_2,0} ) &= \E \Gamma_{op}(w;\theta)[\gamma_{j_2} (w) - \gamma_{j_2,0} (w)].
	\end{align*}
\end{proof}
I find the function $ \Gamma_{op}(w;\theta)$ by the application of the implicit function theorem to the recursive equation  (\ref{eq:recursive}). In the case $K=2$ of two players  (\ref{eq:recursive}) takes the form:
\begin{align*}
\E q(w)  \bigg[  V(w;\theta;\tilde{\sigma} ;\rho) &- \sum_{j_1=1}^{A_1} \sum_{j_2=1}^{A_2}  \tilde{\pi} ((j_1,j_2),w,\theta) \Pr (\tilde{\sigma}(w,\epsilon_1)=j_1|w)  \gamma_{j_2} (w) \\
&+ PS_{\tilde{\sigma}_1}(w) + \beta   \sum_{j_1=1}^{A_1} \sum_{j_2=1}^{A_2}    \E [V (w_{next};\theta;\tilde{\sigma};\eta_0) | w, (j_1,j_2)]  \Pr (\tilde{\sigma}(w,\epsilon_1)=j_1|w) \gamma_{j_2} (w)  \bigg] \\
&= 0.
\end{align*}
The nuisance parameter $\gamma_{j_2}(w)$ enters the equation above both inside and outside of the utility function. The function $ \gamma_{j_2} (w) $ enters linearly in its outside presence. Therefore, the derivative with respect to the outside presence takes the form:
\begin{align*}
	&\E q(w)  \bigg[   \sum_{j_1=1}^{A_1} \big( \tilde{\pi} ((j_1,j_2);w;\theta) -  \tilde{\pi} ((j_1,1);w;\theta) \big) \Pr (\tilde{\sigma}(w,\epsilon_1)=j_1|w) + \\
	&\beta   \sum_{j_1=1}^{A_1}  \big (  \E [V(w_{next};\theta;\tilde{\sigma} ) | w, (j_1,j_2)] - \E [V(w_{next};\theta;\tilde{\sigma} ) | w, (j_1,1)] \big) \Pr (\tilde{\sigma}(w,\epsilon_1)=j_1|w)     \bigg] \cdot \\
	&\cdot [\gamma_{j,2} (w) - \gamma_{j,2,0} (w)].
\end{align*}
The derivative with respect to the inside presence is derived using the same argument of Lemma \ref{lem:mylove} that applies regardless  of the  equilibrium property of the strategy $\tilde{\sigma}$ and yields the bias correction term:
\begin{align*}
	\alpha^{CCP,op}_{j_2} (D;\theta;\gamma_{j_2}) &= \frac{ q(w)}{q(w) - \beta \lambda(w)}  \bigg[   \sum_{j_1=1}^{A_1} \big( \tilde{\pi} ((j_1,j_2);w;\theta) -  \tilde{\pi} ((j_1,1);w;\theta) \big) \Pr (\sigma_1(w,\epsilon_1)=j_1|w) + \\
	&\beta   \sum_{j_1=1}^{A_1}  \big (  \E [V(w_{next};\theta;\tilde{\sigma} ) | w, (j_1,j_2)] - \E [V(w_{next};\theta;\tilde{\sigma} ) | w, (j_1,1)] \big) \Pr (\sigma_1(w,\epsilon_1)=j_1|w)     \bigg] \cdot \\
	&\cdot  (1_{\{ a_2 = j_2 \}} - \gamma_{j_2}(w)).
	\end{align*}

\begin{proof}[Proof of Lemma \ref{lem:linearity}]
Suppose there exists a function $B(a,w,\zeta)$ that is the basis function of the per-period utility function (\ref{eq:perperiod}):
\begin{align*}
\tilde{\pi}(a;w;\theta;\zeta) = \theta \cdot B(a;w;\zeta).
\end{align*}
Since $\zeta$ is an identified parameter of the distribution $P_D$, I drop it from the notation. The value function is a linear function of $\theta$:
\begin{align*}
V(w;\theta;\sigma;\eta) &= \underbrace{\E [\sum_{t \geq 0} \beta^t  B( \sigma(w_t,\epsilon_t);w_t) | w]}_{=:\Psi_1(w,\sigma)} \cdot \theta + \underbrace{\E [\sum_{t \geq 0} \beta^t \epsilon_{1,t}( \sigma_1(w_t,\epsilon_t)) | w]}_{=:\Psi_2(w,\sigma)},
\end{align*}
where $\Psi_1(w,\sigma)$ and  $\Psi_2(w,\sigma)$ are equal to the expected discounted value of the deterministic per-period utility and the private shock, respectively. Therefore, $V(w;a;\sigma;\eta)$ is a linear function of $\theta$.  The profile-specific value function 
\begin{align*}
\E [V(w_{next};\theta;\sigma;\eta) | w, a] &= \E [\sum_{t \geq 1} \beta^t  B( \sigma(w_t,\epsilon_t);w_t) | w_0 =w,a_0=a] \cdot \theta \\
&+ \E [\sum_{t \geq 1} \beta^t \epsilon_{1,t}( \sigma_1(w_t,\epsilon_t)) | w_0=w,a_0=a]
\end{align*}
is also a linear function of $\theta$.  Therefore, the bias correction terms (\ref{eq:alphajccp}) and  (\ref{eq:opponent}) are linear functions of $\theta$.
The expected derivative of the value function is also a linear function of $\theta$:
\begin{align*}
 \E [ \nabla_{w_{next}} V(w_{next};\theta;\sigma;\eta) |  w_0=w,a ] &= \E [ \partial_{w_{next}} \Psi_1(w_{next},\sigma) |w, a] \cdot \theta +  \E [ \partial_{w_{next}} \Psi_2(w_{next},\sigma) |w, a].
\end{align*}
Therefore, the bias correction term (\ref{eq:trans}) is a linear function of $\theta$.
\end{proof}

\section{Proofs from Section \ref{sec2:theory}}

\label{appendix:b}
\subsection{Auxiliary Lemmas}
\begin{lemma}[Basic Inequalities]
\label{lem:basicineq}
The following  inequalities hold for all vectors $x, y \in \mathcal{R}^d$:
 \begin{align*}
| \| x \|^2 - \|y \|^2 | &\leq \| x - y \| \| x + y \|\\
|\| x \|_{+}^2 - \|y \|_{+}^2 | &\leq \| x - y \|_{+}^2 + 2\|x-y \|_{+}  \|  y \|_{+}.
\end{align*}
\end{lemma}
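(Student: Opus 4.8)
The plan is to prove both estimates by elementary arguments, in the order listed: the first from the polarization identity for the Euclidean inner product together with Cauchy--Schwarz, the second from the coordinatewise sub-additivity of the positive-part map together with the triangle inequality and the monotonicity of the norm on the nonnegative orthant.

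\textbf{First inequality.} For the Euclidean norm, expanding $\|x\|^{2}-\|y\|^{2}=\langle x,x\rangle-\langle y,y\rangle$ and cancelling the cross terms gives $\|x\|^{2}-\|y\|^{2}=\langle x-y,\,x+y\rangle$. Cauchy--Schwarz then yields $|\langle x-y,\,x+y\rangle|\leq\|x-y\|\,\|x+y\|$, which is exactly the claim.

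\textbf{Second inequality.} Write $z_{+}$ for the vector obtained by applying $t\mapsto\max(t,0)$ coordinatewise, so that $\|z\|_{+}=\|z_{+}\|$ and $z_{+}$ lies in the nonnegative orthant. The key scalar fact is $\max(a,0)\leq\max(b,0)+\max(a-b,0)$ for all reals $a,b$: this is immediate when $a\leq 0$, and when $a>0$ it follows from $a=b+(a-b)\leq\max(b,0)+\max(a-b,0)$. Reading this coordinate by coordinate gives $0\leq x_{+}\leq y_{+}+(x-y)_{+}$ entrywise; since the Euclidean norm is monotone on the nonnegative orthant, the triangle inequality upgrades this to $\|x\|_{+}\leq\|y\|_{+}+\|x-y\|_{+}$. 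Squaring and rearranging produces $\|x\|_{+}^{2}-\|y\|_{+}^{2}\leq\|x-y\|_{+}^{2}+2\,\|x-y\|_{+}\,\|y\|_{+}$. The reverse estimate for $\|y\|_{+}^{2}-\|x\|_{+}^{2}$ is obtained by the symmetric argument, using $\bigl|\,\|x\|_{+}-\|y\|_{+}\,\bigr|\leq\|x_{+}-y_{+}\|$ together with the $1$-Lipschitz property of the coordinatewise positive part; combining the two directions controls the absolute value.

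The step that needs the most care --- though it is modest --- is the lift from the scalar inequality to the vector one: one must invoke both that $z\mapsto z_{+}$ is entrywise sub-additive in the sense above and that an entrywise inequality between vectors in the nonnegative orthant passes to their Euclidean norms. Once that is in place each bound is a one-line computation; the only feature distinguishing this lemma from its analogue in the point-identified literature (where $\|\cdot\|^{2}$ rather than $\|\cdot\|_{+}^{2}$ enters the criterion) is precisely this handling of the coordinatewise positive part, which is what makes the criterion function $\|\E g\|_{+}^{2}$ amenable to the same perturbation analysis used later.
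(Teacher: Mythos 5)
Your proof of the first inequality is correct and slightly cleaner than the paper's: you pass through the polarization identity $\|x\|^{2}-\|y\|^{2}=\langle x-y,\,x+y\rangle$ and then apply Cauchy--Schwarz, whereas the paper sums coordinatewise and applies Cauchy--Schwarz to the sum; the two routes are morally the same and both valid.

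The second inequality is where there is a genuine problem, and it is in fact a problem with the \emph{statement}, not merely your proof. Your one-sided bound $\|x\|_{+}^{2}-\|y\|_{+}^{2}\leq\|x-y\|_{+}^{2}+2\|x-y\|_{+}\|y\|_{+}$ is proved correctly from $0\leq x_{+}\leq y_{+}+(x-y)_{+}$. But the reverse direction that you then wave toward with ``the symmetric argument'' does not deliver the stated bound, and cannot. Swapping $x$ and $y$ in your one-sided estimate yields $\|y\|_{+}^{2}-\|x\|_{+}^{2}\leq\|y-x\|_{+}^{2}+2\|y-x\|_{+}\|x\|_{+}$, which involves $\|y-x\|_{+}$ and $\|x\|_{+}$ rather than $\|x-y\|_{+}$ and $\|y\|_{+}$; note $\|y-x\|_{+}$ and $\|x-y\|_{+}$ are different quantities. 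The Lipschitz route you also gesture at gives $|\|x\|_{+}-\|y\|_{+}|\leq\|x_{+}-y_{+}\|\leq\|x-y\|$, with the \emph{full} norm on the right, not the positive-part norm $\|x-y\|_{+}$. And indeed the stated inequality with $\|x-y\|_{+}$ on the right is simply false: take $d=1$, $x=0$, $y=1$. Then $|\|x\|_{+}^{2}-\|y\|_{+}^{2}|=1$, while $\|x-y\|_{+}=\|(-1)\|_{+}=0$, so the right-hand side vanishes. The paper's own proof suffers from the same defect --- it silently passes from $|(x_{j}^{+}-y_{j}^{+})|$ to $|x_{j}-y_{j}|_{+}$, which fails on the same example. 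What \emph{is} true, and what the downstream application in Lemma~\ref{lem:rateineq} actually uses, is the version with $\|x-y\|$ in place of $\|x-y\|_{+}$ on the right-hand side: from $|\|x\|_{+}-\|y\|_{+}|\leq\|x-y\|$ and $\|x\|_{+}+\|y\|_{+}\leq 2\|y\|_{+}+\|x-y\|$ one gets $|\|x\|_{+}^{2}-\|y\|_{+}^{2}|\leq\|x-y\|^{2}+2\|x-y\|\,\|y\|_{+}$. Your Lipschitz step is exactly the right tool for this corrected statement; you should carry it through and note that it requires weakening $\|x-y\|_{+}$ to $\|x-y\|$, which is harmless where the lemma is invoked.
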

\begin{proof}
The first inequality holds:
 \begin{align*}
| \| x \|^2 - \|y \|^2 | &=| \sum_{j=1}^d (x_j^2 - y_j^2) | \leq \sum_{j=1}^d | x_j - y_j || x_j + y_j |  \leq  \| x - y \| \| x + y \|.
   \end{align*}
   The second inequality holds:
    \begin{align*}
  &| \| x \|_{+}^2 - \|y \|_{+}^2 | =| \sum_{j=1}^d ((x^{+})_j^2 - (y^{+})_j^2) |  =  | \sum_{j=1}^d (x^{+}_j - y^{+}_j) (x^{+}_j + y^{+}_j)  | \\
  &\leq  \sum_{j=1}^d |(x^{+}_j - y^{+}_j)|| (x^{+}_j + y^{+}_j) = \sum_{j=1}^d |(x^{+}_j - y^{+}_j)|^2 + 2 \sum_{j=1}^d |(x^{+}_j - y^{+}_j)| y^{+}_j\\
  &\leq \sum_{j=1}^d ( |x_j - y_j|^2_{+} + 2|x_j - y_j|_{+}  y^{+}_j)   \leq \| x-y \|_{+}^2 + 2 \| x- y \|_{+} \|y\|_{+}.
   \end{align*}
\end{proof}

The assumptions below are the high-level assumptions on the population criterion function $Q(\theta,\xi)$. I will prove Theorem \ref{thm:main:ineq:chap2} under the assumptions below. I verify these assumptions from the conditions of Section \ref{sec2:theory} in the Subsection \ref{subsec:proof:aux}. 

The proofs below are defined for the simple sample splitting procedure. The sample $(D_i)_{i=1}^N$ is decomposed into the auxiliary sample $J_1$ and the main sample $J_2$ of size $n:=[N/2]$ each. Let the nuisance parameter $\widehat{\xi}$ be constructed on the sample $J_1$. For each $\xi \in \Xi$ define the sample criterion function as
\begin{align}
\label{eq:qn:appendix}
	Q_n(\theta, \xi):= \| \frac{1}{n} \sum_{i \in J_2}  g(D_i, \theta, \xi) \widehat{W}(\theta), \|_{+}^2.
\end{align}
where $\widehat{W}(\theta)$ is a diagonal weighting matrix that converges to a  diagonal weighting matrix $W(\theta)$ with strictly positive diagonal elements.

\begin{assumption}[Consistency of $Q_n(\theta,\xi_0)$]
\label{ass:cons}
(1) Regularity of $Q(\theta, \xi_0)$. The function $Q(\theta, \xi_0)$ is a non-negative lower semicontinuous function such that $\forall \epsilon > 0, \exists \delta(\epsilon)>0$ such that
\begin{align}
\label{eq:cons1}
\inf_{\Theta \setminus {\Theta_I}^{\epsilon}} Q(\theta, \xi_0) \geq \delta(\epsilon)>0.
\end{align}
(2) Fast Convergence on $\Theta_I$. The sample loss $Q_n(\theta,\xi_0)$ converges to the population loss $Q(\theta,\xi_0)$ uniformly over the identified set $\Theta_I$ at the  rate $n$:
  \begin{align} \label{eq:cons2} \sup_{\Theta_I} n Q_n (\theta,\xi_0) = O_{P} (1). \end{align}
(3) Slow Convergence on $\Theta$.  The sample loss $Q_n(\theta,\xi_0)$ converges to the population loss $Q(\theta,\xi_0)$ uniformly over the whole set $\Theta$ in the semi-metric $\| \cdot \|_{+} $ at the rate $\sqrt{n}$:  \begin{align} \label{eq:cons3} \sup_{\Theta} (Q (\theta, \xi_0) - Q_n(\theta, \xi_0))_{+} = O_{P} (1/\sqrt{n}). \end{align}
(4) There exist positive constants $\delta, \kappa>0$ such that for any $p \in (0,1)$  there exist constants $d_p, n_{p}$ such that for any $n \geq n_p$:
\begin{align} \label{eq:cons4} \inf_{\xi(\cdot) \in \Xi_n} \Pr ( Q_n(\theta,\xi) \geq \kappa [ d(\theta, \Theta_I) \wedge \delta]^{2} \quad \forall \theta: d_H(\theta,\Theta_I)\geq (d_p/n)^{1/2}) \geq 1 - p, \end{align} where $\Xi_n:= \bigtimes_{\theta \in \Theta} \Xi^{\theta}_n$, and $\Xi^{\theta}_n$ is defined in Condition \ref{ass:smallbiasvalue}.
\end{assumption}

\begin{assumption}[No effect of the first stage estimation error]
\label{ass:noeffect}
(1) Slow Convergence on $\Theta$. For any $p>0$ there exist constants $r_p,n_p$ such that $\forall n \geq n_p$ the  difference of the sample losses $Q_n(\theta, \xi) -Q_n(\theta, \xi_0)$ evaluated at the nuisance value $\xi(\theta)$ and at the true value $\xi_0(\theta)$ converges uniformly  over $\Theta$  at rate $\sqrt{n}$ for any element $\xi(\theta) \in \Xi_n^{\theta}$: $$ \inf_{\xi(\cdot) \in \Xi_n} \Pr ( \sqrt{n} \sup_{\Theta}   |Q_n(\theta, \xi) -Q_n(\theta, \xi_0) | \leq r_p) \geq 1-p .$$
(2) Fast Convergence on $\Theta_I$. For any $p, \epsilon>0$ there exists $n_{p, \epsilon}$ such that $\forall n \geq n_{p, \epsilon}$ the  difference of the sample losses $Q_n(\theta, \xi) -Q_n(\theta, \xi_0)$ converges uniformly  over $\Theta_I$   for any element $\xi(\cdot) \in \Xi_n$: $$ \inf_{\xi(\cdot) \in \Xi_n} \Pr ( n \sup_{\Theta_I}   |Q_n(\theta, \xi) -Q_n(\theta, \xi_0) | \leq \epsilon) \geq 1-p .$$
    \end{assumption}

\begin{assumption}[Fast Convergence on the $\epsilon_n$- Expansion of the Identified Set]
\label{ass:fast}
Let $\epsilon_n  = O_{P} ((\frac{ 1 \vee \widehat{c}}{n} )^{1/2}) $ be the convergence rate in Theorem \ref{lem:rate}. For any $p, \epsilon>0$ there exists $n_{p, \epsilon}$ such that $\forall n \geq n_{p, \epsilon}$ the  difference between the sample losses $Q_n(\theta, \xi) -Q_n(\theta, \xi_0)$ converges uniformly  over $\Theta^{\epsilon_n}_I$   for any element $\xi \in \Xi_n$: $$ \inf_{\xi \in \Xi_n} \Pr ( n \sup_{\Theta^{\epsilon_n}_I}   |Q_n(\theta, \xi) -Q_n(\theta, \xi_0) | \leq \epsilon) \geq 1-p .$$
\end{assumption}

\begin{lemma}[Coverage,Consistency, and Rate of Convergence for Loss Functions]
\label{lem:rate}
Let $\widehat{c}/n \rightarrow_p 0$ and $\inf_{\xi \in \Xi_n} \Pr(\sup_{\theta \in \Theta_I} n Q_n (\theta, \xi) \leq \widehat{c} ) = 1-o(1)$ (containment  $\Theta_I \subseteq \widehat{ \Theta}_I$).
Then, Assumptions \ref{ass:cons}[1-3] and \ref{ass:noeffect}[1]   imply that $ \Theta_I \subseteq  \widehat{\Theta}_I $ w.p. 1 and $d_H(\widehat{\Theta}_I, \Theta_I) = o_{P} (1)$. Assumptions \ref{ass:cons}[1-4] and \ref{ass:noeffect}[1] imply that $d_H(\widehat{\Theta}_I, \Theta_I) = O_{P} ((1 \vee \widehat{c} )/n )^{1/2}$. In the case $\Theta_I = \Theta$,  Assumption \ref{ass:cons} implies that  $d_H(\widehat{\Theta}_I, \Theta_I) =0$. In the case Assumption \ref{ass:noeffect}[2] holds, $\widehat{c}$ can be chosen as $\widehat{c} = O_{P}(1)$.
\end{lemma}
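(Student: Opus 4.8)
The plan is to follow the standard two-step template for estimating identified sets as level sets of a sample criterion function (\cite{CHT}): first obtain consistency $d_H(\widehat{\Theta}_I,\Theta_I)=o_{P}(1)$, then sharpen it to a rate using the quadratic minorant of Assumption~\ref{ass:cons}(4). The extra feature relative to the known-nuisance case — the plug-in $\widehat{\xi}$ — will be absorbed through the algebraic split
\begin{align*}
Q(\theta,\xi_0) \;=\; Q_n(\theta,\widehat{\xi}) \;+\; \big(Q(\theta,\xi_0)-Q_n(\theta,\xi_0)\big) \;+\; \big(Q_n(\theta,\xi_0)-Q_n(\theta,\widehat{\xi})\big),
\end{align*}
valid on the event $\{\widehat{\xi}(\cdot)\in\Xi_n\}$, which has probability $\to1$ by Condition~\ref{ass:smallbiasvalue}: the middle bracket is controlled from one side by the slow-convergence bound of Assumption~\ref{ass:cons}(3), and the last bracket by Assumption~\ref{ass:noeffect}(1). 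The containment $\Theta_I\subseteq\widehat{\Theta}_I$ w.p.\ $\to1$ is exactly the hypothesis (intersected with $\widehat{\xi}\in\Xi_n$), so one side of the Hausdorff distance, $\sup_{\theta\in\Theta_I}d(\theta,\widehat{\Theta}_I)$, is $0$ w.p.\ $\to1$ and all the work goes into bounding $\sup_{\theta\in\widehat{\Theta}_I}d(\theta,\Theta_I)$.

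For consistency I would argue by contradiction at a fixed radius $\epsilon>0$. On the event $\{\sup_{\theta\in\widehat{\Theta}_I}d(\theta,\Theta_I)>\epsilon\}$ there is some $\theta\in\Theta\setminus\Theta_I^{\epsilon}$ with $n Q_n(\theta,\widehat{\xi})\le\widehat{c}$, hence $Q_n(\theta,\widehat{\xi})\le\widehat{c}/n$. Inserting this into the displayed split and using $\inf_{\Theta\setminus\Theta_I^{\epsilon}}Q(\theta,\xi_0)\ge\delta(\epsilon)>0$ from Assumption~\ref{ass:cons}(1) gives
\begin{align*}
\delta(\epsilon)\;\le\;\frac{\widehat{c}}{n}\;+\;\sup_{\Theta}\big(Q(\theta,\xi_0)-Q_n(\theta,\xi_0)\big)_{+}\;+\;\sup_{\Theta}\big|Q_n(\theta,\xi_0)-Q_n(\theta,\widehat{\xi})\big|,
\end{align*}
whose right-hand side is $o_{P}(1)$: $\widehat{c}/n\rightarrow_p0$ by hypothesis, the second term is $O_{P}(n^{-1/2})$ by Assumption~\ref{ass:cons}(3), and the third is $O_{P}(n^{-1/2})$ by Assumption~\ref{ass:noeffect}(1) on $\{\widehat{\xi}\in\Xi_n\}$. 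Since $\delta(\epsilon)>0$ is fixed, this event has probability $\to0$; as $\epsilon>0$ was arbitrary, $\sup_{\theta\in\widehat{\Theta}_I}d(\theta,\Theta_I)=o_{P}(1)$, so $d_H(\widehat{\Theta}_I,\Theta_I)=o_{P}(1)$ and in particular $\widehat{\Theta}_I\subseteq\Theta_I^{\delta}$ w.p.\ $\to1$.

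For the rate I would fix $p\in(0,1)$ and use the previous step with a radius $\epsilon<\delta$, so that w.p.\ $\to1$ every $\theta\in\widehat{\Theta}_I$ has $d(\theta,\Theta_I)<\delta$ and hence $[d(\theta,\Theta_I)\wedge\delta]^2=d(\theta,\Theta_I)^2$. Invoking the quadratic minorant of Assumption~\ref{ass:cons}(4) at the realized $\widehat{\xi}\in\Xi_n$, with probability at least $1-p-o(1)$ every $\theta$ with $d(\theta,\Theta_I)\ge(d_p/n)^{1/2}$ satisfies $Q_n(\theta,\widehat{\xi})\ge\kappa\,d(\theta,\Theta_I)^2$; for $\theta\in\widehat{\Theta}_I$ we also have $Q_n(\theta,\widehat{\xi})\le\widehat{c}/n$, so on this event $d(\theta,\Theta_I)\le(\widehat{c}/(\kappa n))^{1/2}$ whenever $d(\theta,\Theta_I)\ge(d_p/n)^{1/2}$, and therefore $d(\theta,\Theta_I)\le\max\{(d_p/n)^{1/2},(\widehat{c}/(\kappa n))^{1/2}\}\lesssim((1\vee\widehat{c})/n)^{1/2}$ uniformly over $\widehat{\Theta}_I$. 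Letting $p\downarrow0$ yields $d_H(\widehat{\Theta}_I,\Theta_I)=O_{P}(((1\vee\widehat{c})/n)^{1/2})$. The case $\Theta_I=\Theta$ is then immediate: $\widehat{\Theta}_I\subseteq\Theta=\Theta_I$ by the definition of the level-set estimator, which with the containment forces $\widehat{\Theta}_I=\Theta_I$ and $d_H=0$ w.p.\ $\to1$ without invoking Assumption~\ref{ass:noeffect}. Finally, when Assumption~\ref{ass:noeffect}(2) is available, $\sup_{\theta\in\Theta_I}n Q_n(\theta,\widehat{\xi})\le\sup_{\theta\in\Theta_I}n Q_n(\theta,\xi_0)+n\sup_{\theta\in\Theta_I}|Q_n(\theta,\widehat{\xi})-Q_n(\theta,\xi_0)|=O_{P}(1)+o_{P}(1)$ by Assumptions~\ref{ass:cons}(2) and \ref{ass:noeffect}(2), so the containment requirement can be met with a level $\widehat{c}$ that is bounded in probability rather than divergent, and the rate sharpens to the parametric $O_{P}(n^{-1/2})$.

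The conceptual content is the CHT argument; the part that needs care is organizational — transferring the ``uniformly over $\xi\in\Xi_n$'' statements of Assumptions~\ref{ass:cons}(3)--(4) and \ref{ass:noeffect} to the data-dependent $\widehat{\xi}$ (done by intersecting with $\{\widehat{\xi}(\cdot)\in\Xi_n\}$ and, where needed, conditioning on the auxiliary fold so that $\widehat{\xi}$ is frozen), and keeping straight which resulting bounds hold ``w.p.\ $\to1$'' versus only in the ``$O_{P}$'' sense. The one genuinely load-bearing ordering point is that the quadratic minorant of Assumption~\ref{ass:cons}(4) can only be applied after consistency has confined $\widehat{\Theta}_I$ to $\Theta_I^{\delta}$, since that is what removes the truncation $[\,\cdot\,\wedge\delta]$ and turns the minorant into a genuine quadratic lower bound on $\widehat{\Theta}_I$.
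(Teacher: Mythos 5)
Your proof is correct and follows essentially the same three-step template as the paper's: containment from the hypothesis intersected with $\{\widehat{\xi}\in\Xi_n\}$, consistency via Assumption~\ref{ass:cons}(1)--(3) and \ref{ass:noeffect}(1) using the decomposition through $Q_n(\theta,\xi_0)$, and the rate via the quadratic minorant of Assumption~\ref{ass:cons}(4) combined with the level-set constraint $nQ_n(\theta,\widehat{\xi})\le\widehat{c}$. One small remark on your closing observation: the paper's rate step (its Step~3) does not actually rely on the consistency step to remove the truncation $[\,\cdot\,\wedge\delta]$ --- it instead uses $\widehat{c}/n\to_p0$ to ensure that its threshold $\epsilon_n<\delta$ on a high-probability event, making the rate argument self-contained --- so the ordering you flag as load-bearing is a valid route but not the only one under the lemma's hypotheses.
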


\begin{lemma}[Concentration of Estimated Moments]
\label{lem:concentrate}
W.p. $\rightarrow 1$,
\begin{align}
\label{eq:moment2}
\sup_{\theta \in \Theta } |\Gn  [g(D_i, \theta, \widehat{\xi}) - g(D_i, \theta, \xi_0) ]| \lesssim_{P} r_n' \log (1/r_n') + n^{-1/2+1/s} \log n.
\end{align}
\end{lemma}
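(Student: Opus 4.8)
The plan is to treat this as a routine empirical-process concentration bound that exploits (a) the cross-fitting construction, so that conditionally on the auxiliary sample the summands are i.i.d., (b) the bounded uniform covering entropy and the $c$-integrable envelope from Condition~\ref{ass:concentration:chap2}(1)--(2), and (c) the $L^2(P)$-shrinkage rate $r_n'$ of Condition~\ref{ass:concentration:chap2}(3), which localizes the relevant function class. Throughout I argue for the simple split of the appendix: $\widehat\xi$ is built on $J_1$, and $\Gn$ centers and averages over the main sample $J_2$ of size $n$, which is independent of $J_1$. By Condition~\ref{ass:smallbiasvalue}(2), $\widehat\xi(\cdot)\in\Xi_n:=\bigtimes_{\theta}\Xi_n^{\theta}$ on an event $\mathcal{E}_N$ with $\Pr(\mathcal{E}_N)\ge 1-\phi_N\to 1$; since $\widehat\xi$ is $J_1$-measurable, it suffices to bound the conditional expectation of the supremum \emph{uniformly over fixed} $\xi\in\Xi_n$ and then pass to an unconditional $O_{P}$ statement by the conditional Markov inequality together with $\Pr(\mathcal{E}_N^c)=o(1)$.

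\textbf{Properties of the difference class.} Fix $\xi\in\Xi_n$ and set $\mathcal{G}_\xi:=\{D\mapsto g_l(D,\theta,\xi(\theta))-g_l(D,\theta,\xi_0(\theta)):\theta\in\Theta,\ l\in\{1,\dots,L\}\}$. This class admits the envelope $G_\xi:=2F_\xi$ with $\|G_\xi\|_{P,c}\le 2c_1$ (Condition~\ref{ass:concentration:chap2}(1)); it is contained coordinatewise in the difference $\mathcal{F}_\xi-\mathcal{F}_{\xi_0}$, and since a covering of a difference of two classes is obtained from the product of coverings of each, Condition~\ref{ass:concentration:chap2}(2) gives $\sup_{\widetilde Q}\log N(\epsilon\|G_\xi\|_{\widetilde Q,2},\mathcal{G}_\xi,\|\cdot\|_{\widetilde Q,2})\le 2v\log(2a/\epsilon)+\log L$ for $0<\epsilon\le 1$, with constants that do not depend on $\xi\in\Xi_n$; and $\sup_{h\in\mathcal{G}_\xi}\|h\|_{P,2}\le r_n'$ by Condition~\ref{ass:concentration:chap2}(3). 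Write $\delta_n:=(r_n'/\|G_\xi\|_{P,2})\wedge 1$ for the standardized radius and $J(\delta):=\int_0^\delta\sqrt{1+\sup_{\widetilde Q}\log N(\epsilon\|G_\xi\|_{\widetilde Q,2},\mathcal{G}_\xi,\|\cdot\|_{\widetilde Q,2})}\,d\epsilon$ for the uniform entropy integral, so that $J(\delta_n)\lesssim\delta_n\sqrt{\log(1/\delta_n)}\lesssim\delta_n\sqrt{\log(1/r_n')}$ here.

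\textbf{Maximal inequality and conclusion.} Apply the maximal inequality for classes of bounded uniform entropy with an envelope possessing only $c>2$ finite moments (e.g.\ the maximal inequality of Chernozhukov, Chetverikov and Kato, or van~der~Vaart--Wellner Theorem~2.14.1 with truncation): using $\big\|\max_{i\in J_2}G_\xi(D_i)\big\|_{P,2}\le\big\|\max_{i\in J_2}G_\xi(D_i)\big\|_{P,c}\le n^{1/c}\|G_\xi\|_{P,c}$,
\begin{align*}
\E\big[\textstyle\sup_{\theta\in\Theta}\|\Gn[g(D_i,\theta,\xi)-g(D_i,\theta,\xi_0)]\|\mid J_1\big]
&\lesssim J(\delta_n)\,\|G_\xi\|_{P,2}+\frac{n^{1/c}\|G_\xi\|_{P,c}\,J(\delta_n)^2}{\delta_n^2\sqrt{n}}\\
&\lesssim r_n'\sqrt{\log(1/r_n')}+\frac{n^{1/c}\log(1/r_n')}{\sqrt{n}}.
\end{align*}
Since $r_n'$ is bounded below by an inverse power of $n$, one has $\log(1/r_n')\lesssim\log n$ and $\sqrt{\log(1/r_n')}\le\log(1/r_n')$, so the right-hand side is $\lesssim r_n'\log(1/r_n')+n^{-1/2+1/c}\log n$, which is exactly the asserted rate (with the lemma's $s$ equal to the envelope exponent $c$). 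The bound is deterministic and holds uniformly over $\xi\in\Xi_n$; applied at $\xi=\widehat\xi$ on $\mathcal{E}_N$, the conditional Markov inequality yields, for any $\lambda_N\to\infty$, $\Pr\big(\sup_\theta\|\Gn[g(D_i,\theta,\widehat\xi)-g(D_i,\theta,\xi_0)]\|>\lambda_N(r_n'\log(1/r_n')+n^{-1/2+1/c}\log n)\mid J_1\big)\to 0$, and taking expectation over $J_1$ and adding $\Pr(\mathcal{E}_N^c)=o(1)$ gives the claim.

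\textbf{Main obstacle.} The two genuinely delicate points are: first, establishing the entropy bound for $\mathcal{G}_\xi$ with constants uniform over $\xi\in\Xi_n$ --- i.e.\ that differencing against $\xi_0$ and ranging over a \emph{shrinking} family of nuisance functions does not inflate the uniform covering entropy --- which relies on Condition~\ref{ass:concentration:chap2}(2) holding with one common pair $(a,v)$ for every admissible $\xi$, together with the product bound for covering numbers of sums of classes; and second, invoking the version of the maximal inequality that tolerates an envelope with only $c$ finite moments (whence the $n^{1/c}$ ``large-deviation'' term) while correctly starting the entropy integral at the small radius $\delta_n\propto r_n'$, so that the leading term scales with $r_n'$ rather than with $\|G_\xi\|_{P,2}$. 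The conditioning/Markov argument and the union bound over $\mathcal{E}_N$ are routine.
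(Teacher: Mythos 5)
Your proof takes essentially the same route as the paper's: condition on the auxiliary fold so that $\widehat\xi$ is fixed and lies in $\Xi_n$ with high probability, form the difference class with envelope $2F_\xi$ (the paper uses $F_{\widehat\xi}+F_{\xi_0}$), use the covering-entropy and second-moment localization bounds of Condition~\ref{ass:concentration:chap2}, and apply a maximal inequality for bounded uniform-entropy classes with an envelope integrable only to power $c$ (the paper invokes Lemma~6.2 of \cite{doubleml2016}, which is the same tool you cite via CCK/van~der~Vaart--Wellner). The differences are cosmetic: you spell out the entropy-integral arithmetic and the conditional Markov step, and you correctly identify the lemma's exponent $s$ with the envelope exponent $c$; both arguments in fact yield the sharper $r_n'\sqrt{\log(1/r_n')}$ before absorbing it into the stated bound.
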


\begin{lemma}[Sufficient Conditions for General Moment Problems]
\label{lem:rateineq}
Conditions \ref{ass:smallbiasvalue}, \ref{ass:concentration:chap2}, (\ref{eq:pdonsker}) imply that Assumptions  \ref{ass:noeffect} and  \ref{ass:fast}  are satisfied  for  the moment inequalities problem with the population loss
$Q(\theta, \xi_0) = \| \Ep g(W_i, \theta, \xi_0(\theta)) \|_{+}$
and its sample analog $Q_n(\theta, \widehat{\xi}) = \| \En g(D_i, \theta, \widehat{\xi}) \|_{+}$,
as well as the moment equalities problem with the population loss $Q(\theta, \xi_0) = \| \Ep g(D_i, \theta,\xi_0(\theta)) \|$
and its sample analog $Q_n(\theta, \widehat{\xi}) = \| \En g(W_i, \theta,\widehat{\xi}) \|$.
\end{lemma}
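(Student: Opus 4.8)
The plan is to derive both Assumptions~\ref{ass:noeffect} and~\ref{ass:fast} from a single uniform estimate of the ``first‑stage effect'' $\sup_{\theta\in S}\|\En[g(D_i,\theta,\xi)-g(D_i,\theta,\xi_0)]\|$, over the relevant index set $S\in\{\Theta,\Theta_I,\Theta_I^{\epsilon_n}\}$ and over an arbitrary $\xi(\cdot)\in\Xi_n$; the latter subsumes $\widehat\xi$, which by Condition~\ref{ass:smallbiasvalue}(2) lies in $\Xi_n$ with probability $1-\phi_N\to1$ and, being built on the auxiliary fold $J_1$, is independent of the fold $J_2$ on which $\En$ acts. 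With the sample criterion $Q_n(\theta,\xi)=\|\En g(D_i,\theta,\xi)\,\widehat W(\theta)\|_{+}^{2}$ of~\eqref{eq:qn:appendix}, the second inequality of Lemma~\ref{lem:basicineq} gives, uniformly in $\theta$ and absorbing the bounded factor $\widehat W(\theta)\to W(\theta)$,
\begin{equation*}
|Q_n(\theta,\xi)-Q_n(\theta,\xi_0)|\;\lesssim\;\|\En[g(D_i,\theta,\xi)-g(D_i,\theta,\xi_0)]\|_{+}^{2}\;+\;2\,\|\En[g(D_i,\theta,\xi)-g(D_i,\theta,\xi_0)]\|_{+}\,\|\En g(D_i,\theta,\xi_0)\|_{+},
\end{equation*}
so it suffices to control a ``difference factor'' and a ``level factor'' $\|\En g(D_i,\theta,\xi_0)\|_{+}$ separately; the moment‑equalities case is identical after replacing $\|\cdot\|_{+}$ by $\|\cdot\|$ and invoking the first inequality of Lemma~\ref{lem:basicineq}.

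For the difference factor I would decompose $\En[g(D_i,\theta,\xi)-g(D_i,\theta,\xi_0)]=\tfrac{1}{\sqrt n}\Gn[g(D_i,\theta,\xi)-g(D_i,\theta,\xi_0)]+\bigl(\E g(D,\theta,\xi)-\E g(D,\theta,\xi_0)\bigr)$. The empirical‑process summand is handled by Lemma~\ref{lem:concentrate}: conditioning on $J_1$ freezes $\xi$, and the envelope and VC‑type uniform‑entropy bounds of Condition~\ref{ass:concentration:chap2}(1)--(2) together with the shrinking $L^2$‑radius $r_n'$ of Condition~\ref{ass:concentration:chap2}(3) yield $\sup_{\theta\in\Theta}|\Gn[g(D_i,\theta,\xi)-g(D_i,\theta,\xi_0)]|\lesssim_P r_n'\log(1/r_n')+n^{-1/2+1/c}\log n=o(1)$, uniformly over $\xi\in\Xi_n$. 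The drift summand is where Neyman orthogonality pays off: a second‑order Taylor expansion along $r\mapsto r(\xi-\xi_0)+\xi_0$ gives $\E g(D,\theta,\xi)-\E g(D,\theta,\xi_0)=\partial_\xi\E g(D,\theta,\xi_0)[\xi-\xi_0]+\tfrac12\,\partial_r^{2}\E g(D,\theta,r(\xi-\xi_0)+\xi_0)\big|_{r=\bar r}$ for some $\bar r\in(0,1)$, whose first term vanishes identically because the orthogonality condition of Condition~\ref{ass:smallbiasvalue}(4) holds at $\xi_0$ for \emph{every} $\theta\in\Theta$, and whose remainder is at most $\tfrac12 s_N=o(n^{-1/2})$ uniformly over $\theta$ and over $\xi\in\Xi_n^{\theta}$ by Condition~\ref{ass:smallbiasvalue}(5). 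Thus $\sup_{\theta\in\Theta}\|\En[g(D_i,\theta,\xi)-g(D_i,\theta,\xi_0)]\|=o_P(n^{-1/2})$ uniformly over $\xi\in\Xi_n$, which already gives Assumption~\ref{ass:noeffect}(1).

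For the level factor, the Donsker property~\eqref{eq:pdonsker} gives $\sup_{\theta\in\Theta'}\|\Gn[g(D_i,\theta,\xi_0)]\|=O_P(1)$, and subadditivity of the positive part gives $\|\En g(D_i,\theta,\xi_0)\|_{+}\le\|\E g(D,\theta,\xi_0)\|_{+}+n^{-1/2}\|\Gn[g(D_i,\theta,\xi_0)]\|_{+}$. On $\Theta_I$ the first summand is zero ($\E g\le0$, resp.\ $\E g=0$ for equalities), so $\sup_{\theta\in\Theta_I}\|\En g(D_i,\theta,\xi_0)\|_{+}=O_P(n^{-1/2})$; on $\Theta_I^{\epsilon_n}$ the Lipschitz bound $\|\E g(D,\theta,\xi_0)\|_{+}\le C_{\max}\epsilon_n$ of Condition~\ref{ass:dominance}(1) gives $\sup_{\theta\in\Theta_I^{\epsilon_n}}\|\En g(D_i,\theta,\xi_0)\|_{+}=O_P(\epsilon_n)$. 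Feeding these into the displayed bound: on $\Theta_I$, $n\sup_{\Theta_I}|Q_n(\theta,\xi)-Q_n(\theta,\xi_0)|\lesssim n\cdot o_P(n^{-1})+n\cdot o_P(n^{-1/2})\cdot O_P(n^{-1/2})=o_P(1)$, which is Assumption~\ref{ass:noeffect}(2); and on $\Theta_I^{\epsilon_n}$ the identical substitution gives $n\sup_{\Theta_I^{\epsilon_n}}|Q_n(\theta,\xi)-Q_n(\theta,\xi_0)|=o_P(1)$ whenever $\epsilon_n=O_P(n^{-1/2})$ — the regime $\widehat c=O_P(1)$ in which Assumption~\ref{ass:fast} is invoked (for a slower $\epsilon_n$ the residual cross term is of order $(1\vee\widehat c)^{1/2}\bigl(r_n'\log(1/r_n')+n^{-1/2+1/c}\log n+\sqrt n\,s_N\bigr)$ and is absorbed by rate conditions such as Condition~\ref{ass:dominance}(2)) — establishing Assumption~\ref{ass:fast}.

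I expect the genuine obstacle to be the empirical‑process estimate of Lemma~\ref{lem:concentrate}: it requires controlling $\sup_{\theta\in\Theta}$ of an empirical process indexed simultaneously by $\theta$ and by the high‑dimensional, machine‑learned nuisance $\xi=\xi(\theta)$, whose estimator need not satisfy any entropy constraint. Sample splitting is precisely what makes this tractable — conditioning on $J_1$ turns $\widehat\xi$ into a fixed element of $\Xi_n$, so the residual randomness over $J_2$ is indexed only by $\theta$, and a maximal inequality for the $\theta$‑indexed class $\mathcal{F}_{\xi}$ with the VC‑type entropy of Condition~\ref{ass:concentration:chap2}(2) and the $o(1)$ $L^2$‑radius $r_n'$ of Condition~\ref{ass:concentration:chap2}(3) closes the argument. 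A secondary subtlety, absorbed once and for all by Condition~\ref{ass:concentration:chap2}, is that the nuisance depends on $\theta$ inside $\mathcal{F}_{\xi}$; this is benign when $\xi_0(\theta)$ is affine in $\theta$ as in~\eqref{eq:affine:chap2}, since then $\mathcal{F}_{\xi}$ reduces to finitely many $\theta$‑free nuisance functions.
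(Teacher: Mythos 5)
Your proof follows the same route as the paper: both apply the second inequality of Lemma~\ref{lem:basicineq} to reduce $|Q_n(\theta,\xi)-Q_n(\theta,\xi_0)|$ to the product of a ``difference factor'' and a ``level factor,'' control the empirical-process part of the difference factor via Lemma~\ref{lem:concentrate}, control its drift part via Neyman orthogonality plus the bounded second Gateaux derivative of Condition~\ref{ass:smallbiasvalue}(5) to extract the $s_n$ rate, and then bound the level factor differently over $\Theta$, $\Theta_I$, and $\Theta_I^{\epsilon_n}$. The only presentational difference is that you spell out the second-order Taylor expansion that makes the orthogonality step explicit, whereas the paper cites $s_n$ directly; substantively the two arguments coincide.
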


\begin{assumption}[Limit Distribution of $C_{n}$ ]
\label{ass:c4}
There exists a law ${\cal C}$ such that $$\Pr (C_{n} \leq c )  \rightarrow \Pr ({\cal C} \leq c ) \quad \forall c \in \mathcal{R},$$ where the distribution function of 
${\cal C}$  is continuous on $[0, \infty)$.

\end{assumption}

\begin{assumption}[Approximability of $C_{n}$]
\label{ass:c5}
For any  sequence of random measurable sets $\Theta_n$ such that $d_H(\Theta_n, \Theta_I)  = o_{P} (n^{-1/2})$, the sequence of the suprema of the sample loss over sets $\Theta_n$, $C'_{n} := \sup_{\theta \in \Theta_n} n Q_n(\theta, \xi_0)$ satisfies uniform convergence: $$\sup_{c \in R} | \Pr (C'_{n} \leq c)   - \Pr ({\cal C} \leq c ) | = o(1).   $$
\end{assumption}

\subsection{Proof of Theorems from Main Text}

\begin{proof}[Proof of Theorem \ref{thm:main:ineq:chap2} ]. Assumption \ref{ass:cons}[1] holds:
\begin{align*}
    \inf_{\Theta \setminus \Theta_I^{\epsilon}} Q(\theta,\xi_0) &= \| \Ep g(D_i, \theta, \xi_0) \|_{+}^2 \geq C_{\min}^2 ( \epsilon \wedge \delta_{\min})^2 > 0  \tag{(\ref{ass:partid})}.
\end{align*}
Assumption \ref{ass:cons}[2] holds by Lemma \ref{lem:basicineq} and the  $P$-Donsker property of $g(D_i, \theta,\xi_0(\theta))$:
\begin{align*}
\sup_{\theta \in \Theta_I} n  \| \En g(D_i, \theta, \xi_0) \|_{+}^2 &\leq \sup_{\theta \in \Theta_I}  \| \Gn  g(D_i, \theta, \xi_0) \|_{+}^2 \\
&+ 2\| \Gn  g(D_i, \theta, \xi_0)  \|_{+} \underbrace{\| \sqrt{n} \E g(D_i, \theta, \xi_0) \|_{+} }_{=0\text{ by Definition of $\Theta_I$}}\\
&+ \underbrace{\| \sqrt{n} \E g(D_i, \theta, \xi_0) \|^2_{+} }_{=0\text{ by Definition of $\Theta_I$}}\\
&\Rightarrow^d \| \Delta(\theta ) \|_{+}^2  = O_{P} (1).
\end{align*}
Assumption \ref{ass:cons}[3] holds by Lemma \ref{lem:basicineq} and the  $P$-Donsker property of $g(D_i, \theta,\xi_0)$:
\begin{align*}
\sup_{\theta \in \Theta} \sqrt{n} | \| \Ep g(D_i, \theta, \xi_0) \|_{+}^2 -   \| \En g(D_i, \theta, \xi_0) \|_{+}^2 |_{+}= O_{P} (1).
\end{align*}
Assumption \ref{ass:cons}[4]   holds. Fix an amount of probability $p>0$ and let $n$ be large enough so that minimal eigenvalue $\widehat{W}$ is bounded below w.p.$1-p/4$:
$$\Pr (\mathcal{W}_n) := \Pr (\{\lambda_{\max} \geq \max \eig \quad \widehat{W}(\theta) \geq \min \eig \quad \widehat{W}(\theta) \geq \lambda_{\min} \}) \geq 1-p/4.$$ By Condition \ref{ass:smallbiasvalue}(1) $$\Pr (\mathcal{B}_n)= \Pr (\widehat{\xi}(\cdot) \in \Xi_n) \geq 1-p/4,$$ by Lemma \ref{lem:concentrate}  $$ \Pr ( \mathcal{F}_n):= \Pr (\sup_{\theta \in \Theta} | \Gn g(D_i, \theta, \widehat{\xi}) -  g(D_i, \theta, \xi_0) | \leq  C_{p/4}) \geq 1-p/4,  $$ and  by (\ref{eq:pdonsker}),  $$ \Pr ( \mathcal{G}_n):= \Pr (\| \Gn g(D_i, \theta,\xi_0) \| \leq C_{p/4}) \geq 1-p/4.$$

On the event $\mathcal{B}_n \cap \mathcal{W}_n \cap \mathcal{F}_n \cap \mathcal{G}_n$ for $n$ sufficiently large, $  n Q_n(\theta, \widehat{\xi})$ is bounded below:
\begin{align*}
&n Q_n(\theta, \widehat{\xi}) = \| \left(\Gn g(D_i, \theta, \widehat{\xi}) + \sqrt{n} \Ep  g(D_i, \theta,\widehat{\xi})\right)^\top \widehat{W}(\theta) \|_{+}^2  \\
&\geq \lambda_{\min} \| \Gn g(D_i, \theta, \widehat{\xi}) +  \sqrt{n} \Ep  g(D_i, \theta, \widehat{\xi}) \|_{+}^2\\
&\geq  \lambda_{\min}   \|  \sqrt{n} \Ep  g(D_i, \theta, \xi_0) \|_{+}^2  \bigg( \frac{  \| \Gn g(D_i, \theta, \xi_0) \|}{\|  \sqrt{n} \Ep  g(D_i, \theta, \xi_0) \|_{+}^2 } \\
&+ \frac{\| \Gn [g(D_i, \theta, \widehat{\xi}) - g(D_i, \theta, \xi_0)] + \sqrt{n} \Ep  [g(D_i, \theta, \widehat{\xi})-  g(D_i, \theta, \xi_0)] \|}{\|  \sqrt{n} \Ep  g(D_i, \theta, \xi_0) \|_{+}^2}  \\
&+ \frac{\|\sqrt{n} \Ep  g(D_i, \theta, \xi_0) \| }{ \|  \sqrt{n} \Ep  g(D_i, \theta, \xi_0) \|_{+}^2}   \bigg).
\end{align*}

Consider the set $$d_H(\theta,\Theta_I)\geq 6C_{p/4}/C_{\min} \sqrt{n}$$ and   $n_p$ large enough  so that $$\forall n \geq n_p, \quad  \sqrt{n} s_n< C_{p/4}, \quad r_n' \log (1/r_n') <C_{p/4}.$$ On this set, the following bounds apply:
\begin{align*}
  \| x\|_{+}&:= \inf_{\theta: d_H(\theta,\Theta_I)\geq 6C_{p/4}/C_{\min} \sqrt{n})}\|  \sqrt{n} \Ep  g(D_i, \theta, \xi_0) \|_{+} \geq C_{\min} ( d_H(\theta, \Theta_I) \wedge \delta_{\min}) \geq 6 C_{p/4} , \\
  \| y\| &:= \| \Gn [ g(D_i, \theta, \xi_0)] + \Gn [g(D_i, \theta, \widehat{\xi}) - g(D_i, \theta, \xi_0)] + \sqrt{n} \Ep  [g(D_i, \theta, \widehat{\xi})-  g(D_i, \theta, \xi_0)] \|\\
  &\leq C_{p/4} + 2C_{p/4} +  \leq 3 C_{p/4}.
 \end{align*}
Therefore, $\|x\|_{+} \geq 2\|y\|$. Plugging in $x$ and $y$  into the inequality below:
 \begin{align*}
 \frac{ \| x+y\|_{+} }{\| x\|_{+}}  \geq 1 - \frac{ \|y \|}{\|x \|_{+}}  \geq 1 - \frac{1}{2} =\frac{1}{2},  \\
 ( \frac{ \| x+y\|_{+} }{\| x\|_{+}} )^2 \geq \frac{1}{4}
  \end{align*}
implies that $ \frac{ \| x+y\|_{+} }{\| x\|_{+}}$ is greater than or equal to $\frac{1}{2}$ on the set $\mathcal{B}_n \cap \mathcal{W}_n \cap \mathcal{F}_n \cap \mathcal{G}_n$ for $n$ sufficiently large.
Setting $\kappa = C^2_{\min}/16, \delta = \delta_{\min}, \gamma = 1/2, \kappa_p:= (6C_{p/4}/C_{\min})^2$ and $n$ large enough implies:
\begin{align*}  \inf_{\xi \in \Xi_n} \Pr ( Q_n(\theta,\xi) \geq C^2_{\min}/16 [ d(\theta, \Theta_I) \wedge \delta_{\min}]^{2} \\
 \forall \theta: d_H(\theta,\Theta_I)\geq 6C_{p/4}/C_{\min} \sqrt{n} )\geq 1 - p. \end{align*}
 Therefore, Assumption \ref{ass:cons}(4) holds. Lemma \ref{lem:rateineq} verifies Assumption \ref{ass:noeffect}.
\end{proof}

\begin{proof}[Proof of Lemma \ref{lem:degeneracy}]
Step 1. Let us show that the degeneracy property and the choice $\widehat{c}'$ given in (\ref{eq:minc:deg}) suffice for the rate $O_{P}(n^{-1/2})$. Let  $\widehat{c}$ be as in (\ref{eq:minc}) and  $0 \leq \widehat{c}'  \leq \widehat{c} $ w.p. $\rightarrow 1$. Conditionally on the event $\mathcal{B}_n$ 
for any $\xi \in \Xi_n$ the following inclusion relation holds:
\begin{align*}
\inf_{\xi \in \Xi_n}	\Pr ( \Theta_n \subseteq {\cal C}_n (\widehat{c}', \xi) \subseteq {\cal C}_n(\widehat{c}, \xi)) \geq 1-p.
\end{align*}
By condition (2) of the degeneracy property, $d_H(\Theta_n ,\Theta_I) = O_{P} (n^{-1/2})$. It has been shown in Theorem \ref{thm:main:ineq:chap2} that $d_H( {\cal C}_n(\widehat{c}, \widehat{\xi}) = O_{P} (n^{-1/2})$. Therefore, $ d_H( {\cal C}_n(\widehat{c}', \widehat{\xi}) ,\Theta_I) = O_{P} (n^{-1/2})$ conditionally on $\mathcal{B}_n$. Since $\Pr(\mathcal{B}_n) \rightarrow 1$, the bound holds unconditionally.

Step 2.Suppose the conditions of Lemma \ref{lem:degeneracy} holds. Conditionally on $\mathcal{B}_n$  for  any $\xi \in \Xi_n$ the following bound holds uniformly on $\theta \in \Theta)I$:
\begin{align}
n Q_n(\theta, \xi) &\leq \lambda_{\max} \| \Gn g(W_i,\theta, \xi) + \sqrt{n} \E g(W_i,\theta, \xi)\|_{+}^2 \\
&\leq \lambda_{\max} \sum_{l =1}^L | \Gn g_l(W_i,\theta, \xi) + \sqrt{n} \E g_l(W_i,\theta, \xi) |_{+}^2 \\
&\leq \lambda_{\max} \sum_{l =1}^L  | o(1) + O_{P} (1) + \sqrt{n}  \E g_l(W_i,\theta, \xi_0) |_{+}^2 \\
&\leq \lambda_{\max} \sum_{l =1}^L  | o(1) + O_{P} (1) - \sqrt{n} C (d(\theta, \Theta \subset \Theta_I)  \wedge \delta)  |^2.
\end{align}
Conclude that $Q_n(\theta, \xi) = 0 \quad \forall \Theta_I^{-\epsilon_n}$ with $\epsilon_n :=2\sum_{l =1}^L L  |\En g(W_i,\theta,\xi_0) |/ \frac{1}{C}$ satisfies $\epsilon_n = O_{P} (1/\sqrt{n})$. Fix any $p >0$. Since $\epsilon_n $ does not depend on $\xi$, there exists $R_p$ and $N_p$ such that for all $n \geq N_p$
\begin{align*}
	\Pr ( \sqrt{n} d_H( \Theta_I^{-\epsilon_n}, \Theta_I)  \leq R_p) \geq 1-p.
\end{align*}

\end{proof}

\begin{proof}[Proof of Theorem \ref{thm:main:subs}]
Assumption \ref{ass:noeffect}[2] implies that $\widehat{c} = O_{P}(1)$ satisfies the conditions of Theorem \ref{lem:rate}. Let $\epsilon_n \asymp  (\frac{ \log^2 n }{n} )^{1/2}$ be a numerical sequence.  Fix a particular subsample $j$ and its  corresponding objective $Q_{j,b}$ for that subsample. Define $$\bar{C}_{j,b,\xi} := \sup_{\theta \in \Theta_I^{\epsilon_n}} b Q_{j,b}(\theta, \xi)$$ and $$\underbar{C}_{j,b,\xi} := \inf_{K \in K_n} \sup_{\theta \in K} b Q_{j,b}(\theta, \xi),$$
where $K_n = \{ \Theta_n: d_H(\Theta_n, \Theta) \leq \epsilon_n \}$ is closed set of all fixed(non-random) sets within  Hausdorff distance $ \epsilon_n$ of $\Theta_I$. Since $K_n$ is a closed set, there exists a set $\Theta^*_b$ where the infimum above is achieved:
 $$\underbar{C}_{j,b,\xi} := \inf_{K \in K_n} \sup_{\theta \in K} b Q_{j,b}(\theta, \xi)=  \sup_{\theta \in \Theta^*_b} b Q_{j,b}(\theta, \xi).$$
By Lemma \ref{lem:rate}(c), for any $p>0$ there exists $n^A_{p}$ such that  $\forall n \geq n^A_{p}$ $\widehat{c}<\epsilon_n$ holds with  probability at least $1-p$:
 \begin{align*}
\Pr (\mathcal{A}_{n}):&= \Pr( \widehat{c} < \epsilon_n) \geq 1-p.
\end{align*} Therefore, on the event $\mathcal{A}_{n}$ for any  $\xi \in \Xi_n^{\theta}$ the following inequality holds: $$\underbar{C}_{j,b,\xi }  \leq \widehat{{\cal C}}_{j,b} := \sup_{\theta \in {\cal C}_n(\widehat{c},\xi )}  b Q_{j,b}(\theta, \xi) \leq \bar{C}_{j,b,\xi}.$$
By Assumption \ref{ass:smallbiasvalue}, for $p>0$ there exists $n^B_{p}$ such that  $\forall n \geq n^B_{p}$ $$\Pr (\mathcal{B}_n):=\Pr(\widehat{\xi} \in \Xi_n^{\theta}) \geq 1-p.$$
Define the event 
$$D_{\epsilon, \xi}:=\{ \sup_{\theta \in \Theta_I^{\epsilon_n}} b |Q_{j,b}(\theta, \xi) -Q_{j,b} (\theta, \xi_0)| \leq \epsilon \}$$

By Assumption \ref{ass:fast}, for $p,\epsilon>0$ there exists $n^D_{p, \epsilon}$ such that  $\forall n \geq n^D_{p, \epsilon}$ and any $\xi \in \Xi_n$  
the event $D_{\epsilon, \xi}$ hold with probability at least $1-2p$ (i.e, $\Pr( D_{p,\widehat{\xi}} \cap \mathcal{B}_n ) \geq 1-2p$).

On the event $\mathcal{A}_{n} \cap \mathcal{B}_n \cap D_{p, \widehat{\xi}}$ for $n \geq  n^A_{p} \vee n^B_{p} \vee n^D_{p, \epsilon}$,

\begin{align*}
 \underbar{C}_{j,b,\xi_0} -\epsilon \leq^{i} \underbar{C}_{j,b,\widehat{\xi} }  \leq^{ii}  \widehat{{\cal C}}_{j,b} := \sup_{\theta \in {\cal C}_n(\widehat{c},\widehat{\xi} )}  b Q_{j,b}(\theta, \widehat{\xi}) \leq^{iii} \bar{C}_{j,b,\widehat{\xi}}  \leq^{iv} \bar{C}_{j,b,\xi_0} + \epsilon,
\end{align*}
where $i$ and $iv$ hold by definition of $D_{p, \widehat{\xi}}$ and the property of supremum; $ii$ and $iii$ hold by definition of the event $\mathcal{A}_{n}$ and $\mathcal{B}_n$.
Denote $$\widehat{G}_{\widehat{\xi}} (x):= \frac{1}{B_n} \sum_{j=1}^{B_n} 1_{ \{ \widehat{{\cal C}}_{j,b,\widehat{\xi}}  \leq x\}}, \quad  \bar{G} (x):=\frac{1}{B_n} \sum_{j=1}^{B_n} 1_{ \{ \underbar{C}_{j,b,\xi_0}  \leq x\}}, \quad \underbar{G} (x):=\frac{1}{B_n} \sum_{j=1}^{B_n} 1_{ \{ \bar{C}_{j,b,\xi_0}  \leq x\}}.  $$ Since $$ \widehat{{\cal C}}_{j,b,\widehat{\xi}}  \leq \bar{C}_{j,b,\xi_0} + \epsilon $$ the event $\bar{C}_{j,b,\xi_0} \leq x-\epsilon$ implies  $\widehat{{\cal C}}_{j,b}\leq x.$ Therefore, $ \frac{1}{B_n} \sum_{j=1}^{B_n} 1_{ \{\widehat{{\cal C}}_{j,b,\widehat{\xi}}  \leq x\}} \geq \frac{1}{B_n} \sum_{j=1}^{B_n} 1_{\{ \bar{C}_{j,b,\xi_0} \leq x-\epsilon\}}$, and using the notation above, I get:
$$ \underbar{G}(x- \epsilon) \leq \widehat{G}_{\widehat{\xi}} (x). $$ Similar argument gives: $$ \underbar{G}(x- \epsilon) \leq \widehat{G}_{\widehat{\xi}} (x) \leq \bar{G} (x+\epsilon). $$ Step 2 shows that
continuity of the c.d.f of $ \Pr({\cal C} \leq x)$ (Assumption \ref{ass:c4}) implies that $$ \underbar{G} (x-\epsilon) \rightarrow_p \Pr({\cal C} \leq x-\epsilon) \rightarrow  \Pr({\cal C} \leq x), b \rightarrow \infty $$ and $$ \bar{G} (x+\epsilon) \rightarrow_p \Pr({\cal C} \leq x+\epsilon) \rightarrow  \Pr({\cal C} \leq x), b \rightarrow \infty . $$

Step 2. Consider the  function $$h(W_1,W_2,..,W_n):= \frac{1}{B_n} \sum_{j=1}^{B_n} 1_{ \{ \bar{C}_{j,b,\xi_0}  \leq x - \epsilon\}}. $$ Let us show that $h(\cdot)$ has bounded differences: replacement of observation $W_k$ by $W_k'$ changes at most one subsample $j$ and results in at most a  $\frac{1}{B_n}$ change in the function itself.  $$ | h(W_1,W_k,..,W_n) - h(W_1,W_k',..,W_n) | \leq \frac{1}{B_n}.$$
McDiarmid's inequality implies:
\begin{align*}
\sup_{\epsilon \in \mathcal{R}} \Pr (| \underbar{G} (x-\epsilon)  &- \Ep \underbar{G} (x-\epsilon)  | > t ) \leq 2 \exp^{-2t^2 B_n^2/ n  } := \delta \\
 \underbar{G} (x-\epsilon) &= \Ep  \underbar{G} (x- \epsilon)    +  O_{P} ( (\log \frac{2}{\delta})^{1/2} \frac{\sqrt{n}}{B_n} )
\end{align*}
uniformly over $\epsilon>0$. Since the subsamples $j_1\neq j_2$ are i.i.d, $$   \Ep  \underbar{G} (x-\epsilon)   = \Pr( \bar{C}_{j,b,\xi_0}  \leq x-\epsilon) .$$ 
Step 3. By Assumption \ref{ass:c5},
\begin{align}
  &| \Pr( \bar{C}_{j,b,\xi_0}  \leq x-\epsilon) -  \Pr({\cal C} \leq x)| \leq  \sup_{x \in \mathcal{R}}  | \Pr( \bar{C}_{j,b,\xi_0}  \leq x ) -  \Pr({\cal C} \leq x)| \tag{ Assumption \ref{ass:c5}}  \\
  &+  | \Pr({\cal C} \leq x-\epsilon) - \Pr({\cal C} \leq x)| \tag{Continuity of $C_{\nu_0}$} \\
  &= o(1)
  \end{align}
 as $n \rightarrow \infty, b \rightarrow \infty.$ Since the bound above holds for any $\epsilon>0$, the statement is proved.
To conclude, I have shown that $\bar{G} (x+\epsilon) = P({\cal C} \leq c) + o_{P} (1)$. Similarly, I can show that $\underbar{G} (x-\epsilon) = P({\cal C} \leq c) + o_{P} (1)$. Since $P({\cal C} \leq c)$ is continuous in $c$, the $\alpha$- quantile of $\widehat{G}_{\widehat{\xi}} (x)$ converges to the $\alpha$-quantile of ${\cal C}$.
\end{proof}

\subsection{Proof of Auxiliary Lemmas}
\label{subsec:proof:aux}
\begin{proof}[Proof of Lemma \ref{lem:rate}]
The proof relies on the following basic inequalities. For any set $\Theta$ and two functions $p(\theta), q(\theta)$ the following holds: \begin{align}
\label{eq:basic}
\sup_{\theta \in \Theta} p(\theta) \leq \sup_{\theta \in \Theta} q(\theta) + \sup_{\theta \in \Theta} (p(\theta) - q(\theta))_{+} \\
\inf_{\theta \in \Theta} p(\theta) \geq \inf_{\theta \in \Theta} q(\theta) - \sup_{\theta \in \Theta} (q(\theta) - p(\theta))_{+}.
\label{eq:basicinf}
 \end{align}
Consider a sequence of events  $ \mathcal{B}_{n} := \{ \widehat{\xi}(\cdot) \in \Xi_n \}$ whose probability approaches one: $$\Pr ( \mathcal{B}_{n}) = 1-o(1).$$ 
Step 1.
On an event $ \mathcal{B}_{n}$ $$\Pr(\sup_{\theta \in \Theta_I} n Q_n (\theta, \widehat{\xi}) \leq \widehat{c} | \mathcal{B}_{n}) \geq \inf_{\xi \in \Xi_n} \Pr(\sup_{\theta \in \Theta_I} n Q_n (\theta, \xi) \leq \widehat{c} ) = 1-o(1),$$ which implies
\begin{align*}  \Pr(\Theta_I \subset \widehat{\Theta}_I | \mathcal{B}_{n}) &= 1-o(1).
 \end{align*}
Since $\Pr(\mathcal{B}_n) = 1-o(1)$, with probability approaching one $\Pr(\Theta_I \subset \widehat{\Theta}_I)$ holds.

Step 2. Proof of convergence without guaranteed rate.  Fix  an $\epsilon>0$. Let us show that  $$\Pr (d_H(\widehat{\Theta}_I, \Theta_I) \leq \epsilon ) \rightarrow 1.$$ By Assumption \ref{ass:cons} (Equation \eqref{eq:cons1}), there exists $\delta(\epsilon)$ such that $$ \inf_{\Theta \setminus {\Theta_I}^{\epsilon}} Q(\theta, \xi_0) \geq \delta(\epsilon)>0 .$$ To see that $\Pr(\widehat{\Theta}_I \subset {\Theta_I}^{\epsilon}) = 1-o(1)$, recognize that on the event $\mathcal{B}_{n}$ conditionally on the subsample $J_1$:
\begin{align*}
	&\sup_{\widehat{\Theta}_I} Q (\theta, \xi_0) \leq  \sup_{\widehat{\Theta}_I} Q_n (\theta, \xi_0) + \sup_{\widehat{\Theta}_I} (Q (\theta, \xi_0) -Q_n (\theta, \xi_0)  )_{+} \tag{Eq. \eqref{eq:basic}} \\
	&\leq  \sup_{\widehat{\Theta}_I} Q_n (\theta, \xi_0) + O_{P} (1/\sqrt{n})  \tag{Eq. \eqref{eq:cons3}}  \\
	&\leq  \sup_{\widehat{\Theta}_I} Q_n (\theta, \widehat{\xi}(\theta) ) +   \sup_{\widehat{\Theta}_I} (Q_n (\theta, \xi_0) -Q_n (\theta, \widehat{\xi})  )_{+}    + O_{P} (1/\sqrt{n})\tag{Eq. \eqref{eq:basic}}  \\
	&\leq \widehat{c} / n   + O_{P} (1/\sqrt{n}+1/\sqrt{n})  = o_{P} (1) \tag{Assumption \ref{ass:noeffect} and the choice of $\widehat{c}$ }
\end{align*}
Since $ \sup_{\widehat{\Theta}_I} Q (\theta, \xi_0) < \frac{\epsilon}{2}  \Rightarrow \widehat{\Theta}_I \subset {\Theta_I}^{\epsilon}$, $\Pr(\widehat{\Theta}_I \subset {\Theta_I}^{\epsilon}) = 1-o(1)$.

Step 3. Proof of convergence at rate $\epsilon_n:= (\frac{d_{p/3} \kappa \vee \widehat{c}}{n \kappa})^{1/2}.$ Fix a  probability level $p \in (0,1)$ and let the constants $\kappa,\delta$ be  as specified in Assumption \ref{ass:cons}(Equation \eqref{eq:cons4}).
I have to show:
\begin{align*}
\forall p>0 \quad \exists d_p, n_p: \quad \forall n \geq n_p  \quad \Pr ( d_H( \widehat{\Theta}_I, \Theta_I) \leq \epsilon_n  ) \geq 1-p.
\end{align*}
Since $\widehat{c}/n \rightarrow_p 0$, there exists $n^{A}_{p/3}$ such that for a sufficiently large  $n: n \geq n^{A}_{p/3}$,
$$\Pr(\mathcal{A}_n):=\Pr( \widehat{c}/(n \kappa) < (\delta/2)^{2}) = 1-p/3. $$ By the definition of $\widehat{\xi}(\theta)$, for a sufficiently large  $n: n \geq n^{B}_{p/3}$,
$$\Pr( \mathcal{B}_{n} ) := \Pr (\widehat{\xi}(\cdot) \in \Xi_n) \geq 1-p/3.$$
Define a set $$\mathcal{D}_{\widehat{\xi}} :=\{\inf_{\theta: d_H(\theta,\Theta_I) \geq \epsilon_n }  n Q_n (\theta, \widehat{\xi}) \geq n \kappa (\epsilon_n \wedge \delta)^{2} \}$$
Since $\epsilon_n \geq (d_{p/3}/n)^{1/2}$ holds absolutely surely by the choice of $\epsilon_n$, Assumption \ref{ass:cons}[4] implies that for $n \geq n_{p/3}$:
$$\Pr( \mathcal{D}_{\widehat{\xi}} |  \mathcal{B}_{n} ) \geq \inf_{\xi \in \Xi_n} \Pr (\inf_{\theta: d_H(\theta,\Theta_I) \geq \epsilon_n }  n Q_n (\theta, \xi) \geq n \kappa (\epsilon_n \wedge \delta)^{2}) \geq 1-p/3 .$$
Therefore, for a sufficiently large  $n: n \geq n^{A}_{p/3} $,
\begin{align}
\label{eq:2}
 \inf_{\theta: d_H(\theta,\Theta_I) \geq \epsilon_n }  n Q_n (\theta, \widehat{\xi}) &\geq n \kappa (\epsilon_n \wedge \delta)^{2} \Rightarrow \\
 \inf_{\theta: d_H(\theta,\Theta_I) \geq \epsilon_n }  n Q_n (\theta, \widehat{\xi}) &\geq n  \kappa \epsilon_n^{2} =  d_p \kappa \vee \widehat{c} \geq  \widehat{c} \nonumber.
 \end{align}
 Therefore, for a sufficiently large $n: n \geq n_{p/3} \vee n^{A}_{p/3} \vee n^{B}_{p/3} $,
\begin{align} \label{eq:1} \Pr (  \inf_{\theta: d_H(\theta,\Theta_I) \geq \epsilon_n }  n Q_n (\theta, \widehat{\xi})  \geq \kappa (\epsilon_n \wedge \delta)^{2}  |\mathcal{B}_n) \\
 \geq \inf_{\xi \in \Xi_n} \Pr (  \inf_{\theta: d_H(\theta,\Theta_I) \geq \epsilon_n }  n Q_n (\theta, \xi)  \geq \kappa (\epsilon_n \wedge \delta)^{2}  ) \nonumber.  \end{align} Combining \eqref{eq:1} and  \eqref{eq:2} gives

\begin{align*}
\Pr (  \inf_{\theta: d_H(\theta,\Theta_I) \geq \epsilon_n }  n Q_n (\theta, \widehat{\xi})  \geq  \kappa (\epsilon_n \wedge \delta)^{2} \cap   \mathcal{B}_n \cap \mathcal{A}_n   )   \\
\geq 1- \Pr ( (\mathcal{B}_n)^c) + \Pr ( (\mathcal{A}_n)^c) + \Pr ( (\mathcal{D}_{\widehat{\xi}})^c)\\
\geq 1-(p/3+p/3+p/3).
  \end{align*}
Since $\Pr ( \sup_{\widehat{\Theta}_I}  n Q_n (\theta, \widehat{\xi})  \leq \widehat{c} ) = 1-o(1)$, this implies that $\Pr(\widehat{\Theta}_I \subset \Theta^{\epsilon_n}_I) = 1-o(1).$ By Step 1, I conclude that  $d_H(\widehat{\Theta}_I, \Theta_I)  = O_{P}(\epsilon_n)$.

 Step 4. That $\widehat{c} = O_{P}(1)$ satisfies the conditions of Lemma \ref{lem:rate}, follows from
\begin{align*}
\sup_{\theta \in \Theta_I} &n Q_n(\theta, \xi) \leq \sup_{\theta \in \Theta_I} n Q_n(\theta, \xi_0) + \sup_{\theta \in \Theta_I} n |Q_n(\theta, \xi) - Q_n(\theta, \xi_0) |  \\
&\leq O_{P}(1) + \sup_{\theta \in \Theta_I} n |Q_n(\theta, \xi) - Q_n(\theta, \xi_0) | \tag{Assumption \ref{ass:cons}[2]} \\
&\leq O_{P}(1) + o_{P}(1) \tag{Assumption \ref{ass:noeffect}[2]}
\end{align*}
for any $\xi \in \Xi_n$.

\end{proof}

\begin{proof}[Proof of Lemma \ref{lem:concentrate}]
Conditionally on the auxiliary sample, $\widehat{\xi}(\cdot)$ can be treated as fixed, and w.p. approaching one, $\widehat{\xi}(\cdot) \in \Xi_n$. Consider the function class: $$\mathcal{F}_2 = \{ g_l(D_i, \theta, \widehat{\xi}) -  g_l(D_i, \theta,\xi_0), l=1,2,..,L, \theta \in \Theta \} \subset  \mathcal{F}_{ \widehat{\xi}} -  \mathcal{F}_{\xi_0}.  $$ Let a function $F_2 := F_{ \widehat{\xi}} + F_{\xi_0} $ be the envelope function for the class $\mathcal{F}_2 $. This function satisfies the envelope requirements of Lemma 6.2   of \cite{doubleml2016}  since  $\|F_2 \|_{P,q} \leq \|F_{\widehat{\xi}} \|_{P,q} + \| F_{\xi_0} \|_{P,q} \leq 2 C_1$ and $$ \log \sup_{\tilde{Q}} N( \epsilon \| F_2 \|_{Q,2} , \mathcal{F}_2, \| \cdot \|_{\tilde{Q},2} ) \leq 2v \log (2a /\epsilon). $$ On the event $\mathcal{B}_n$,  \begin{align*}  \sup_{\theta \in \Theta}  \Ep [g_j(D_i, \theta, \widehat{\xi}) -  g_j(D_i, \theta,\xi_0)]^2 &\leq \sup_{\theta  \in \Theta, \xi \in \Xi_n} \Ep \|  [g_j(D_i, \theta, \xi(\theta)) -  g_j(D_i, \theta,\xi_0)] \|^2\\
&\leq (r_n')^2 \end{align*}
The application of Lemma 6.2  of \cite{doubleml2016} conditionally on the auxiliary sample, with the function class $\mathcal{F}_2$, envelope $F_2$, and $\sigma^2:=\sqrt{n}'^2$ yields:
\begin{align}
\sup_{f \in \mathcal{F}} | \Gn  [g(D_i, \theta, \widehat{\xi} ) - g(D_i, \theta, \xi_0) ]|  \nonumber \\
\lesssim_{P} r_n' \log^{1/2} (1/r_n') + n^{-1/2+1/s} \log n = o_{P} (1) \label{eq:moment}
\end{align}
\end{proof}

\begin{proof}[Proof of Lemma \ref{lem:rateineq}] 
 
Let the events $\mathcal{W}_n, \mathcal{B}_n, \mathcal{F}_n, \mathcal{G}_n$ be as defined at in the Proof of Theorem \ref{thm:main:ineq:chap2}.
The application of Lemma \ref{lem:basicineq} on the event $\mathcal{W}_n \cap \mathcal{B}_n \cap  \mathcal{F}_n \cap \mathcal{G}_n$ gives:
\begin{align*}
| Q_n(\theta, \xi) -Q_n(\theta, \xi_0)| &= | \|\underbrace{ \En  g(D_i, \theta, \widehat{\xi})\widehat{W}(\theta)}_{x} \|_{+}^2  - \| \underbrace{\En  g(D_i, \theta, \xi_0)\widehat{W}(\theta)}_{y} \|_{+}^2  | \\
&\leq \|x-y \|_{+}^2 + 2\|x-y\|_{+} \|y\|_{+}.
\end{align*}
The terms $\|x-y \|_{+}$ and $\|y\|_{+}$ admit the following bound:
\begin{align*}
    \|x-y \|_{+} &= \| \En [ g(D_i, \theta, \xi) - g(D_i, \theta, \xi_0)] \widehat{W}(\theta) \|\\
    &\leq \lambda_{\max} \| \Ep  [ g(D_i, \theta, \xi) - g(D_i, \theta, \xi_0)] + \Gn [ g(D_i, \theta, \xi) - g(D_i, \theta, \xi_0)]/\sqrt{n}   \|\\
    &\leq \lambda_{\max} (s_n + (r_n' \log (1/r_n')  + n^{-1/2+1/s}) ), \\
    \|y\|_{+} &= \| \Gn  g(D_i, \theta, \xi_0)/\sqrt{n} +  \Ep g(D_i, \xi_0)\|_{+}.
\end{align*}
Therefore, Assumption \ref{ass:noeffect}(a) holds, and $$r_p = (2\lambda_{\max}C_{p/4})^2 + 2\lambda_{\max}C_{p/4}2(  C_{p/4}+\sup_{\theta \in \Theta} \| \Ep  g(D_i, \theta, \xi_0)  \| )$$ satisfies
\begin{align*}
\inf_{\xi \in \Xi_n} \Pr ( \sqrt{n} \sup_{\Theta}   |Q_n(\theta, \xi) -Q_n(\theta, \xi_0) | \leq r_p) \geq 1-p .
\end{align*}
and for $n$ large enough $\sup_{\theta \in \Theta_I} \| C_{p/4} +\sqrt{n} \Ep g(D_i,\theta,\xi_0) \|_{+} = 0 $, $r'_p = (2\lambda_{\max}C_{p/4/})^2 $
\begin{align*}
\inf_{\xi \in \Xi_n} \Pr ( n \sup_{\Theta_I}   |Q_n(\theta, \xi) -Q_n(\theta, \xi_0) | \leq r_p') \geq 1-p .
\end{align*}
If Assumption \ref{ass:dominance}(2) holds,
\begin{align*}
    \sup_{\Theta_I^{\epsilon}}   \quad &n |Q_n(\theta, \xi) -Q_n(\theta, \xi_0) | \leq n \|x-y \|_{+}^2 + 2n\|x-y\|_{+} \|y\|_{+}\\
    &\sqrt{n} \|x-y \|_{+} = \| \sqrt{n} \En [ g(D_i, \theta, \xi) - g(D_i, \theta, \xi_0)] \widehat{W}(\theta) \|\\
    &\leq \lambda_{\max} \| \sqrt{n}\Ep  [ g(D_i, \theta, \xi) - g(D_i, \theta, \xi_0)] + \Gn [ g(D_i, \theta, \xi) - g(D_i, \theta, \xi_0)]   \|\\
    &\leq \lambda_{\max} (\sqrt{n} s_n + r_n'\log(1/r_n') + n^{-1/2+1/s}) =o(1). \\
    \sqrt{n}\|y\|_{+} &= \| \Gn  g(D_i, \theta, \xi_0)+  \sqrt{n} \Ep g(D_i, \xi_0) \|_{+}  \\
    \sup_{\theta \in \Theta_I^{\epsilon_n}} \sqrt{n}\|y\|_{+} &\leq C_{p/4} + \sqrt{n} \epsilon_n \leq C_{p/4} + O_{P} (1).
\end{align*}
Therefore, for any $\epsilon,p>0$ for $n$ sufficiently large
\begin{align*}
\inf_{\xi \in \Xi_n} \Pr (\sup_{\Theta_I^{\epsilon_n}}   n |Q_n(\theta, \xi) -Q_n(\theta, \xi_0) | \leq \epsilon) \geq 1-p.
\end{align*}

\end{proof}

\bibliographystyle{apalike}
\bibliography{my_bibtex}
\end{document}